%% file: cas-sc-template.tex
\documentclass[a4paper,fleqn]{cas-sc}
\ExplSyntaxOn
\cs_gset:Npn \__first_footerline: { \mbox{} }
\ExplSyntaxOff

\usepackage[authoryear,longnamesfirst]{natbib}

\def\tsc#1{\csdef{#1}{\textsc{\lowercase{#1}}\xspace}}
\tsc{WGM}
\tsc{QE}
\input{preamble}

\begin{document}
\let\WriteBookmarks\relax
\def\floatpagepagefraction{1}
\def\textpagefraction{.001}

% Short title
\shorttitle{}    

% Short author
\shortauthors{}  

% Main title of the paper
\title [mode = title]{Model Predictive Control for Dynamic Hydraulic Balancing in Building Radiator Heating Systems: Modeling, Design, and Experimental Validation}

%Experimental Validation of Hydraulic-Balancing-Aware Model Predictive Control for Radiator Systems in Buildings}  

% Title footnote mark
% eg: \tnotemark[1]
%\tnotemark[1] 

% Title footnote 1.
% eg: \tnotetext[1]{Title footnote text}
%\tnotetext[1]{} 

% First author
%
% Options: Use if required
% eg: \author[1,3]{Author Name}[type=editor,
%       style=chinese,
%       auid=000,
%       bioid=1,
%       prefix=Sir,
%       orcid=0000-0000-0000-0000,
%       facebook=<facebook id>,
%       twitter=<twitter id>,
%       linkedin=<linkedin id>,
%       gplus=<gplus id>]

\author[1]{Erfan Shakhesi}%[<options>]

% Corresponding author indication
\cormark[1]

% Corresponding author text
\cortext[1]{Corresponding author}

% Footnote of the first author
%\fnmark[1]

% Email id of the first author
\ead{e.shakhesi@tue.nl}

% URL of the first author
%\ead[url]{}

% Credit authorship
% eg: \credit{Conceptualization of this study, Methodology, Software}
%\credit{Conceptualization, Formal analysis, Investigation, Methodology, Software, Validation, Writing -- Original Draft}
% Address/affiliation
\affiliation[1]{organization={Eindhoven University of Technology},
            addressline={Department of Mechanical Engineering}, 
            city={Eindhoven},
%          citysep={}, % Uncomment if no comma needed between city and postcode
            postcode={5600 MB}, 
            %state={North Brabant},
            country={The Netherlands}}

%\author[1]{Alexander Katriniok}%[]

% Footnote of the second author
%\fnmark[1]

% Email id of the second author
%\ead{a.katriniok@tue.nl}

% URL of the second author
%\ead[url]{}

% Credit authorship
%\credit{}
\author[1]{Maurice Heemels}%[]
%\fnmark[1]
\ead{m.heemels@tue.nl}
%\ead[url]{}
%\credit{Conceptualization, Formal analysis, Investigation, Methodology, Project administration, Validation, Writing -- Review \& Editing}

\author[1]{Paula Chanfreut}%[]
%\fnmark[1]
\ead{p.chanfreut.palacio@tue.nl}
%\ead[url]{}
%\credit{Conceptualization, Formal analysis, Investigation, Methodology, Funding acquisition, Project administration, Validation, Writing -- Review \& Editing}

%\fntext[1]{}

% For a title note without a number/mark
%\nonumnote{}

% Here goes the abstract
\begin{abstract}
Hydronic radiator systems are among the most widely used heating systems in buildings. In the literature, radiator heat outputs are often assumed to be independent of one another and arbitrarily adjustable in time when designing controllers. However, in practice, radiators are supplied by one or multiple common heat sources and are hydraulically coupled through the water circulation system.
Maintaining occupant comfort in multi-zone buildings therefore requires not only an appropriate supply temperature but also proper hydraulic balancing to distribute the available water flow according to the heating demand of each zone. 
To this end, we develop a grey-box thermal model that captures the hydraulic interactions among radiators and the effects of radiator valves, circulation pumps, and heat sources. In a real building, we show that accounting for hydraulic interactions reduces the root mean square error (RMSE) between the measured and modeled zone temperatures by approximately $11\%$ compared with a model that neglects these interactions. Additionally, we integrate the developed model into a model predictive control (MPC) framework for dynamic hydraulic balancing that jointly optimizes valve openings and the supply temperature to maintain thermal comfort in each zone while reducing energy consumption.
Through both real-world experiments and numerical case studies, we demonstrate that, compared with existing MPC formulations that neglect hydraulic interactions or do not control radiator valves, the proposed MPC reduces comfort-range violations by at least $27\%$ while requiring a similar or even lower cumulative supply temperature.
\end{abstract}

% Research highlights
%\begin{highlights}
%\item Proposed grey-box model captures hydraulic–thermal coupling in hydronic radiators.
%\item Accounting for radiator hydraulic interactions reduces zone-temperature RMSE by 11\%.
%\item Proposed Model Predictive Control (MPC) enables dynamic hydraulic balancing.
%\item Proposed MPC reduces comfort violations by 27\% while maintaining low energy use.
%\item Real-world experiments validate the proposed MPC in a six-zone building.
%\end{highlights}

% Keywords
% Each keyword is seperated by \sep
\begin{keywords}
 Hydronic Radiators \sep Hydraulic Balancing \sep Radiator Valves \sep Model Predictive Control \sep Thermal Comfort \sep Experimental Validation
\end{keywords}

\maketitle

% Main text
\section{Introduction}\label{}
\input{s1}
% Numbered list
% Use the style of numbering in square brackets.
% If nothing is used, default style will be taken.
%\begin{enumerate}[a)]
%\item 
%\item 
%\item 
%\end{enumerate}  

% Unnumbered list
%\begin{itemize}
%\item 
%\item 
%\item 
%\end{itemize}  

% Description list
%\begin{description}
%\item[]
%\item[] 
%\item[] 
%\end{description}  

% Uncomment and use as the case may be
%\begin{theorem} 
%\end{theorem}

% Uncomment and use as the case may be
%\begin{lemma} 
%\end{lemma}

%% The Appendices part is started with the command \appendix;
%% appendix sections are then done as normal sections
%% \appendix

\section{Building Thermal--Hydraulic Model}\label{sec:model}

\input{s2}

\section{Grey-Box Model Identification}\label{sec:grey-box}
\input{s3_new}

\input{s3}

\section{Model Predictive Control}\label{sec:mpc}
\input{s4}

\section{Conclusion and Future Work}
\input{s5}

%\printcredits

\section*{Acknowledgements}
The authors would like to thank Shalika Walker from Kropman, Christian Portilla from OpenToControl, and Hugo Theunissen and Ivo van Wolferen from Rensen for their valuable feedback and suggestions, as well as their assistance in conducting the experiments. This research was supported by the Netherlands Enterprise Agency [project number MOOI224004], and was partially supported by the Dutch Research Council (NWO) under the VENI TTW Grant 21957, ``Dynamic Home Energy Management: Optimizing for a Sustainable Future (DHEMOS)'' [grant ID \url{https://doi.org/10.61686/MVDUJ87852}].

%\section*{Declaration of competing interest}
%The authors declare that they have no known competing financial interests or personal relationships that could have appeared to influence the work reported in this paper.

% To print the credit authorship contribution details

\section*{Declaration of generative AI and AI-assisted technologies}
The authors used ChatGPT (GPT-5.6 Sol) for the graphical representation of the radiator metal and water in Figure~\ref{fig:water_rad}, and to add graphical details to Figure~\ref{fig:solar_window} to improve its presentation. These figures are used only for explanatory purposes.
%% Loading bibliography style file
%\bibliographystyle{model1-num-names}
\bibliographystyle{cas-model2-names}

%\newpage
% Loading bibliography database
\bibliography{cas-refs}

% Biography
%\bio{}
% Here goes the biography details.
%\endbio

%\bio{pic1}
% Here goes the biography details.
%\endbio

%\appendix
%\section{A1}
%\input{Sections/Appendix/A1}

\end{document}

%% file: preamble.tex
\usepackage{amsthm}
\usepackage{graphicx}
\usepackage{amsmath}
\usepackage{algorithm}
\usepackage{algpseudocode}

\usepackage{amssymb}
\usepackage{siunitx}
\usepackage{float}

\usepackage{booktabs}
\usepackage{tabularx}
\usepackage{siunitx}
\usepackage{mathtools}
\usepackage{bbm}
\usepackage[bb=boondox]{mathalfa}
\usepackage{subcaption}
\DeclareSIUnit\bar{bar}
\newtheorem{remark}{Remark}

\newtheorem{theorem}{Theorem}

\usepackage[table]{xcolor}
\newcommand{\rowline}{\arrayrulecolor{gray!50}\specialrule{0.15pt}{0pt}{0pt}\arrayrulecolor{black}}
\usepackage{makecell}
\usepackage{array}
\newcolumntype{M}[1]{>{\raggedright\arraybackslash}m{#1}}

%% file: s1.tex
Buildings are increasingly required to provide thermal comfort while simultaneously reducing their environmental footprint, including energy consumption. In many buildings, space heating is provided by hydronic (water-based) heat emitters, and radiators are the most common type. Across the European Union (EU), hydronic heat emitters are installed in nearly 130 million buildings \citep{EHI2021HeatingMarket}. This is further supported by statistics from the United Kingdom, where approximately $90\%$ of dwellings use hydronic radiators \citep{DLUHC2023EnglishHousingSurvey}. 

Hydronic radiators are supplied with hot water from one or more central heat sources, such as boilers and heat pumps, and as the water circulates through them, heat is transferred to the surrounding air, thereby heating the space. Most radiators are equipped with valves that regulate their water flow rate based on the difference between the zone temperature and its temperature setpoint. This setpoint can be selected manually by occupants at the radiator, as in thermostatic radiator valves (TRVs) \citep{TRV_simulation}, or adjusted remotely by a controller.
Although radiator valves provide basic zone-level thermal comfort, they operate locally and do not coordinate with other heating system components. In addition, adjusting valve positions alone is not sufficient for improving thermal comfort and energy efficiency, and optimizing the water supply temperature, often shared across multiple zones, is essential. These limitations of local valve-based control motivate the use of more integrated control systems.

One promising integrated control strategy is model predictive control (MPC), which can also anticipate external factors such as changes in outdoor temperature and occupancy patterns, providing an additional advantage for improving thermal comfort and energy efficiency \citep{survey_MPC2025, survey_MPC2020, Castilla2014}. In particular, MPC determines the values of selected control inputs at the current time by optimizing an objective function subject to predefined constraints, based on predictions of the future behavior of the system over a specified prediction horizon, for example, the next two hours, using forecasts of exogenous inputs such as the outdoor temperature \citep{camacho2026mpc}. In the literature, existing MPC schemes for building radiator heating systems often compute high-level control inputs, such as the required heat output of each zone \citep{BRCM_toolbox, EMPC_radiator}, which must then be translated into implementable control actions through an additional calculation, for example, by converting the required heat output into a pulse-width modulation (PWM) signal for the radiator valves. Other schemes use a cascade framework \citep{PEDERSEN2019772, LIU2025112936, KNUDSEN2024110694, SIROKY20113079}, in which the MPC computes high-level setpoints, such as the water supply temperature or zone-temperature references, while low-level proportional–integral (PI) or proportional–integral–derivative (PID) controllers associated with the valves or boilers track these setpoints. In these works, however, the dynamics of the low-level controllers are not explicitly represented in the MPC prediction model.

Both classes of MPC schemes have two main drawbacks. First, a mismatch may arise between the high-level decisions computed by the MPC and their implementation by the low-level controllers. Second, these schemes are often formulated as if each radiator operated independently, thereby neglecting the hydraulic interactions among the radiators. In practice, however, the radiators draw hot water from one or more shared heat sources through pumps, meaning that adjusting one valve can affect the flow distribution and heat output of the others. For example, a fully opened valve in one zone may reduce the water flow and heating capacity available to radiators in other zones \citep{en12173215}. This can lead to situations in which satisfying the comfort requirements of one zone causes comfort violations in another, which is inherently a multi-zone interaction. In the literature, the importance of modeling hydraulic interactions in
building simulations has been discussed in \cite{TRV_simulation,
hydraulic_thermal_sim, en12173215}. However, these works either consider no control strategy or rely only on simple control approaches such as rule-based control. Even with rule-based control, \cite{hydraulic_thermal_sim} reported energy savings of $8\%-13\%$ through terminal radiator flow-rate control using a thermal-hydraulic model that accounts for hydraulic interactions. To achieve further energy savings and/or improve thermal comfort, more advanced control strategies, such as MPC, can be used. In \cite{GUO2024123951}, it is demonstrated that hydraulic imbalance can substantially degrade MPC performance. However, hydraulic balancing in this work is handled by a separate control scheme rather than being integrated into the MPC formulation, and the valve positions are therefore not dynamically optimized by the MPC.
This motivates the development of a modeling framework that captures the coupled thermal--hydraulic dynamics while remaining suitable for the real-world MPC implementation. In particular, when radiator valves are remotely controllable, this modeling approach enables the MPC to accurately adjust the valve positions to improve the heat distribution across zones and, consequently, improving thermal comfort while reducing energy use.

To model building thermal dynamics, different approaches have been used in the literature. These include models developed entirely from physical principles, referred to as \textit{white-box models}, in which building thermal dynamics are typically represented using linear resistor--capacitor circuits \citep{BRCM_toolbox}. Moreover, \textit{grey-box models} are used in \cite{EMPC_radiator, MPC_public_building, thilker2021nonlinear, TRV_vs_MPC, WANG2024130812, 01KGF3XPRTF4S7MGWJ6WT6G01J}, where physical knowledge is used to define the model structure and measurement data are used to estimate the model parameters. On the other hand, \textit{black-box} models assume no prior knowledge of the system dynamics and are used, for example, in \cite{KNUDSEN2021117227}. These models are then integrated into MPC frameworks to achieve predefined objectives. 

Regarding existing MPC frameworks with experimental validation, the survey in \cite{survey_MPC2025} identifies 90 studies on MPC for heating, ventilation, and air-conditioning (HVAC) systems, of which six consider hydronic radiator-based heating systems \citep{EMPC_radiator, PARISIO2014599, 7040202, 7087366, STAUFFER2017127, 5299104}. In all of these works, the MPC decision variables are high-level control inputs, such as the heat delivered to a zone. Moreover, none of these works optimizes the distribution of water flow among targeted zones or discusses the effects of dynamic hydraulic balancing on thermal comfort or energy savings. More specifically, with the exception of \cite{7087366}, these works consider a single aggregated temperature for the corresponding case-study building or each floor, offering a useful simplification of the control problem. However, comfort violations in smaller zones may remain hidden at the aggregated scale. 

Motivated by the control challenges discussed above, this article considers multi-zone buildings with radiators equipped with PI(D)-controlled valves and supplied by a common boiler. To the best of our knowledge, no MPC formulation that explicitly accounts for hydraulic interactions among the radiators while jointly optimizing the references of the valve PI(D) controllers and the supply temperature has been experimentally validated in a real multi-zone building. To fill this gap in the literature, we present four main contributions. First, we present a physics-based thermal–hydraulic building model, in which the radiator heat output is explicitly modeled as a function of the water flow rate and the temperature difference across each radiator, accounting for radiator valves, water pumps, and boilers (Section~\ref{sec:model}). Second, we adopt two distinct grey-box modeling approaches, using real data to adjust the parameters that are difficult to accurately obtain in practice. Using real measurements, we also show that accounting for hydraulic interactions among the radiators reduces the root mean square error (RMSE) between the measured and modeled zone temperatures by 11\% (Section~\ref{sec:grey-box}). Third, we propose an MPC strategy for dynamic hydraulic balancing that actively optimizes the radiator valve positions to modify the hydraulic resistance of each radiator branch and thereby distribute flow according to the heating demand of each zone. The objective is to provide occupant comfort in each zone while simultaneously reducing the water supply temperature, which is related to the building energy consumption. Fourth, through numerical case studies and experiments in a real six-zone building with 10 radiators, we show that the proposed MPC for dynamic hydraulic balancing significantly outperforms MPC formulations that either cannot control the radiator valves or neglect the hydraulic interactions among the radiators. In particular, it reduces comfort violations by at least $27\%$ while requiring the same or even lower cumulative water supply temperature (Section~\ref{sec:mpc}).

%% file: s2.tex
\input{symbols}

This section presents the thermal--hydraulic model structure of a building heated by hydronic radiators with valves. The focus of this section is on the physical model formulation, and the identification/adjustments of the model parameters from data is discussed later in Section~\ref{sec:grey-box}.
\subsection{Building}
Each zone in a building is modeled using multiple thermal states: one representing the zone air temperature and one representing the temperature at the mid-plane of each wall, including the ceiling and floor, enclosing the zone. Similar to \cite{BRCM_toolbox}, the air temperature $T_{\text{z}}^{(i)}$ of zone $i$, ~\mbox{$i \in \{1,2,\hdots, n_{\mathrm{z}}\}$}, is described by
\begin{align}
    C_{\text{z}}^{(i)} \frac{dT_{\text{z}}^{(i)}}{dt} &=  \sum_{j \in \mathcal{N}_{\mathrm{w}}^{(i)}}\frac{A_{\text{w}}^{(j)}}{R_{\text{w}}^{(j)}}\left(T_{\text{w}}^{(j)} - T_{\mathrm{z}}^{(i)}\right) 
    +\sum_{j \in\mathcal{N}_{\mathrm{w}}^{(i)}}\frac{A_{\text{wd}}^{(j)}}{R_{\text{wd}}^{(j)}}\left(T_{\text{o}} - T_{\text{z}}^{(i)}\right) + G_{\text{inf}}^{(i)}\left(T_\mathrm{o} - T_{\mathrm{z}}^{(i)}\right) + Q_{\text{int}}^{(i)}  \nonumber\\
    & \quad + \sum_{j \in \mathcal{N}_{\mathrm{rad}}^{(i)}}Q_{\text{rad}}^{(j)} + \sum_{j \in\mathcal{N}_{\mathrm{w}}^{(i)}}f_{\mathrm{sec}}^{(j)}f_{\mathrm{SHGC}}^{(j)}A_{\mathrm{wd}}^{(j)}q_{\text{sol}}^{(j)}, \label{eq:room_temp}
\end{align}
where all variables are defined in Table~\ref{tab:symbols_thermal_model}. Note that for ease of notation, we omit the time argument $t$ in \eqref{eq:room_temp} and also throughout this section. On the right-hand side of \eqref{eq:room_temp}, the terms represent, respectively, heat transfer between the zone air and adjacent walls, heat exchange with the outdoor environment through windows, air infiltration from outdoors, internal heat gains (heat released inside the building by occupants, lighting, etc.), radiator heating, and inward secondary solar heat gains through windows (see Figure~\ref{fig:solar_window}). The outdoor temperature $T_{\text{o}}$ (prediction or measurement) is obtained from a weather forecasting station, the internal heat gain $Q_{\text{int}}^{(i)}$ of zone $i$ is estimated (e.g., from historical data or room booking information), and the computation of $q_{\text{sol}}^{(j)}$ for the window(s) on wall $j$ is described in Section~\ref{sec:solar_computation} below. These three variables are treated as exogenous inputs, while $Q_{\text{rad}}^{(j)}$, $j \in \mathcal{N}_{\mathrm{rad}}^{(i)}$, is the only (indirectly) controllable variable, as explained in Section~\ref{sec:radiator_heat_output} below.

Following \cite{BRCM_toolbox}, the dynamics of the wall temperature $T_{\mathrm{w}}^{(i)}$, $i \in \{1,2,\hdots, n_{\mathrm{w}}\}$, can be written as
\begin{align}
    C_{\mathrm{w}}^{(i)} \frac{dT_{\mathrm{w}}^{(i)}}{dt} &=  \sum_{j \in \mathcal{N}_{\mathrm{z}}^{(i)}} \frac{A_{\mathrm{w}}^{(i)}}{R_{\mathrm{w}}^{(i)}} \left( T_{\mathrm{z}}^{(j)} - T_{\mathrm{w}}^{(i)} \right) + \gamma_{\text{abs}}^{(i)}A_{\mathrm{w}}^{(i)}q_{\text{sol}}^{(i)} \nonumber \\
    &\quad +
    \sum_{j \in \mathcal{N}_{\mathrm{z}}^{(i)}} \sum_{k \in \mathcal{N}_{\mathrm{w}}^{(j)}} \left(1 - f_{\text{sec}}^{(k)} \right)f_{\text{SHGC}}^{(k)}A_{\mathrm{wd}}^{(k)} \frac{A_{\mathrm{w}}^{(i)}}{\sum_{l \in \mathcal{N}_{\mathrm{w}}^{(j)}} A_{\mathrm{w}}^{(l)}} q_{\text{sol}}^{(k)}, \label{eq:wall_temp}
\end{align}
where $T_{\mathrm{z}}^{(0)} \coloneqq T_{\mathrm{o}}$. On the right-hand side of \eqref{eq:wall_temp}, the first term models the heat transfer between the wall and its adjacent zones, and the second term accounts for solar radiation absorbed directly by the wall. Moreover, the third term represents direct solar gains transmitted through the windows and redistributed to the surrounding walls (see Figure~\ref{fig:solar_window}). In particular, the outer summation runs over all zones \mbox{$j \in \mathcal{N}_{\mathrm{z}}^{(i)}$} adjacent to wall $i$. For each such zone $j$, the inner summation runs over the windows on all walls $k \in \mathcal{N}_{\mathrm{w}}^{(j)}$ that surround zone $j$. The transmitted solar gain is then distributed among the walls enclosing zone~$j$ proportionally to their areas.

\begin{figure}
    \centering
    \includegraphics[width=0.55\linewidth]{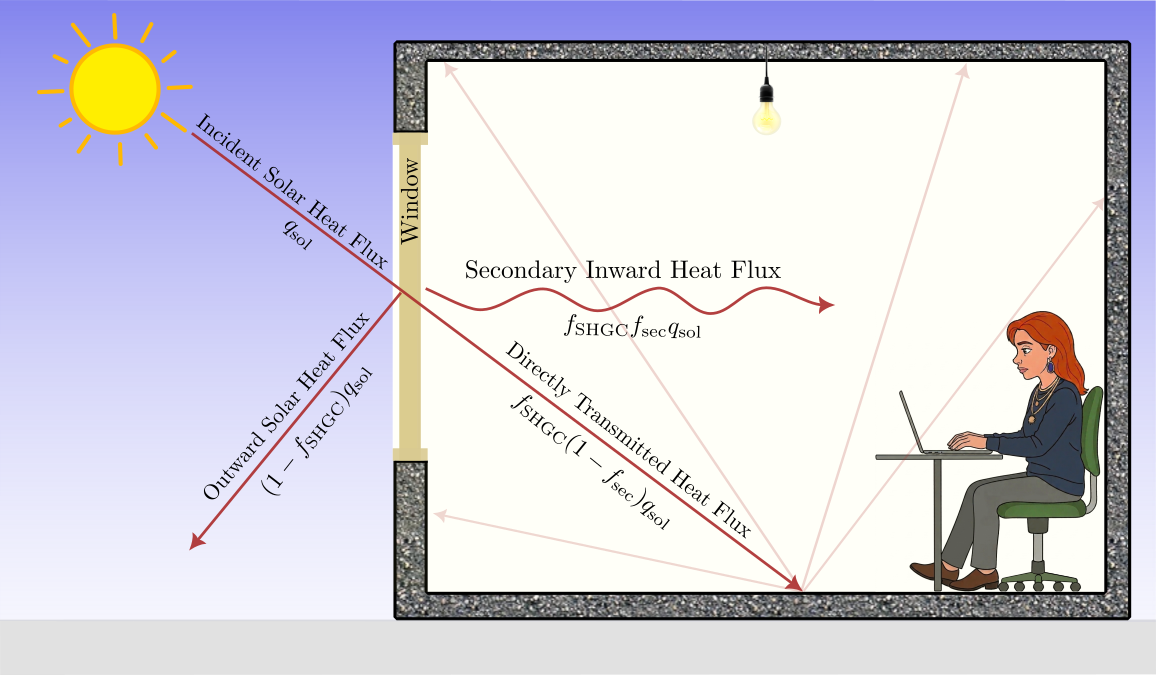}
    \caption{Schematic of the directly transmitted solar heat flux, its reflection toward the surrounding walls, and the secondary inward solar heat flux.}
    \label{fig:solar_window}
\end{figure}

\subsection{Solar Irradiance} \label{sec:solar_computation}
To compute the solar irradiance $q_{\mathrm{sol}}^{(i)}$ incident on wall~$i$, we adapt the approach originally developed for solar panels in \cite{solar}. This approach relies on the following data, which can be obtained from a weather forecasting station:
\begin{itemize}
    \item Direct Normal Irradiation (DNI) $[\si{\watt \hour \per\square\meter }]$ is the amount of solar irradiation received on a surface oriented perpendicular to the sun’s rays.
    \item Diffuse Horizontal Irradiation (DHI) $[\si{\watt \hour \per\square\meter }]$ is the amount of solar irradiation scattered by the atmosphere and received on a horizontal surface.
    \item Global Horizontal Irradiation (GHI) $[\si{\watt \hour \per\square\meter }]$ is the total solar irradiation received on a horizontal surface.
\end{itemize}
\begin{figure}
    \centering
    \includegraphics[width=0.55\linewidth]{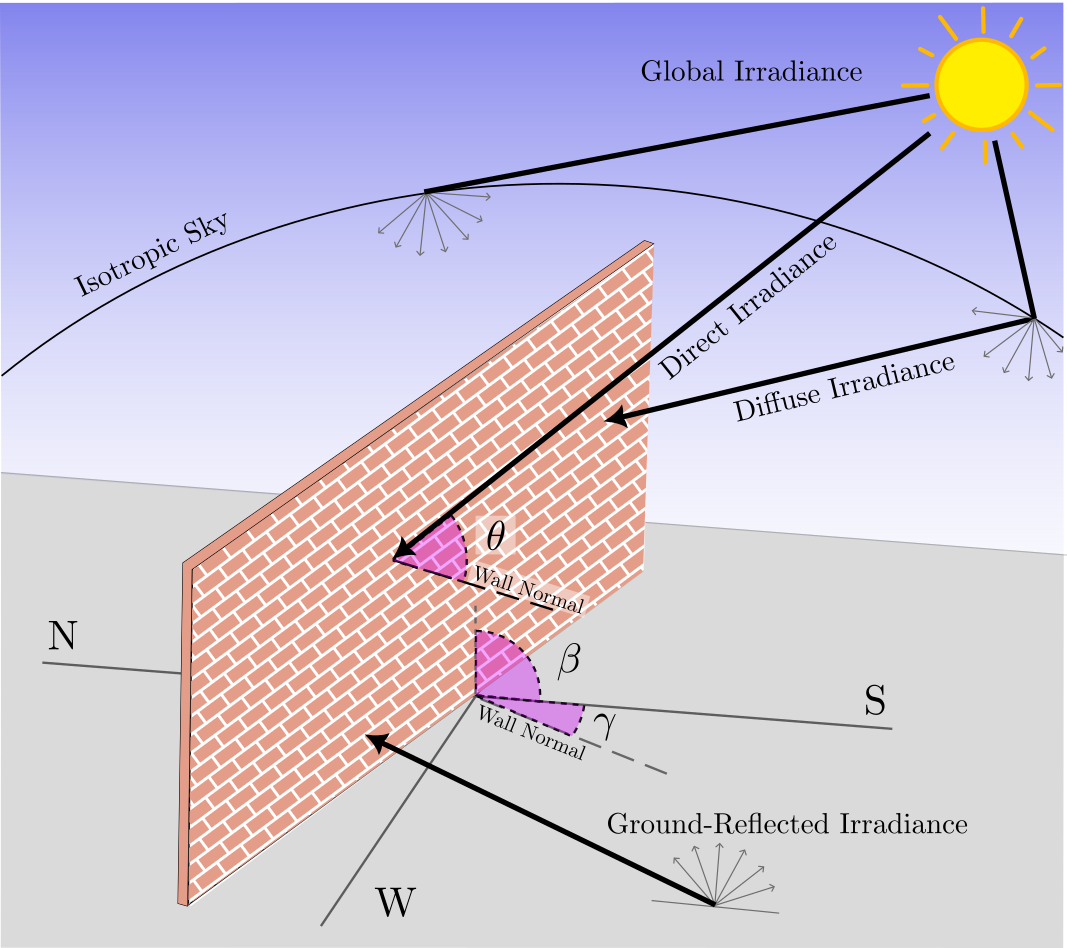}
    \caption{Schematic of direct, diffuse, and ground-reflected solar radiation incident on a wall.}
    \label{fig:solar_wall}
\end{figure}
Assuming isotropic sky, the total solar irradiation $I_{\mathrm{T}}^{(i)} \; [\si{\watt \hour \per\square \meter}]$ on wall $i$ is estimated as 
\begin{align}
    I_{\mathrm{T}}^{(i)} = \text{DNI}\cos{\theta^{(i)}} + \text{DHI}\left(\frac{1+\cos{\beta^{(i)}}}{2}\right) + \text{GHI}\rho_{\mathrm{g}}\left(\frac{1-\cos{\beta^{(i)}}}{2}\right), \nonumber
\end{align}
where $\beta^{(i)} \; [\si{\radian}]$ is the angle between the plane of wall $i$ and the horizontal (typically $\pi/2$), $\theta^{(i)} \; [\si{\radian}]$ is the angle of incidence between the beam radiation and the surface normal (see Figure~\ref{fig:solar_wall}), and $\rho_{\mathrm{g}} \in [0, 1] \subset \mathbb{R}$ is the ground reflectance coefficient (typically $\rho_{\mathrm{g}} \approx 0.4$). To compute $\theta^{(i)}$, we use
\begin{align}
    \cos{\theta^{(i)}} &= \sin{\delta} \sin{\phi} \cos{\beta^{(i)}} - \sin{\delta} \cos{\phi} \sin{\beta^{(i)}} \cos{\gamma^{(i)}} + \cos{\delta} \cos{\phi} \cos{\beta^{(i)}}\cos{\omega} \nonumber \\ 
    &\quad + \cos{\delta} \sin{\phi} \sin{\beta^{(i)}} \cos{\gamma^{(i)}} \cos{\omega} + \cos{\delta} \sin{\beta^{(i)}} \sin{\gamma^{(i)}} \sin{\omega}, \nonumber
\end{align}
where $\phi \; [\si{\radian}]$ is the latitude of the location (north positive), and $\gamma^{(i)} \; [\si{\radian}]$ is the surface azimuth angle of wall~$i$ (with zero due south, east negative, and west positive, $-\pi \leq \gamma^{(i)} \leq \pi$; see Figure~\ref{fig:solar_wall}). Moreover, $\delta \; [\si{\radian}]$ is the angular position of the sun at solar noon and is computed as 
\begin{align}
    \delta &= 0.0069 - 0.40 \cos{B} + 0.07 \sin{B} - 0.0068 \cos{2B} + 0.00091 \sin{2B} \nonumber \\
    &\quad - 0.0027 \cos{3B} + 0.0015 \sin{3B}, \nonumber
\end{align}
where
%\begin{align}
    $B \coloneqq (n_{\text{doy}} - 1) \frac{360}{365}$
%\end{align}
with $n_{\text{doy}} \in \mathbb{N} \coloneqq \{1,2,3,\hdots\}$ denoting the day of the year. Lastly, \(\omega \; [\si{\radian}]\) denotes the hour angle and is computed as 
\begin{align}
    \omega = 15\frac{\pi}{180}\left( t_\text{solar} - 12 \right), \nonumber
\end{align}
where $t_\text{solar} \, [\si{\hour}]$ is the solar time and is computed as
%Solar time is computed as 
\begin{align}
    t_\text{solar} = \frac{4(L_{\mathrm{st}} - L_{\mathrm{loc}}) + E}{60} + t_{\mathrm{std}}. \nonumber
\end{align}
Here, $t_{\mathrm{std}} \, [\si{\hour]}$ is the standard time, $L_{\mathrm{st}} \, [\si{\degree}]$ is the standard meridian for the local time zone, $L_{\mathrm{loc}} \, [\si{\degree}]$ is the longitude of the
location, and longitudes are in degrees west, that is, \mbox{$0 \leq L_{\mathrm{loc}} \leq 360^{\circ}$}. Moreover, 
\begin{align}
E &\coloneqq 229.2\bigl(0.000075 + 0.001868 \cos{B}  - 0.032077 \sin{B} \nonumber\\
&\quad - 0.014615 \cos{2B}  - 0.04089 \sin{2B}\bigr). \nonumber
\end{align}
Assuming that DNI, DHI, and GHI are hourly integrated and that the solar irradiance is taken to be constant within
each one-hour interval, the corresponding solar
irradiance $q_{\mathrm{sol}}^{(i)}$ incident on wall~$i$ and its window(s), if any, is computed as
\begin{align}
    q_{\mathrm{sol}}^{(i)} \, [\si{\watt\per\square\meter}] =
\frac{I_{\mathrm{T}}^{(i)} \,[\si{\watt \hour \per\square\meter }]}{1 \,[\si{\hour}]}. \nonumber
\end{align}

\subsection{Radiator Heat Output} \label{sec:radiator_heat_output}
As discussed, the heat output of each radiator represents a controllable variable to be optimized for achieving the desired building thermal response.
For water-based radiators, however, a desired heat output cannot be commanded arbitrarily, and therefore the radiator dynamics must also be modeled.

\subsubsection{Radiator Modeling} \label{sec:radiator_modeling}
We model a radiator as a two-node lumped model \citep{two_node_radiator}: (i) the radiator metal and (ii) the water inside the radiator; see Figure~\ref{fig:water_rad}.
The radiator metal temperature $T_{\mathrm{rad}}^{(i)}$ for radiator $i$, $i \in \{1,2,\ldots,n_{\mathrm{rad}}\}$, evolves as
\begin{align}
    C_{\text{rad}}^{(i)}\frac{d T_{\text{rad}}^{(i)}}{dt} = U_{\mathrm{rw}}^{(i)}A_{\mathrm{rw}}^{(i)}\left(T_{\text{wat}}^{(i)} - T_{\text{rad}}^{(i)} \right) -  Q_{\mathrm{rad}}^{(i)}, \label{eq:rad_dynamics}
\end{align}
where the first term on the right-hand side represents the heat transferred from the water inside the radiator to the radiator metal, while the second term represents the radiator heat output. This heat output $Q_{\mathrm{rad}}^{(i)}$ is modeled as
\begin{align}
    Q_{\text{rad}}^{(i)} = Q_{\text{nom}}^{(i)}\left( \frac{ \max \left( T_{\text{rad}}^{(i)} - T_{\mathrm{z}}^{(j)},~0 \right) }{\Delta T_{\mathrm{N}}} \right)^{\alpha}, \label{eq:heat_output}
\end{align}
where $Q_{\mathrm{nom}}^{(i)} \, [\si{\watt}]$ is the nominal heat output from the datasheet, $T_{\mathrm{z}}^{(j)}$ is the temperature of the corresponding zone, $\alpha\in\mathbb{R}$ is the radiator exponent (typically $\alpha\approx 1.3$; see \cite{TUNZI2016413,en12173215}), and $\Delta T_{\mathrm{N}}$ is the nominal temperature difference between the radiator metal and the zone air (often $\Delta T_{\mathrm{N}} \approx 42~\si{\kelvin}$).

Moreover, the dynamics of the water node inside radiator~$i$ can be written as
\begin{align}
    C_{\mathrm{wat}}^{(i)}\,\frac{\mathrm{d}T_{\mathrm{wat}}^{(i)}}{\mathrm{d}t}
    &= \dot m_{\mathrm{wat}}^{(i)}\,c_{\mathrm{p},\mathrm{wat}}\bigl(T_{\mathrm{sup}}-T_{\mathrm{ret}}^{(i)}\bigr) + U_{\mathrm{rw}}^{(i)}A_{\mathrm{rw}}^{(i)}\bigl(T_{\mathrm{rad}}^{(i)} - T_{\mathrm{wat}}^{(i)}\bigr),
    \label{eq:water_dynamics_revised}
\end{align}
where the first term on the right-hand side represents the net heat supplied by the water flowing through the radiator, and the second term represents the heat transferred from the water to the radiator metal. As $T_{\mathrm{wat}}^{(i)}$ is approximated as the arithmetic mean of the supply and return temperatures, we have
\begin{align}
    T_{\text{ret}}^{(i)} = 2T_{\text{wat}}^{(i)} - T_{\text{sup}}. \nonumber%\label{eq:transient-eq2}
\end{align}
Note that, for brevity, we assume that there is a single heat source (e.g., a boiler). The model presented in this article can be readily extended to systems with multiple heat sources.

\begin{figure}
    \centering
    \includegraphics[width=0.52\linewidth]{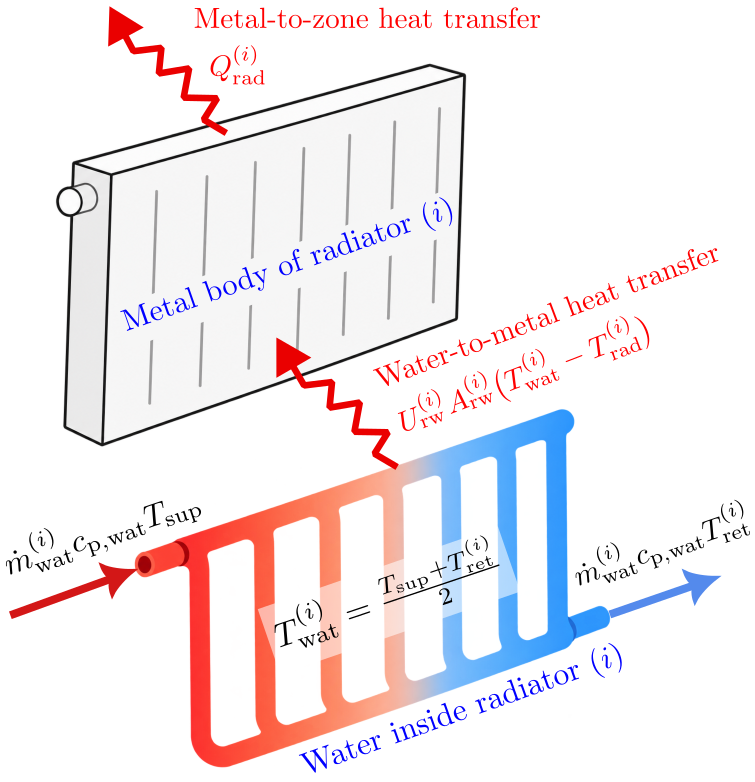}
    \caption{Two-node model of radiator dynamics.}
    \label{fig:water_rad}
\end{figure}

\begin{remark}
    Radiators are assumed to be installed in parallel (see Figure~\ref{fig:control_scheme}). Thus, assuming no heat losses in pipes, the water supply temperature $T_{\mathrm{sup}}$ is the same for all radiators.
\end{remark}

To compute the water-temperature dynamics of radiator~$i$ in~\eqref{eq:water_dynamics_revised} and thus consequently its metal temperature and the resulting heat output~\eqref{eq:heat_output}, the corresponding water mass flow rate $\dot{m}_{\mathrm{wat}}^{(i)}$ must be determined. This requires a hydraulic model of the radiator network, which is presented in Section~\ref{sec:hydraulic_model} below.

\subsubsection{Hydraulic Model}\label{sec:hydraulic_model} The typical configuration of a hydronic radiator heating system, consisting of a heat source (e.g., a boiler), connected radiators, and a circulation pump that drives the water flow, is shown in Figure~\ref{fig:control_scheme}. To compute the water flow rate through each radiator, we first model the dynamics of the water pump.
For water pumps, the pump head--flow rate ($H$--$Q$) curve is typically provided in the manufacturer's datasheet for different operating modes. By converting the volumetric flow rate $Q \; [\si{\meter^3 \per \second}]$ to the water mass flow rate $\dot{m}_{\mathrm{wat}} \; [\si{\kilogram \per \second}]$, the pump head $H \; [\si{\meter}]$ can be approximated as a quadratic function of $\dot{m}_{\mathrm{wat}}$, yielding
\begin{align}
    H = a_{\mathrm{p}} - b_{\mathrm{p}}\dot{m}_{\mathrm{wat}}^2, \label{eq:pump-head}
\end{align}
where $a_{\mathrm{p}}, b_{\mathrm{p}} \in \mathbb{R}$ are curve-fitting coefficients. In a parallel configuration (see Figure \ref{fig:control_scheme}), each radiator experiences approximately the same pressure drop $\Delta p \; [\si{\pascal}]$, which can be written as
\begin{align}
    \Delta p = K^{(i)} {\left(\dot{m}_{\mathrm{wat}}^{(i)}\right)}^2, \label{eq:delta-p}
\end{align}
where $K^{(i)} \; [\si{\pascal \square \second \per \square \kilogram}]$ denotes the hydraulic resistance of radiator~$i$, $i \in \{1,2,\hdots, n_{\mathrm{rad}}\}$, computed as \eqref{eq:hydraulic_resistance} below. Thus, the overall flow rate $\dot{m}_{\mathrm{wat}}$ can be written as
\begin{align}
    \dot{m}_{\mathrm{wat}} = \sum_{i=1}^{n_{\mathrm{rad}}} \dot{m}_{\mathrm{wat}}^{(i)}  = \sqrt{\Delta p} \underbrace{\sum_{i=1}^{n_{\mathrm{rad}}} \frac{1}{\sqrt{K^{(i)}}} }_{\coloneqq \frac{1}{\sqrt{K_{\mathrm{eq}}}}}. \label{eq:sum-q}
\end{align}
Converting \eqref{eq:pump-head} to pressure form and substituting \eqref{eq:sum-q} yields
\begin{align}
    \rho_\mathrm{wat} g a_{\mathrm{p}} - \rho_\mathrm{wat} g b_{\mathrm{p}} \left( \frac{\sqrt{\Delta p} }{\sqrt{K_{\mathrm{eq}}}} \right) ^2 = \Delta p,\label{eq:delta-p-2}
\end{align}
where $\rho_{\mathrm{wat}} \; [\si{\kilogram \per \cubic \meter}]$ is the density of water (assumed constant), and $g \; [\si{\meter \per \square \second}]$ is the standard acceleration due to gravity. By rewriting \eqref{eq:delta-p-2} and combining it with \eqref{eq:delta-p}, the water flow rate $\dot{m}_{\mathrm{wat}}^{(i)}$ through radiator~$i$, $i \in \{1,2,\hdots, n_{\mathrm{rad}}\}$, is given by
\begin{align}
    \dot{m}_{\mathrm{wat}}^{(i)} = \sqrt{\frac{\rho_\mathrm{wat} ga_{\mathrm{p}}}{1 + \rho_\mathrm{wat} gb_{\mathrm{p}} \frac{1}{K_{\mathrm{eq}}}}} \frac{1}{\sqrt{K^{(i)}}}. \label{eq:flow_rate}
\end{align}

To compute $\dot{m}_{\mathrm{wat}}^{(i)}$ using \eqref{eq:flow_rate}, the hydraulic resistances of all radiators in the system must first be determined. For each radiator equipped with a controllable valve, the hydraulic resistance $K^{(i)}$ depends on the valve position, which is determined by a local PI controller based on the temperature error between the setpoint and the measured zone temperature and the time integral of this error. In particular, let the temperature error of radiator~$i$ be
\begin{align}
e^{(i)} \coloneqq T_{\mathrm{des}}^{(i)} - T_{\mathrm{z}}^{(i)}, \nonumber
\end{align}
where $T_{\mathrm{des}}^{(i)}$ is the temperature setpoint. Then, the unsaturated valve
position is
\begin{align}
\phi^{(i)} = G_{\mathrm{p}}^{(i)}\,e^{(i)} + I^{(i)},
\label{eq:pi_command}
\end{align}
where $G_{\mathrm{p}}^{(i)} \; [\si{\per\kelvin}]$ is the proportional gain and
$I^{(i)}$ is the integral state.
The applied valve position $\eta^{(i)}$ is the saturation of $\phi^{(i)}$ to the admissible
range $[0,~1] \subset \mathbb{R}$, where $1$ corresponds to the fully open valve. Thus,
\begin{align}
\eta^{(i)} 
\coloneqq
\begin{cases}
0, & \phi^{(i)} < 0,\\
\phi^{(i)}, & 0 \le \phi^{(i)} \le 1,\\
1, & \phi^{(i)} > 1.
\end{cases}
\label{eq:valve_position}
\end{align} 
Additionally, the integral state evolves according to the back-calculation anti-windup law \begin{align}
\dot{I}^{(i)} &= \frac{G_{\mathrm{p}}^{(i)}e^{(i)} + \eta^{(i)}-\phi^{(i)}}{\tau^{(i)}} = \frac{\eta^{(i)}-I^{(i)}}{\tau^{(i)}}, 
\label{eq:integral_dynamics} 
\end{align} 
where $\tau^{(i)}\,[\si{\second}]$ denotes the integral and anti-windup tracking time. In particular, when the valve is unsaturated, $\eta^{(i)}=\phi^{(i)}$, \eqref{eq:integral_dynamics} reduces to 
\begin{align} 
\dot{I}^{(i)} = \frac{G_{\mathrm{p}}^{(i)}}{\tau^{(i)}}e^{(i)}, \nonumber
\end{align} 
thus, \eqref{eq:pi_command}--\eqref{eq:integral_dynamics} recover the standard PI control law. In contrast, when the valve is saturated, the correction term $\eta^{(i)}-\phi^{(i)}$ drives the integral state toward the saturated valve opening, thereby preventing integrator windup. 

To avoid nonsmooth dynamics, \eqref{eq:valve_position} is
approximated as
\begin{align}
\tilde{\eta}^{(i)}
\coloneqq \sigma_\varepsilon(\phi^{(i)}) - \sigma_\varepsilon(\phi^{(i)} - 1),
\label{eq:valve_position_smooth}
\end{align}
where
\begin{align}
\sigma_\varepsilon(y) \coloneqq \tfrac{1}{2}\Bigl(y + \sqrt{y^2 + \varepsilon}\Bigr), \nonumber
\end{align}
with $\varepsilon \in \mathbb{R}_{>0}$ being sufficiently small.
Thus, the hydraulic
resistance $K^{(i)}$ of radiator~$i$ is computed as
\begin{align}
K^{(i)}
&=
\bigg(
\frac{1}{\tilde{\eta}^{(i)}}\frac{3600\sqrt{10^5}}{\rho_{\mathrm{wat}}\,k_{\mathrm{v}}^{(i)}}
\bigg)^2
+
\bigg(
\frac{3600\sqrt{10^5}}{\rho_{\mathrm{wat}}\,k_{\mathrm{v}_0}^{(i)}}
\bigg)^2,
\label{eq:hydraulic_resistance}
\end{align}
where $k_{\mathrm{v}}^{(i)} \; [\si{\meter^3 / \hour}]$ is the nominal flow coefficient of the fully open controllable valve, and $k_{\mathrm{v}_0}^{(i)} \; [\si{\meter^3 / \hour}]$ is the equivalent fixed flow coefficient associated with radiator $i$, collectively representing the pressure losses across the radiator, pipes, and other fixed hydraulic restrictions.
\begin{remark}
    Flow coefficients are standard parameters in datasheets that specify the volumetric flow rate of water through the corresponding components at a pressure drop of \SI{1}{\bar}. In particular,
    \begin{align}
    q_{\mathrm{h}} = 
    k_{\mathrm{v}}
    \sqrt{\frac{\Delta p_{\mathrm{v}}}{10^{5}}}, \nonumber
    \end{align}
    where $q_{\mathrm{h}} \, [\si{\meter^3/ \hour}]$ is the volumetric flow rate, and $\Delta p_{\mathrm{v}} \, [\si{\pascal}]$ is the pressure drop across the corresponding component, and $k_{\mathrm{v}} \; [\si{\meter^3 / \hour}]$ denotes the nominal flow coefficient. Thus, in \eqref{eq:hydraulic_resistance}, the factor $3600$
    accounts for the conversion from hours to seconds, and $\rho_{\mathrm{wat}}$ is used to
    convert the volumetric flow rate to the mass flow rate used
    in~\eqref{eq:flow_rate}. Moreover, since $k_{\mathrm{v}}^{(i)}$ in \eqref{eq:hydraulic_resistance}
    corresponds to the fully open controllable valve, its effective flow
    coefficient is assumed to vary proportionally with the valve position,
    i.e., $\tilde{\eta}^{(i)}k_{\mathrm{v}}^{(i)}$.
\end{remark}

\begin{remark}
    It can be observed from \eqref{eq:flow_rate} and \eqref{eq:hydraulic_resistance} that the mass flow rate $\dot{m}^{(i)}_{\mathrm{wat}}$ through radiator $i$ depends not only on its own valve position $\eta^{(i)}$, but also on the positions of the other valves connected to the same circulating pump.
\end{remark}

In the next subsection, we express all the models developed in the previous subsections in a conventional state-space representation so that they can be more clearly incorporated into the MPC framework.

\subsection{State-Space Representation}
In compact form, the overall system can be written as
\begin{equation}
\dot{x}(t) = f\big(x(t), u(t), d(t)\big), \label{eq:overall_ct_system}
\end{equation}
where $f: \mathbb{R}^{n} \times \mathbb{R}^{n_{\mathrm{u}}} \times \mathbb{R}^{n_{\mathrm{d}}} \rightarrow \mathbb{R}^{n}$ is the vector field obtained by combining the building, radiator, and valve dynamics, and the stacked state vector $x$ is given by
\begin{align}
x(t) \coloneqq
\bigl [T_{\mathrm{z}}^{\top}(t) ~~ T_{\mathrm{w}}^{\top}(t) ~~ T_{\mathrm{rad}}^{\top}(t) ~~ T_{\mathrm{wat}}^{\top}(t) ~~ I^{\top}(t) \bigr ]
^{\top} \in \mathbb{R}^n. \nonumber
\end{align}
Here,
\begin{equation*}
\begin{alignedat}{2}
T_{\mathrm{z}}(t)
&\coloneqq \bigl[T_{\mathrm{z}}^{(i)}(t)\bigr]_{i=1}^{n_{\mathrm{z}}}, 
&\qquad
T_{\mathrm{w}}(t)
&\coloneqq \bigl[T_{\mathrm{w}}^{(i)}(t)\bigr]_{i=1}^{n_{\mathrm{w}}}, \\[2mm]
T_{\mathrm{rad}}(t)
&\coloneqq \bigl[T_{\mathrm{rad}}^{(i)}(t)\bigr]_{i=1}^{n_{\mathrm{rad}}},
&\qquad
T_{\mathrm{wat}}(t)
&\coloneqq \bigl[T_{\mathrm{wat}}^{(i)}(t)\bigr]_{i=1}^{n_{\mathrm{rad}}}, \\[2mm]
I(t)
&\coloneqq \bigl[I^{(i)}(t)\bigr]_{i=1}^{n_{\mathrm{rad}}},
\end{alignedat}
\end{equation*}
where the notation $\bigl [T_{\mathrm{z}}^{(i)} (t) \bigr]_{i=1}^{n_{\mathrm{z}}} \in \mathbb{R}^{n_{\mathrm{z}}}$, e.g., denotes the column vector obtained by stacking the components $T_{\mathrm{z}}^{(i)}$, ~$i\in \{1,2,\ldots, n_{\mathrm{z}}\}$.
Moreover, the input is defined as 
\begin{align}
u(t) := \begin{bmatrix}
  T_{\text{sup}}(t) &  T_{\mathrm{des}}^{\top}(t)
\end{bmatrix}^{\top} \in \mathbb{R}^{n_{\mathrm{u}}}, \nonumber
\end{align}
where $n_{\mathrm{u}} \coloneqq 1 + n_{\mathrm{rad}}$, and $T_{\mathrm{des}}(t) \coloneqq \bigl[ T_{\mathrm{des}}^{(i)}(t) \bigr]_{i = 1}^{n_{\mathrm{rad}}}$. Recall that the vector of temperature setpoints $T_{\mathrm{des}}$ controls the valve position, as discussed in Section~\ref{sec:hydraulic_model}. Additionally, the exogenous input is
\begin{align}
    d(t) := \begin{bmatrix}
        T_{\text{o}}(t) & Q_{\text{int}}^{\top}(t) & q_{\text{sol}}^{\top}(t)
    \end{bmatrix}^{\top} \in \mathbb{R}^{ n_{\mathrm{d}} }, \nonumber
\end{align}
where $Q_{\mathrm{int}}(t) \coloneqq \bigl [Q_{\mathrm{int}}^{(i)}(t) \bigr]_{i = 1}^{n_{\mathrm{z}}}$, $q_{\mathrm{sol}}(t) \coloneqq \bigl [q_{\mathrm{sol}}^{(i)}(t) \bigr]_{i = 1}^{n_{\mathrm{w}}}$, and $n_{\mathrm{d}} \coloneqq 1 + n_{\mathrm{z}} + n_{\mathrm{w}}$.
\begin{remark}
    Walls with identical orientation (e.g., surface azimuth) receive the same solar irradiation, so only unique values need to be stored in $q_{\text{sol}}(t)$ to avoid redundancy.
\end{remark}

Thus far, we have discussed the general modeling of buildings heated by hydronic radiators. We now consider a case-study building and formulate its model in the form of \eqref{eq:overall_ct_system}.

\subsection{Case-Study Modeling}
In this article, the case study is the Rensen building, located in Uden, the Netherlands. The exterior of the building is shown in Figure~\ref{fig:exterior_scheme}. We consider only the first floor of the building, which consists of six zones. In this subsection, we discuss how the building-related system parameters in \eqref{eq:room_temp} and  \eqref{eq:wall_temp} and the radiator-related parameters in \eqref{eq:rad_dynamics} and \eqref{eq:water_dynamics_revised} are obtained for this building, as well as the number of states considered in the overall system \eqref{eq:overall_ct_system}.

\begin{figure}
     \centering
     \begin{subfigure}{0.55\linewidth}
         \centering
         \includegraphics[width=\linewidth]{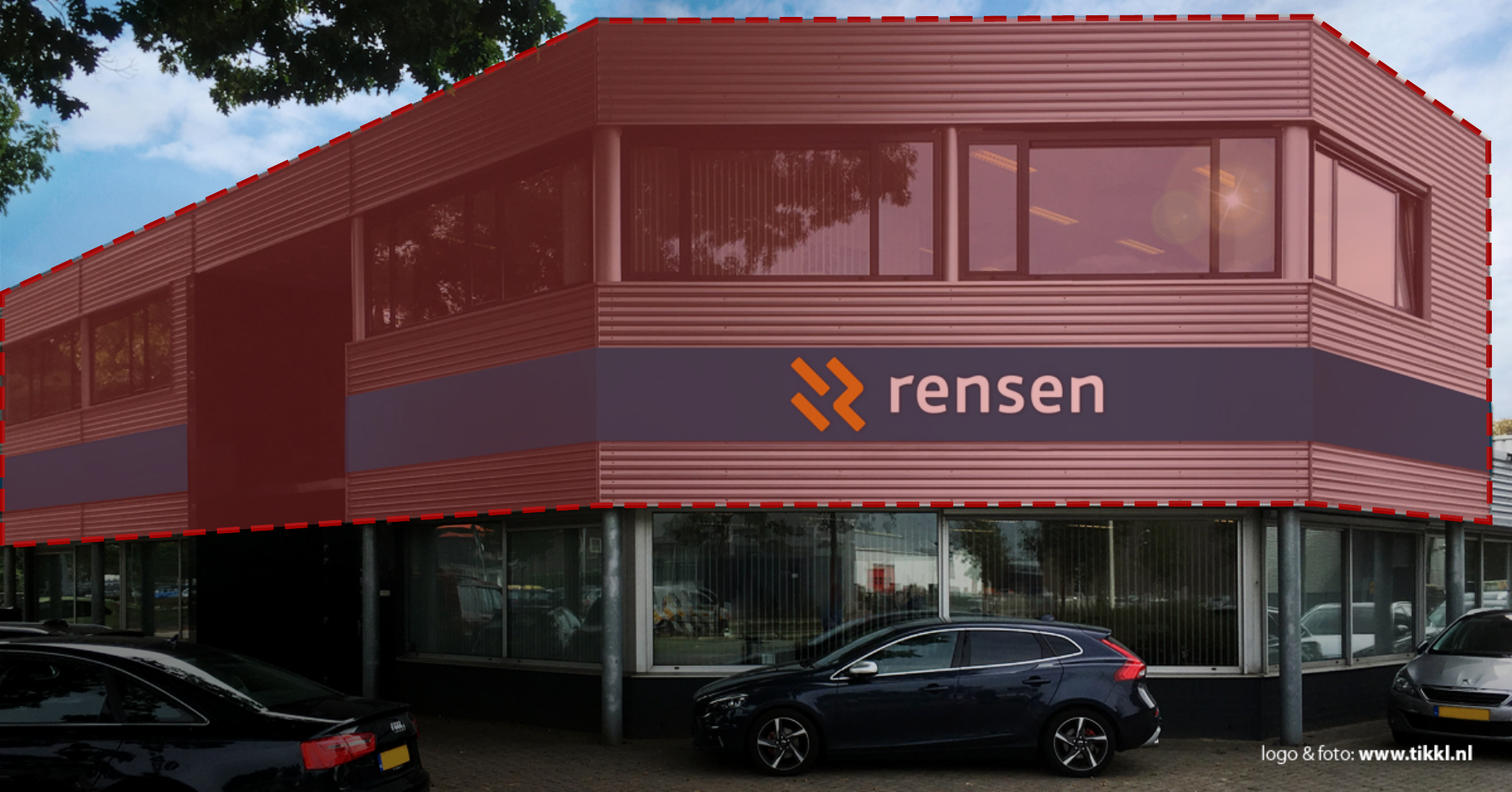}
         \caption{Exterior view of the Rensen building, highlighting the first floor considered in the case study.}
         \label{fig:exterior_case_study}
     \end{subfigure}
     \hfill
    \begin{subfigure}{0.78\linewidth}
        \centering
        \includegraphics[
        width=0.8\linewidth
        ]{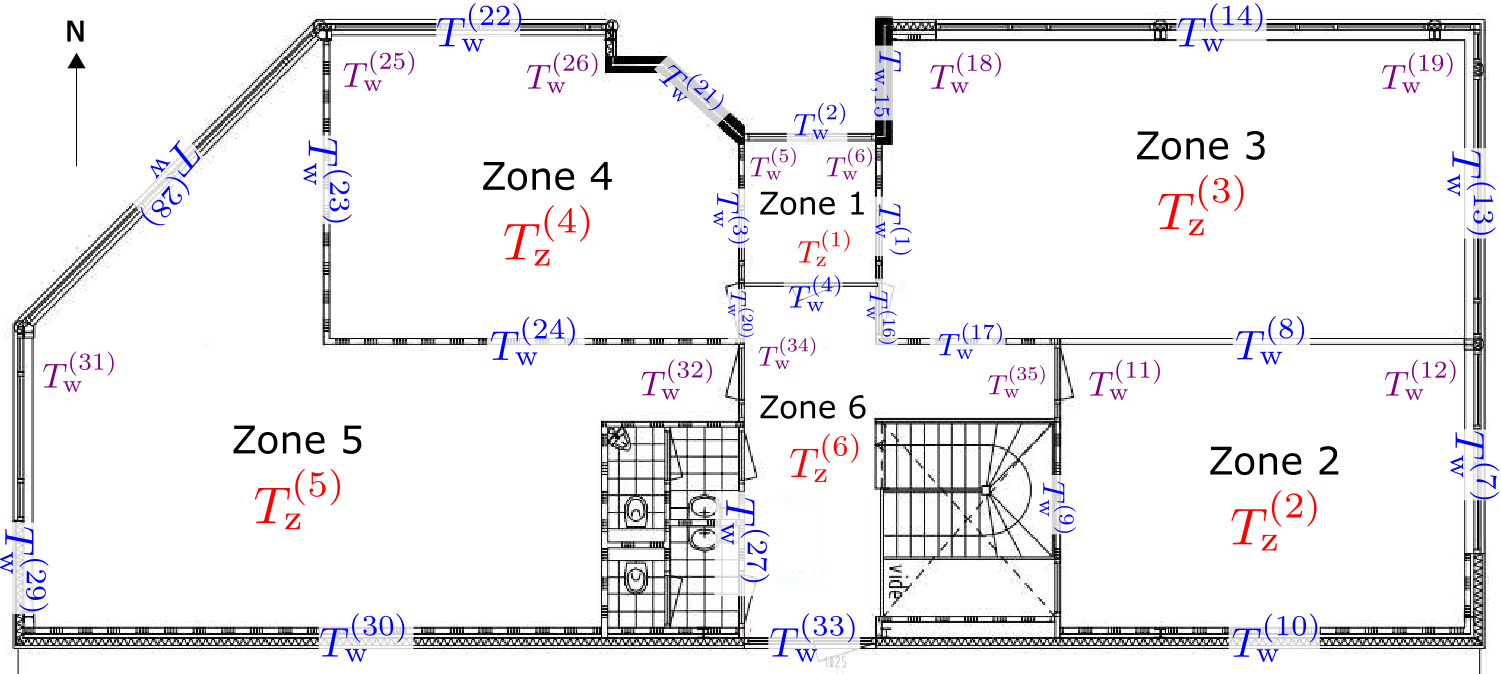}
        \caption{Floor plan and corresponding states of the first floor of the building considered in the case study.}
        \label{fig:scheme_building}
    \end{subfigure}
        \caption{The experimental case study: the Rensen Building, located in the Netherlands.}
    \label{fig:exterior_scheme}
\end{figure}

\subsubsection{Building-Related System Parameters}

The floor plan and the building-related states, including the zone and wall temperatures, are shown in Figure~\ref{fig:scheme_building}. In this figure, the wall-temperature states shown in violet correspond to the ceiling and floor. Overall, the model includes six zone-temperature states, $n_{\mathrm{z}} \coloneqq 6$, and 35 wall-temperature states, $n_{\mathrm{w}} \coloneqq 35$.

To obtain the system parameters in \eqref{eq:room_temp} and \eqref{eq:wall_temp}, such as the wall thermal resistances $R_{\mathrm{w}}^{(i)}$ and the heat capacities of the zones $C_{\mathrm{z}}^{(i)}$ and walls $C_{\mathrm{w}}^{(i)}$, we use the building construction drawings. For example, Figure~\ref{fig:ext_wall} illustrates the material composition of the exterior walls. The exterior walls consist of a
$0.75~\mathrm{mm}$ steel sheet, $130~\mathrm{mm}$ Rockwool insulation with
$R_{\mathrm{rw}} = 2.5~\mathrm{m^2K/W}$, a $0.70~\mathrm{mm}$ steel inner tray, and a
$12.5~\mathrm{mm}$ gypsum board. Since the temperature of wall $i$ in \eqref{eq:wall_temp} is considered at its midplane, the thermal resistance~$R_{\mathrm{w}}^{(i)}$ is
computed as 
\begin{align}
R_{\mathrm{w}}^{(i)} = \frac{R_{\mathrm{w,tot}}^{(i)}}{2}, \nonumber
\end{align}
where
\begin{align}
R_{\mathrm{w}, \mathrm{tot}}^{(i)}
&=
\frac{0.00075}{\lambda_{\mathrm{steel}}}
+
R_{\mathrm{rw}}
+
\frac{0.00070}{\lambda_{\mathrm{steel}}}
+
\frac{0.0125}{\lambda_{\mathrm{gypsum}}} 
=
\frac{0.00075}{50}
+
2.5 \nonumber \\
&\quad +
\frac{0.00070}{50}
+
\frac{0.0125}{0.25}
\approx 2.55~\mathrm{m^2K/W}, \nonumber
\end{align}
with $\lambda_{\mathrm{steel}}~[\mathrm{W/(mK)}]$ and
$\lambda_{\mathrm{gypsum}}~[\mathrm{W/(mK)}]$ being the thermal conductivities
of steel and gypsum board, respectively. Moreover, the heat capacity $C_{\mathrm{w}}^{'(i)} ~[\mathrm{J/(m^2K)}]$ of wall $i$ per unit area is computed as
\begin{align}
C_{\mathrm{w}}^{'(i)}
&=
\rho_{\mathrm{steel}} c_{\mathrm{p},\mathrm{steel}}(0.00075+0.00070)
+
\rho_{\mathrm{rw}} c_{\mathrm{p},\mathrm{rw}}(0.13)
+
\rho_{\mathrm{gypsum}} c_{\mathrm{p},\mathrm{gypsum}}(0.0125) \nonumber \\
&=
7850 \cdot 500 \cdot 0.00145
+
50 \cdot 840 \cdot 0.13
+
800 \cdot 1090 \cdot 0.0125 \approx 2.2 \times 10^{4}~\mathrm{J/(m^2K)}, \nonumber
\end{align}
where $\rho_{\mathrm{steel}}~[\mathrm{kg/m^3}]$,
$\rho_{\mathrm{rw}}~[\mathrm{kg/m^3}]$, and
$\rho_{\mathrm{gypsum}}~[\mathrm{kg/m^3}]$ denote the densities of steel,
Rockwool, and gypsum board, respectively, and
$c_{\mathrm{p},\mathrm{steel}}~[\mathrm{J/(kgK)}]$,
$c_{\mathrm{p},\mathrm{rw}}~[\mathrm{J/(kgK)}]$, and
$c_{\mathrm{p},\mathrm{gypsum}}~[\mathrm{J/(kgK)}]$ denote their specific heat
capacities, respectively. Thus, the heat capacity $C_{\mathrm{w}}^{(i)}$ of wall $i$ is
\begin{align}
C_{\mathrm{w}}^{(i)} = A_{\mathrm{w}}^{(i)} C_{\mathrm{w}}^{'(i)}. \nonumber
\end{align}
\begin{figure}
    \centering
    \includegraphics[width=0.52\linewidth]{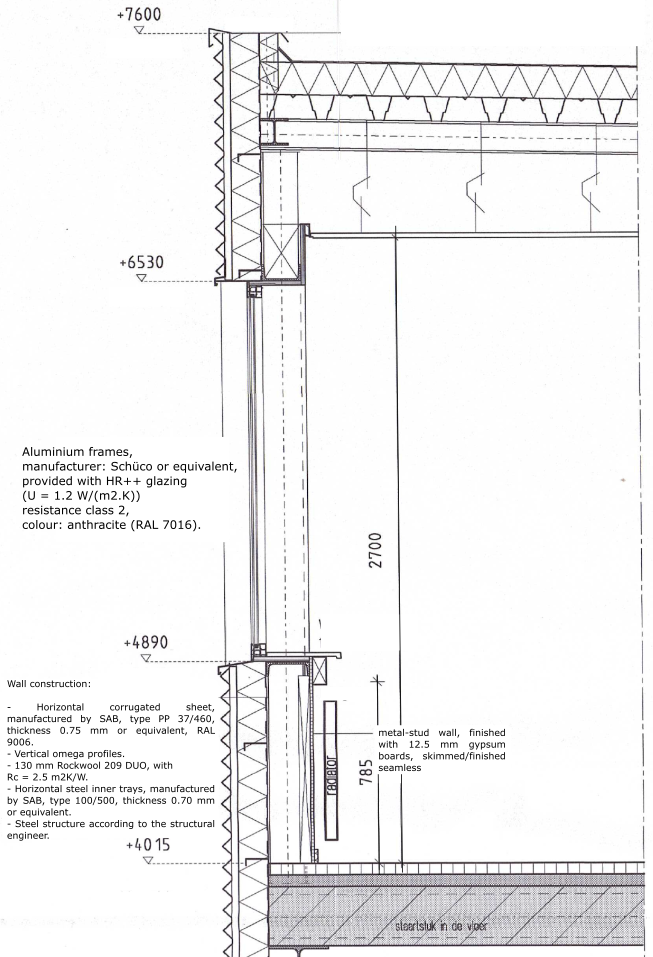}
    \caption{Material composition of the exterior walls and properties of the windows.}
    \label{fig:ext_wall}
\end{figure}
The same procedure is applied to the interior walls, ceiling, and floor. Regarding the windows and based on the building construction drawing in Figure \ref{fig:ext_wall}, the thermal resistance $R_{\mathrm{wd}}^{(i)}$ of window $i$, $i \in \{1,2,\ldots, 8\}$, is obtained as
\begin{align}
    R_{\mathrm{wd}}^{(i)} = \frac{1}{U} = \frac{1}{1.2} = 0.83~\mathrm{m^2K/W}. \nonumber
\end{align}
Moreover, because the windows are HR++ double glazed, the solar heat gain coefficient and secondary heat gain fraction are estimated as $f^{(i)}_{\mathrm{SHGC}}=0.55$ and $f^{(i)}_{\mathrm{sec}}=0.8$, $i \in \{1,2,\ldots, 8\}$, respectively.

Additionally, the heat capacity $C_{\mathrm{z}}^{(i)}$ of zone $i$, $i \in \{1,2,\ldots, 6\}$, is approximated based on the thermal capacitance of the air contained in the zone as
\begin{align}
    C_{\mathrm{z}}^{(i)}
    \coloneqq
     \alpha^{(i)}_{\mathrm{z}}\rho_{\mathrm{air}} c_{\mathrm{p},\mathrm{air}} V_{\mathrm{z}}^{(i)}, \nonumber
\end{align}
where $\rho_{\mathrm{air}} \, [\si{\kilogram \meter ^{-3}}]$ is the density of air, and $c_{\mathrm{p},\mathrm{air}} \, [\si{\joule \per \kilogram \per \kelvin}]$ is the specific heat capacity of air, and $V_{\mathrm{z}}^{(i)} \, [\si{\meter ^3}]$ is the volume of zone $i$. Moreover, $\alpha^{(i)}_{\mathrm{z}} \in \mathbb{R}_{> 0}$ denotes the effective zone thermal-capacitance coefficient, typically ranging from 6 to 10 for office buildings \citep{HONG201923}.

\subsubsection{Radiator-Related System Parameters}
The radiator configuration of the considered floor is as follows:
\begin{itemize}
\item Zone 1: one radiator with a nominal heat output of $\qty{1600}{\watt}$.
\item Zone 2: one radiator with a nominal heat output of $\qty{1706}{\watt}$.
\item Zone 3: four radiators, each with a nominal heat output of $\qty{1706}{\watt}$.
\item Zone 4: one radiator with a nominal heat output of $\qty{4785}{\watt}$.
\item Zone 5: three radiators with nominal heat outputs of $\qty{1706}{\watt}$, $\qty{2047}{\watt}$, and $\qty{2047}{\watt}$.
\item Zone 6: no radiator.
\end{itemize}
All radiators are supplied by a single boiler, namely a Remeha Quinta 35c equipped with a Grundfos UPMO 25-60 water pump; see Figure~\ref{fig:water_pump}. The pump is running at its lowest speed (the third light of the water pump is yellow in Figure~\ref{fig:water_pump}), so we use the constant $H$--$Q$ Curve~I from the datasheet \citep{grundfos_upm3}. By converting the volumetric flow rate $Q$ into the mass flow rate $\dot{m}_{\mathrm{wat}}$ and fitting a quadratic function, we obtain the coefficients $a_{\mathrm{p}}$ and $b_{\mathrm{p}}$ in \eqref{eq:pump-head}, as shown in Figure~\ref{fig:HQ_water_pump}.

\begin{figure}
     \centering
     \begin{subfigure}[t]{0.49\linewidth}
        \centering
        \vspace{0pt}
        \includegraphics[
            height=5cm,
            trim={0cm 0cm 0cm 0cm},
            clip
        ]{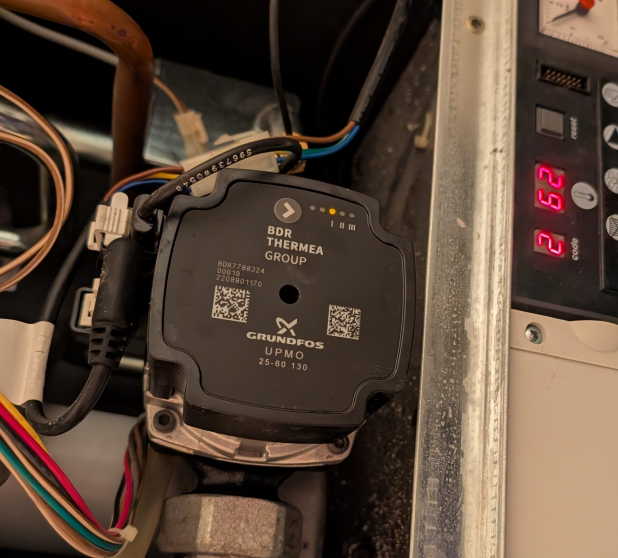}
        \caption{Water pump inside the boiler, which circulates water through the radiators on the first floor of the building.}
        \label{fig:water_pump}
     \end{subfigure}
      \hfill
     \begin{subfigure}[t]{0.49\linewidth}
        \centering
        \vspace{0pt}
        \includegraphics[
            height=5cm,
            trim={10cm 25cm 0cm 30cm},
            clip
        ]{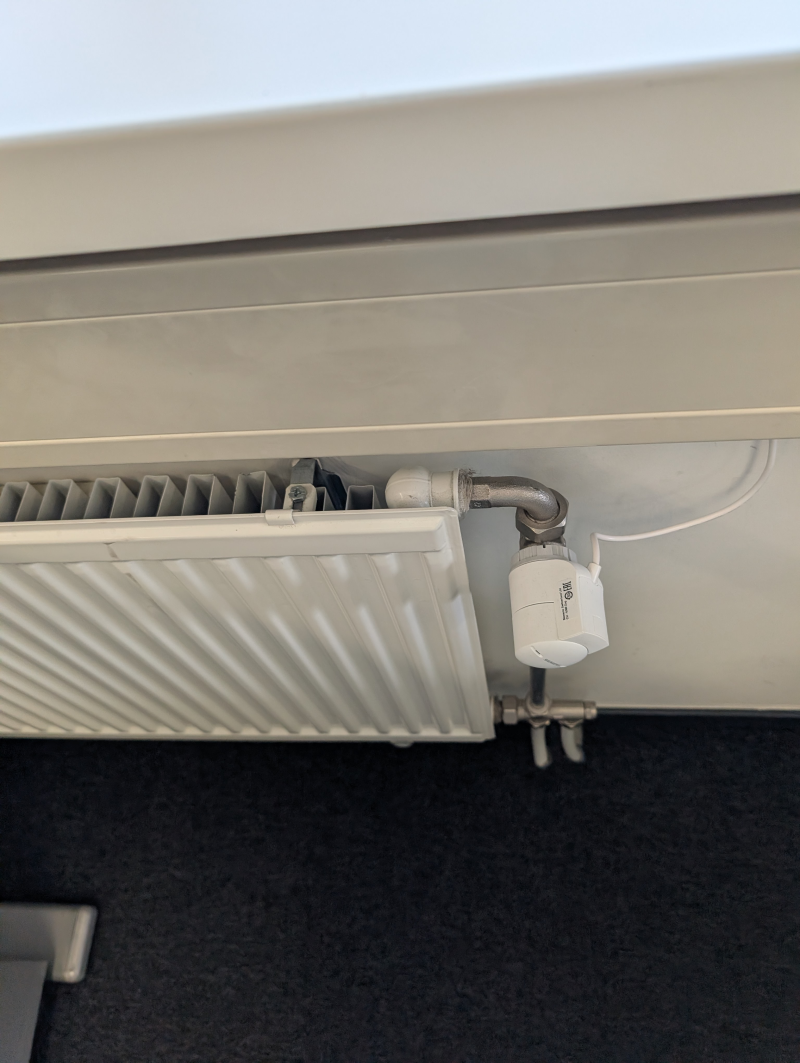}
        \caption{One of the controllable radiator valves, which acts as a PI controller with a remotely configurable setpoint.}
        \label{fig:valve}
     \end{subfigure}
     \caption{Main components of the first-floor hydronic heating system considered in the case study.}
\end{figure}

\begin{figure}
    \centering
    \includegraphics[width=0.55\linewidth]{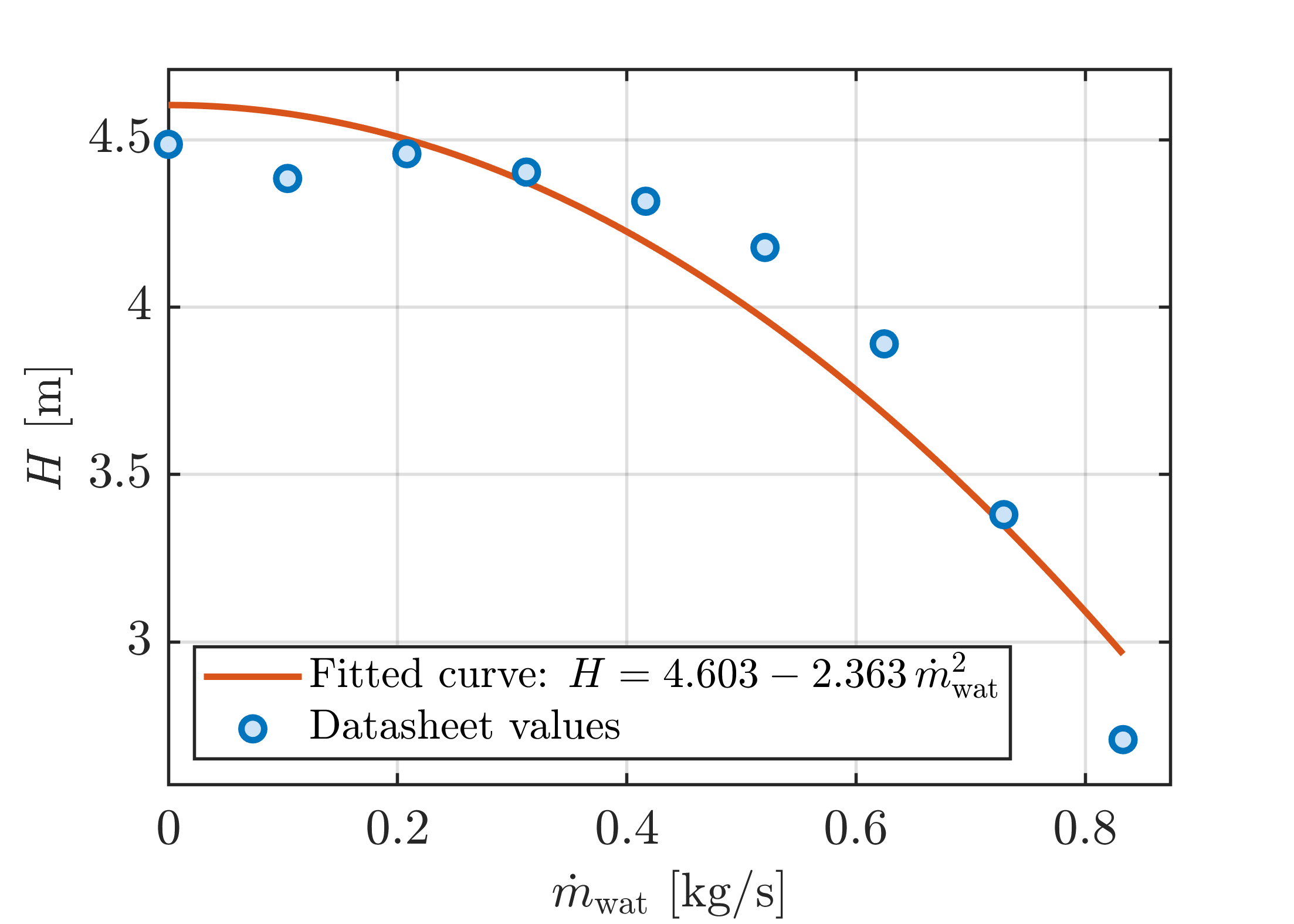}
    \caption{Identification of the water-pump parameters in \eqref{eq:pump-head} for the case-study building.}
    \label{fig:HQ_water_pump}
\end{figure}

In the case-study building, the valves of all radiators in zones 2, 3, 4, and 5 are remotely controllable, whereas the radiator in zone 1 is equipped with a conventional TRV. Moreover, all controllable valves within the same zone share a common temperature setpoint. Therefore, the number of considered control inputs is
$n_{\mathrm{u}} = 1 + 4 = 5$,
where one input corresponds to the boiler supply temperature $T_{\mathrm{sup}}$ and the remaining four inputs correspond to the setpoints $T_{\mathrm{des}}^{(i)}$, $i \in \{2,3,4,5\}$. Figure~\ref{fig:valve} shows one of the remotely controllable radiator valves. %in Room 3.
In addition, since all controllable valves in the same zone exhibit the same behavior because of their common setpoints, one valve-integral state~$I^{(j)}$ is assigned to each controlled zone $j$. Consequently, the model includes 24 radiator-related states. Together with the zone- and wall-temperature states, the total number of states in the system \eqref{eq:overall_ct_system} is~65.

However, finding accurate values for some of the parameters introduced in this section, such as the flow coefficients $k^{(i)}_\mathrm{v}$ and $k^{(i)}_{\mathrm{v}_0}$ of the radiator branches in \eqref{eq:hydraulic_resistance}, can be challenging. Moreover, although some parameters can be estimated based on physical properties, their actual values may deviate from their nominal values over time due to aging, faults, or other factors. For example, the thermal resistances $R_{\mathrm{w}}^{(i)}$, $i \in \{1,2,\ldots,n_{\mathrm{w}}\}$, of the walls are estimated from the wall materials, but the wall color may also affect their effective thermal resistance. Therefore, in the next section, we use real measurement data to slightly adjust these parameters so that the model better matches the actual system behavior. This approach is referred to as \textit{grey-box modeling}.

%% file: symbols.tex
\begin{table}[pos=htbp]
\centering
\caption{Thermal model parameters and variables.}
\label{tab:symbols_thermal_model}
\setlength{\tabcolsep}{6pt}
\renewcommand{\arraystretch}{1.15}
%\begin{tabularx}{\linewidth}{@{} M{\symw} l X @{}} \toprule
\begin{tabularx}{\linewidth}{@{} l l X @{}}\toprule
\textbf{Symbol} & \textbf{Unit} & \textbf{Description} \\ 
\midrule

$A_{\mathrm{w}}^{(i)}$ & $\si{\square\meter}$ & Area of wall $i$, excluding window areas \\ \rowline

$A_{\mathrm{wd}}^{(i)}$ & $\si{\square\meter}$ & Area of window(s) located on wall $i$ (if none, $A_{\mathrm{wd}}^{(i)} = 0$) \\ \rowline

$A_{\mathrm{rw}}^{(i)}$ & $\si{\square\meter}$ & Heat-transfer area between the water inside radiator~$i$ and its metal surface
%Radiator–water heat-transfer area of radiator $i$ 
\\ \rowline

$C_{\mathrm{z}}^{(i)}$ & $\si{\joule\per\kelvin}$ & Heat capacity of zone $i$ \\ \rowline

$C_{\mathrm{w}}^{(i)}$ & $\si{\joule\per\kelvin}$ & Heat capacity of wall $i$ \\ \rowline

$C_{\text{rad}}^{(i)}$ & $\si{\joule\per\kelvin}$ & Heat capacity of radiator $i$ \\ \rowline

$C_{\text{wat}}^{(i)}$ & $\si{\joule\per\kelvin}$ & Heat capacity of water inside radiator $i$ \\ \rowline

$c_{\mathrm{p},\mathrm{wat}}$ & $\si{\joule \per \kilogram \per \kelvin}$ & Specific heat capacity of water \\ \rowline

$f_{\text{SHGC}}^{(i)}$ & --& Solar heat gain coefficient of window(s) on wall~$i$, defined as the fraction of incident solar radiation on the glazing that contributes to indoor heat gain (directly transmitted and inward secondary heat gains), $f_{\text{SHGC}}^{(i)}\in[0,1] \subset \mathbb{R}$; see Figure~\ref{fig:solar_window} \\ \rowline

$f_{\text{sec}}^{(i)}$ & -- & Secondary heat gain fraction of window(s) on wall $i$ (fraction of the window solar heat gain released to the zone air), $f_{\text{sec}}^{(i)}\in[0,1] \subset \mathbb{R}$; see Figure~\ref{fig:solar_window} \\ \rowline

$G_{\mathrm{inf}}^{(i)}$ & $\si{\watt\per\kelvin}$ & Infiltration conductance of zone $i$ \\ \rowline

$K^{(i)}$ & $\si{\pascal \square \second \per \square \kilogram}$ & Hydraulic resistance of radiator~$i$ \\ \rowline

$\dot{m}_{\mathrm{wat}}^{(i)}$ & $\si{\kilogram \per \second}$ & Water mass flow rate through radiator~$i$ \\ \rowline

$\mathcal{N}_{\mathrm{z}}^{(i)}$ & -- & Set of indices of zones adjacent to wall $i$; index~$0$ corresponds to the outdoor environment \\ \rowline

$\mathcal{N}_{\mathrm{w}}^{(i)}$ & -- & Set of indices of walls surrounding zone $i$ \\ \rowline

$\mathcal{N}_{\mathrm{rad}}^{(i)}$ & -- & Set of indices of radiators in zone $i$ \\ \rowline

$n_{\mathrm{z}}$ & -- & Number of zones \\ \rowline

$n_{\mathrm{w}}$ & -- & Number of walls \\ \rowline

$n_{\mathrm{rad}}$ & -- & Number of  radiators \\ \rowline %\\ \rowline

$Q_{\mathrm{rad}}^{(i)}$ & $\si{\watt}$ & Heat transfer rate supplied by radiator $i$ \\ \rowline

$Q_{\mathrm{int}}^{(i)}$ & $\si{\watt}$ & Internal heat gain rate in zone $i$ \\ \rowline

$q_{\mathrm{sol}}^{(i)}$ & $\si{\watt\per\square\meter}$ & Solar irradiance incident on wall $i$ and its window(s) \\ \rowline

$R_{\mathrm{w}}^{(i)}$ & $\si{\kelvin \square \meter \per\watt}$ & Thermal resistance of wall $i$ \\ \rowline

$R_{\mathrm{wd}}^{(i)}$ & $\si{\kelvin \square \meter\per\watt}$ & Thermal resistance of window(s) on wall $i$ \\ \rowline

$T_{\mathrm{des}}^{(i)}$ & $\si{\kelvin}$ & Zone-temperature setpoint for the valve controller of radiator $i$ \\ \rowline

$T_{\mathrm{o}}$ & $\si{\kelvin}$ & Outdoor temperature \\ \rowline

$T_{\mathrm{rad}}^{(i)}$ & $\si{\kelvin}$ & Temperature of radiator $i$ \\ \rowline

$T_{\mathrm{ret}}^{(i)}$ & $\si{\kelvin}$ & Return water temperature from radiator $i$ \\ \rowline

$T_{\mathrm{sup}}$ & $\si{\kelvin}$ & Water supply temperature \\ \rowline %of boiler $i$ \\ \rowline

$T_{\mathrm{w}}^{(i)}$ & $\si{\kelvin}$ & Temperature of wall $i$ \\ \rowline

$T_{\mathrm{wat}}^{(i)}$ & $\si{\kelvin}$ & Average water temperature in radiator~$i$ \\ \rowline

$T_{\mathrm{z}}^{(i)}$ & $\si{\kelvin}$ & Temperature of zone $i$ \\ \rowline

%$T_{\mathrm{sen},i}$ & $\si{\kelvin}$ & Temperature measured by the TRV in zone $i$ \\ \rowline

%$A_{\mathrm{z},i}$ & $\si{\square\meter}$ & Floor area of zone $i$ \\ \rowline

$U_{\mathrm{rw}}^{(i)}$ & $\si{\watt \per \kelvin \per \square \meter}$ & Heat-transfer coefficient between the water inside radiator $i$ and its metal surface \\ \rowline

$\eta^{(i)}$ & -- &  Valve position of radiator $i$, $\eta^{(i)} \in[0,1] \subset \mathbb{R}$ \\ \rowline

$\gamma_{\mathrm{abs}}^{(i)}$ & -- & \mbox{Solar absorptance of wall $i$, $\gamma_{\mathrm{abs}}^{(i)} \in[0,1] \subset \mathbb{R}$} \\

% & -- & Number of  \\
%$n_{\mathrm{wd}}^{(i)}$ & -- & The number of windows in zone $i$ \\
\bottomrule
\end{tabularx}
\end{table}

%% file: s3_new.tex
The idea of grey-box modeling is to find optimal values for the system parameters, constrained around the values estimated from physical properties, such that the difference between the model outputs and the real measurements is minimized. In the case-study building, at each sampling instant $t_k \coloneqq kt_s$, where $t_s \in \mathbb{R}_{> 0}$ is the sampling time and $k \in \mathbb{N}_0$, the temperatures of zones 2, 3, 4, and 5, denoted by $T_{\mathrm{meas},k}^{(i)}$, $i \in \mathcal{Z} \coloneqq \{2,3,4,5\}$, the boiler supply temperature $T^{\mathrm{sup}}_{\mathrm{meas},k}$, and the valve positions $\eta^{(i)}_{\mathrm{meas},k}$, $i \in \mathcal{Z}$, are measured. Note that $\eta^{(i)}_{\mathrm{meas},k}$ denotes the common valve position of the radiators in zone~$i$ at $t_k$, as all radiators within the same zone have the same valve position. Since not all states in \eqref{eq:overall_ct_system} are measured in the case-study building, the unmeasured states at the beginning of the identification period must be estimated to initialize the model and subsequently identify the system parameters.

For the grey-box identification, we discretize the continuous-time model \eqref{eq:overall_ct_system} and consider uncertainty in the discrete-time model as
\begin{equation}
x_{k+1} = F(x_k, u_k, d_k; \theta) + w_k,
\label{eq:EKF_ct_system}
\end{equation}
where $x_k := x(t_k)$, $u_k := u(t_k)$, and $d_k := d(t_k)$ denote the corresponding values at $t_k \coloneqq k t_{\mathrm{s}}$. Moreover, $\theta \in \mathbb{R}^{n_{\theta}}$ denotes the vector of all system parameters that we aim to identify, and $w_k \in \mathbb{R}^n$ is considered zero-mean Gaussian process noise associated with the transition from $t_k$ to $t_{k+1}$, with $w_k \sim \mathcal{N}(0,Q)$ and covariance \mbox{$Q \in \mathbb{R}^{n\times n}$}. Note that, in this article, we discretize \eqref{eq:overall_ct_system} using the fourth-order Runge--Kutta method
(RK4) because compared with, for example, the Euler discretization, which is
first order, RK4 provides a more accurate
approximation of the continuous-time dynamics for a given sampling time \citep{RK4}. In particular, we discretize $f$ as
\begin{equation}
\begin{aligned}
x_{k+1} = F(x_k,u_k,d_k; \theta) + w_k
= x_k + \frac{t_s}{6}
\left(
\kappa_1 + 2\kappa_2 + 2\kappa_3 + \kappa_4
\right) + w_k,
\end{aligned}
\label{eq:rk4_discretization}
\end{equation}
where
\begin{alignat}{2}
\kappa_1
&\coloneqq f(x_k,u_k,d_k;\theta),
\qquad \qquad&
\kappa_2
&\coloneqq f\left(x_k+\frac{t_s}{2}\kappa_1,u_k,d_k;\theta\right),
\nonumber\\
\kappa_3
&\coloneqq f\left(x_k+\frac{t_s}{2}\kappa_2,u_k,d_k;\theta\right),
\qquad \qquad&
\kappa_4
&\coloneqq f\left(x_k+t_s\kappa_3,u_k,d_k;\theta\right).
\nonumber
\end{alignat}
Additionally, the measurement model is nonlinear and is described by 
\begin{align}
    y_k = h(x_k, T_{\mathrm{des},k}) + v_k, \label{eq:EKF_measurement}
\end{align}
where 
\begin{align}
y_k \coloneqq
\begin{bmatrix}
    \bigl[T_{\mathrm{meas},k}^{(i)}\bigr]_{i\in\mathcal{Z}}^{\top} &  \bigl[\eta_{\mathrm{meas},k}^{(i)}\bigr]_{i\in\mathcal{Z}}^{\top} 
\end{bmatrix}^{\top}
\in \mathbb{R}^{n_{\mathrm{y}}}, \qquad \mathcal{Z} \coloneqq \{2,3,4,5\}, \nonumber
\end{align}
denotes the measurements obtained from the sensors at time $t_k \coloneqq k t_s$. In \eqref{eq:EKF_measurement}, $T_{\mathrm{des},k} \coloneqq T_{\mathrm{des}}(t_k)$, and $v_k \in \mathbb{R}^{n_{\mathrm{y}}}$ denotes zero-mean Gaussian measurement noise with $v_k \sim \mathcal{N}(0,R)$, $R \in \mathbb{R}^{n_{\mathrm{y}}\times n_{\mathrm{y}}}$.

\begin{remark}
    The nonlinearity in the measurement model \eqref{eq:EKF_measurement} comes from the valve position outputs, and specifically from their saturation. If only the zone temperatures are extracted from the states, the measurement model \eqref{eq:EKF_measurement} becomes linear.
\end{remark}

We also collect two successive datasets. The first dataset is used for the warm-up phase to estimate all the states at the end of this phase and is defined as
\begin{align}
\mathcal{D}_{\mathrm{wp}}
\coloneqq
\left\{ \left(u_k,\, d_k,\, y_{k+1}\right) \right\}_{k=0}^{N_{\mathrm{wp}}-1}, \nonumber
\end{align}
where $N_{\mathrm{wp}} \in \mathbb{N}$ is the number of data points. The second dataset follows the warm-up dataset and is used to identify the system parameters $\theta$ in \eqref{eq:rk4_discretization} and is defined as
\begin{align}
    \mathcal{D}_{\mathrm{fit}}\coloneqq
\left\{ \left(u_k,\, d_k,\, y_{k+1}\right) \right\}_{k=N_{\mathrm{wp}}}^{N_{\mathrm{fit}}-1}, \nonumber
\end{align}
where $N_{\mathrm{fit}} \in \mathbb{N}$, with $N_{\mathrm{fit}} \geq N_{\mathrm{wp}}+1$, and $N_{\mathrm{fit}} - N_{\mathrm{wp}}$ is the number of data points for the identification. 

For the grey-box modeling, we adopt two approaches. The first is a two-block optimization method with theoretical guarantees. The second is a heuristic approach that is significantly faster, particularly for buildings with a large number of zones. We then compare the results obtained with the two approaches.

%% file: s3.tex
\subsection{Monotonic Two-Block Optimization Method} \label{sec:theoretical_approach}

Given the data collected for the warm-up phase $\mathcal{D}_{\mathrm{wp}}$, we aim to estimate the state $x_{N_{\mathrm{wp}}} \in \mathbb{R}^n$ at the end of this phase using maximum \textit{a posteriori} (MAP) estimation. In particular, the MAP estimate is defined as
\begin{align}
\hat{x}_{N_{\mathrm{wp}}}
\in
\underset{x_{N_{\mathrm{wp}}}}{\operatorname{arg \, max}}
\;
p(
x_{N_{\mathrm{wp}}}
 \mid 
\mathcal{D}_{\mathrm{wp}}
), \label{eq:MAP_probability}
\end{align}
where $p(x_{N_{\mathrm{wp}}}\mid \mathcal{D}_{\mathrm{wp}})$ denotes the posterior probability density of $x_{N_{\mathrm{wp}}}$ after incorporating the measurements in the warm-up dataset. Thus, $\hat{x}_{N_{\mathrm{wp}}} \in \mathbb{R}^n$ is the state value with the highest posterior probability density given $\mathcal{D}_{\mathrm{wp}}$. 
Following \cite{rawlings2017mpc}, for a given parameter vector $\theta$, the MAP estimation problem \eqref{eq:MAP_probability} can be formulated as
\begin{subequations}
\label{eq:MAP_est}
\begin{align}
(\hat{x}_0,\hat{W})
\in 
\operatorname*{arg\,min}_{\substack{
x_0\in\mathbb{R}^{n}\\
W\in\mathbb{R}^{n\times N_{\mathrm{wp}} }
}}
\quad &
\mathcal{J}_{\mathrm{wp}}(x_0,W; \theta, \mathcal{D}_{\mathrm{wp}})
\\
\text{subject to}\quad
&
x_{k+1}
=
F\!\left(x_k,u_k,d_k;\theta\right)+w_k,
\qquad k \in \{0,1,\ldots,N_{\mathrm{wp}}-1\},
\label{eq:MAP_est_const1} \\
&W \; \coloneqq \; \begin{bmatrix} w_0 & w_1 & \cdots & w_{N_{\mathrm{wp}}-1}\end{bmatrix}\in\mathbb{R}^{n\times N_{\mathrm{wp}}}, \\
&
x_0^{\mathrm{lb}}
\leq x_0
\leq x_0^{\mathrm{ub}},
\label{eq:MAP_est_const2} \\
&
-\bar{w}
\leq w_k
\leq \bar{w},
\qquad k \in \{0,1,\ldots,N_{\mathrm{wp}}-1\}, \label{eq:MAP_est_const3}
\end{align}
\end{subequations}
where
\begin{equation}
\mathcal{J}_{\mathrm{wp}}(x_0,W; \theta, \mathcal{D}_{\mathrm{wp}})
\coloneqq
\left\|x_0-\bar{x}_0\right\|_{P_0^{-1}}^2
+
\sum_{k=1}^{N_{\mathrm{wp}}}
\left\|
y_k-h\!\left(x_k,T_{\mathrm{des},k}\right)
\right\|_{R^{-1}}^2
+
\sum_{k=0}^{ N_{\mathrm{wp}}-1 }
\left\|w_k\right\|_{Q^{-1}}^2, \nonumber
\end{equation}
with $P_0\in\mathbb{R}^{n \times n}$ being the covariance of the Gaussian prior on
the initial state, i.e.\ the confidence placed in the initial guess $\bar{x}_0 \in \mathbb{R}^n$. In \eqref{eq:MAP_est}, $\bar{w} \in \mathbb{R}^n_{> 0}$ is the vector containing the maximum physically admissible magnitude of each component of the process noise, and $x_0^{\mathrm{lb}}, x_0^{\mathrm{ub}} \in \mathbb{R}^n$ are element-wise lower and upper bounds on the initial state, chosen to restrict $x_0$ to physically meaningful values.
Then, the state estimate $\hat{x}_{N_{\mathrm{wp}}} \in \mathbb{R}^n$ is obtained by propagating the system dynamics in \eqref{eq:MAP_est_const1} from the estimated initial state $\hat{x}_0$ using the estimated process-noise sequence $\hat{W}$.

\begin{remark}
    Note that to reduce the number of decision variables in \eqref{eq:MAP_est}, one can
    assume that the process noise $w_k$ enters only every $m$ steps, $m \in \mathbb{N}$, rather than at every
    step, and that it acts only on the zone temperatures.
\end{remark}

To identify the system parameter $\theta$, given $\mathcal{D}_{\mathrm{fit}}$, we aim to minimize the error between the measured data and the model outputs, with the model initialized at the estimated state $\hat{x}_{N_{\mathrm{wp}}}$. Mathematically, we solve
\begin{subequations}
\label{eq:MAP_iden}
\begin{align}
\hat{\theta}
\in 
\operatorname*{arg\,min}_{\substack{
\theta\in\mathbb{R}^{n_\theta}
}}
\quad &
\mathcal{J}_{\mathrm{fit}}\bigl(\theta \,;\, \hat{x}_{N_\mathrm{wp}}, \mathcal{D}_{\mathrm{fit}}\bigr)
\\
\text{subject to}\quad
&
x_{k+1}
=
F\!\left(x_k,u_k,d_k;\theta\right),
\qquad k \in \{N_{\mathrm{wp}}, N_{\mathrm{wp}} + 1, \ldots,N_{\mathrm{fit}}-1\},
\label{eq:MAP_iden_const1} \\
&  x_{N_{\mathrm{wp}}} = \hat{x}_{N_{\mathrm{wp}}},  \\
&
\theta_{\mathrm{min}}
\leq \theta
\leq \theta_{\mathrm{max}}, \label{eq:MAP_iden_const2}
\end{align}
\end{subequations}
where $\theta_{\min},\theta_{\max}\in\mathbb{R}^{n_\theta}$ denote the considered element-wise lower and upper bounds on the system parameters, and
\begin{align}
    \mathcal{J}_{\mathrm{fit}}\bigl(\theta \,;\, \hat{x}_{N_\mathrm{wp}}, \mathcal{D}_{\mathrm{fit}}\bigr) & \coloneqq \sum_{k=N_{\mathrm{wp}}}^{N_{\mathrm{fit}} - 1}
   \left\| y_{k+1}
-
h(x_{k+1}, T_{\mathrm{des}, k+1}) \right\| _2^{2}. \nonumber
\end{align}

Given that the MAP estimation \eqref{eq:MAP_est} depends on $\theta$ and the identification problem \eqref{eq:MAP_iden} depends on the estimated state~$\hat{x}_{N_{\mathrm{wp}}}$, both can be estimated simultaneously by combining \eqref{eq:MAP_est} and \eqref{eq:MAP_iden} into a single optimization problem as 
\begin{align} \label{eq:MAP_overall}
    (\hat{\theta}, \hat{x}_0, \hat{W}) \in \operatorname*{arg\,min}_{\substack{
x_0\in\mathbb{R}^{n}, \theta \in \mathbb{R}^{n_\theta}\\
W\in\mathbb{R}^{n\times N_{\mathrm{wp}} } \\
}}
\quad &
\overbrace{
\mathcal{J}_{\mathrm{wp}}(x_0,W; \theta, \mathcal{D}_{\mathrm{wp}})
+ \mathcal{J}_{\mathrm{fit}}\bigl(\theta; x_{N_{\mathrm{wp}}}(x_0, W, \theta), \mathcal{D}_{\mathrm{fit}}\bigr)
}^{\textstyle \eqqcolon\; \mathcal{J}(x_0,W,\theta)}
\\[0.5ex]
\text{subject to}\quad
& \eqref{eq:MAP_est_const1}\text{--}\eqref{eq:MAP_est_const3}, \, \eqref{eq:MAP_iden_const1}\text{--} \eqref{eq:MAP_iden_const2}. \nonumber
\end{align}
However, the optimization problem \eqref{eq:MAP_overall} may involve a large number of decision variables and can therefore be computationally demanding. To alleviate this computational burden, the decision variables can be divided into two blocks and updated iteratively using a so-called \textit{two-block optimization method} \citep{GRIPPO2000127}. In particular, starting from the parameters $\theta^{(0)}$ obtained from the physical properties, the initial state estimate $x^{(0)}$ and the noise matrix $W^{(0)}$ are obtained by solving~\eqref{eq:MAP_overall} with $\theta=\theta^{(0)}$ fixed. By propagating \eqref{eq:MAP_est_const1}, the state $x^{(0)}_{N_{\mathrm{wp}}}$ is then computed. Next, fixing $x^{(0)}_{N_{\mathrm{wp}}}$, the system parameters are updated to $\theta^{(1)}$ by solving \eqref{eq:MAP_overall} with respect to $\theta$. This procedure is repeated iteratively; see Algorithm~\ref{alg:two_block}. Since the same objective function is minimized in both blocks, the cost generated by the two-block optimization method is nonincreasing over the iterations; see Theorem~\ref{thm:monotone_decrease} below. Consequently, each parameter update does not worsen the overall fit to the measurements, according to the considered objective function $\mathcal{J}$, compared with the parameters from the previous iteration.

\begin{theorem} \label{thm:monotone_decrease} 
The sequence $\bigl\{(x_0^{(j)},W^{(j)},\theta^{(j)})\bigr\}_{j\geq 0}$ generated by Algorithm~\ref{alg:two_block}, where $j \in \mathbb{N}_0$ denotes the iteration number, satisfies \begin{equation} 
\begin{aligned} 
\mathcal{J}\bigl( x_0^{(j+1)},W^{(j+1)},\theta^{(j+1)} \bigr) &\leq \mathcal{J}\bigl( x_0^{(j+1)},W^{(j+1)},\theta^{(j)} \bigr) \leq \mathcal{J}\bigl( x_0^{(j)},W^{(j)},\theta^{(j)} \bigr), \qquad \text{~~for all~~~} j\geq 0. 
\end{aligned} \label{eq:MAP_theorem} 
\end{equation} 
\end{theorem}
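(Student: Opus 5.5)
The argument is the standard block-coordinate descent monotonicity argument, applied to the joint objective $\mathcal{J}$ in \eqref{eq:MAP_overall}. Because the equality constraints \eqref{eq:MAP_est_const1} and \eqref{eq:MAP_iden_const1} uniquely determine the trajectory $x_1,\dots,x_{N_{\mathrm{fit}}}$ once $(x_0,W,\theta)$ is fixed, I would first eliminate the states. This turns $\mathcal{J}$ into a function of $(x_0,W,\theta)$ alone, defined on the product feasible set $\mathcal{X}\times\Theta$. Here $\mathcal{X}$ is the box given by \eqref{eq:MAP_est_const2}--\eqref{eq:MAP_est_const3}, and $\Theta$ is the box given by \eqref{eq:MAP_iden_const2}.

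The key structural point is that the constraints separate across the two blocks. Whether $(x_0,W)$ is feasible does not depend on $\theta$, and vice versa. As a consequence, the current iterate of one block is always feasible for the subproblem in the other block.

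\textbf{Step 1: the $(x_0,W)$-update.} At iteration $j$, Algorithm~\ref{alg:two_block} fixes $\theta=\theta^{(j)}$ and sets $(x_0^{(j+1)},W^{(j+1)})$ to a minimizer of $\mathcal{J}(\cdot,\cdot,\theta^{(j)})$ over $\mathcal{X}$. The previous pair $(x_0^{(j)},W^{(j)})\in\mathcal{X}$ is a feasible candidate for this subproblem. Optimality of the minimizer therefore gives the right inequality in \eqref{eq:MAP_theorem}.

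\textbf{Step 2: the $\theta$-update.} The algorithm fixes $(x_0^{(j+1)},W^{(j+1)})$, and hence the propagated $x_{N_{\mathrm{wp}}}$, and minimizes $\mathcal{J}(x_0^{(j+1)},W^{(j+1)},\cdot)$ over $\Theta$. The point $\theta^{(j)}\in\Theta$ is feasible, so the minimizer's value is at most the value at $\theta^{(j)}$. This gives the left inequality.

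Chaining the two inequalities yields \eqref{eq:MAP_theorem} for every $j\ge 0$. For the base case, I would note that $\theta^{(0)}$, the physically based parameter vector, satisfies the bounds $\theta_{\min}\le\theta^{(0)}\le\theta_{\max}$, which I take as an assumption. The first $(x_0,W)$-step needs no prior iterate, and from $j=0$ onward the pairs $(x_0^{(j)},W^{(j)})$ are feasible because they were produced as minimizers.

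\textbf{Expected obstacle.} The subtle point is making sure that the $\theta$-step actually minimizes the same function $\mathcal{J}$, and not only $\mathcal{J}_{\mathrm{fit}}$ with $\hat{x}_{N_{\mathrm{wp}}}$ frozen. Changing $\theta$ also changes the warm-up term $\mathcal{J}_{\mathrm{wp}}$ through the dynamics on the warm-up horizon, and it changes $x_{N_{\mathrm{wp}}}$ itself. I would therefore read Algorithm~\ref{alg:two_block} as minimizing the full $\mathcal{J}(x_0^{(j+1)},W^{(j+1)},\theta)$ in $\theta$, as the text before the theorem states ("the same objective function is minimized in both blocks"). I would also remark that the argument only requires each block step to return a point that is no worse than the previous iterate. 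So exact global minimization is not essential: a local solver warm-started at the previous iterate, which never increases the cost, suffices. Existence of minimizers I would take from continuity of $\mathcal{J}$, which holds because the saturation is smoothed by $\sigma_\varepsilon$ and $\tilde{\eta}^{(i)}>0$, together with compactness of the box constraints.
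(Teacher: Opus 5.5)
Your argument is correct and is essentially the paper's proof: in each block the previous iterate is feasible for the subproblem, so optimality (or a warm-started local solve that does not worsen the cost) gives one inequality per block, and chaining the two yields the claim. One small slip: at $j=0$ the comparison point $(x_0^{(0)},W^{(0)})=(\bar{x}_0,0)$ is the initialization rather than a minimizer, so its feasibility requires $0\in[-\bar{w},\bar{w}]$ (true since $\bar{w}>0$) and $x_0^{\mathrm{lb}}\le\bar{x}_0\le x_0^{\mathrm{ub}}$, which is an assumption the paper also leaves implicit.
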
 
\begin{proof}
At iteration $j \in \mathbb{N}_{0}$ of Algorithm~\ref{alg:two_block}, Block~1 fixes $\theta^{(j)}$ and minimizes $\mathcal{J}(x_0,W,\theta^{(j)})$ with respect to $x_0$ and $W$. Since the previous iterate $(x_0^{(j)},W^{(j)})$ is feasible for this subproblem, the optimality of $(x_0^{(j+1)},W^{(j+1)})$ implies that \begin{equation} \mathcal{J}\bigl( x_0^{(j+1)},W^{(j+1)},\theta^{(j)} \bigr) \leq \mathcal{J}\bigl( x_0^{(j)},W^{(j)},\theta^{(j)} \bigr). \label{eq:MAP_theorem_proof_1} \end{equation} Similarly, in Block~2, $x_0^{(j+1)}$ and $W^{(j+1)}$ are fixed, and $\mathcal{J}(x_0^{(j+1)},W^{(j+1)},\theta)$ is minimized with respect to $\theta$. Since $\theta^{(j)}$ is feasible for this subproblem, the optimality of $\theta^{(j+1)}$ yields \begin{equation} \mathcal{J}\bigl( x_0^{(j+1)},W^{(j+1)},\theta^{(j+1)} \bigr) \leq \mathcal{J}\bigl( x_0^{(j+1)},W^{(j+1)},\theta^{(j)} \bigr). \label{eq:MAP_theorem_proof_2} \end{equation} Combining \eqref{eq:MAP_theorem_proof_1} and \eqref{eq:MAP_theorem_proof_2} gives \eqref{eq:MAP_theorem}. It is worth noting that global optimality of each block is not required here because each subproblem is warm-started from the solution obtained in the previous iteration assuming that local solvers do not return a solution whose objective value is greater than that of the warm start.
\end{proof}

\begin{algorithm}[t]
\caption{Two-block optimization for joint state estimation and parameter identification}
\label{alg:two_block}
\begin{algorithmic}[1]

\Require Warm-up dataset $\mathcal{D}_{\mathrm{wp}}$, identification dataset
$\mathcal{D}_{\mathrm{fit}}$; initial parameters $\theta^{(0)}$ satisfying
\eqref{eq:MAP_iden_const2}; prior $(\bar{x}_0,P_0)$; tolerance $\epsilon\in\mathbb{R}_{>0}$; maximum number of iterations
$j_{\max}\in\mathbb{N}$ %weights $Q$, $R$;

\Ensure Estimates $\hat{\theta}$, $\hat{x}_0$, $\hat{W}$

\Statex \hspace{\algorithmicindent}

\State Initialize $x_0^{(0)}\gets\bar{x}_0$, \;
$W^{(0)}\gets 0$

\For{$j=0,1,\dots,j_{\max}-1$}

\State \textbf{Block 1 (state estimation).} Fix $\theta=\theta^{(j)}$ and
warm-start with $\bigl(x_0^{(j)},W^{(j)}\bigr)$:

\Statex \hspace{\algorithmicindent}
$\displaystyle
\bigl(x_0^{(j+1)},W^{(j+1)}\bigr)
\in \operatorname*{arg\,min}_{x_0,W}\;
\mathcal{J}\bigl(x_0,W,\theta^{(j)}\bigr)$

\Statex \hspace{\algorithmicindent}\hspace{\algorithmicindent}
$\text{s.t.~~~}\eqref{eq:MAP_est_const1}\text{--}\eqref{eq:MAP_est_const3},
\eqref{eq:MAP_iden_const1}$

\State \textbf{Block 2 (parameter identification).} Fix
$x_0=x_0^{(j+1)}$ and $W=W^{(j+1)}$, and warm-start with $\theta^{(j)}$:

\Statex \hspace{\algorithmicindent}
$\displaystyle
\theta^{(j+1)}
\in\operatorname*{arg\,min}_{\theta}\;
\mathcal{J}\bigl(x_0^{(j+1)},W^{(j+1)},\theta\bigr)$

\Statex \hspace{\algorithmicindent}\hspace{\algorithmicindent}
$\text{s.t.~~~}\eqref{eq:MAP_est_const1},
\eqref{eq:MAP_iden_const1}\text{--}\eqref{eq:MAP_iden_const2}$

\If{$\lVert\theta^{(j+1)}-\theta^{(j)}\rVert\leq\epsilon$}
    \State \textbf{break}
\EndIf

\EndFor

\State $\hat{\theta}\gets\theta^{(j+1)}$, \;
$\hat{x}_0\gets x_0^{(j+1)}$, \;
$\hat{W}\gets W^{(j+1)}$

\State \Return $\hat{\theta}$, $\hat{x}_0$, $\hat{W}$

\end{algorithmic}
\end{algorithm}

The optimization problem \eqref{eq:MAP_overall} is applied to nine days of measurements collected from the Rensen building, consisting of a four-day warm-up phase and a five-day identification phase. The estimated state trajectory during the warm-up phase, obtained by propagating the system dynamics using the estimated initial state $\hat{x}_0$ and process-noise matrix $\hat{W}$, is depicted in Figure~\ref{fig:MAP_grey_box_wp}. The open-loop simulation over the identification phase, performed using the identified parameter vector $\hat{\theta}$ and initialized at $\hat{x}_{N_{\mathrm{wp}}}$, is depicted in Figure~\ref{fig:MAP_grey_box_iden}. Moreover, the per-zone root mean square error (RMSE) and maximum error between the model outputs and the corresponding measurements, together with the overall computational time, are reported in Table~\ref{tab:two_block_open_loop_rmse}. The identification is performed on a laptop equipped with a 12th Gen Intel Core i7-12700 processor and 32 GB of RAM.

\begin{figure}
     \centering
     \begin{subfigure}[t]{0.44\linewidth}
         \centering
         \includegraphics[height=0.25\textheight, keepaspectratio]{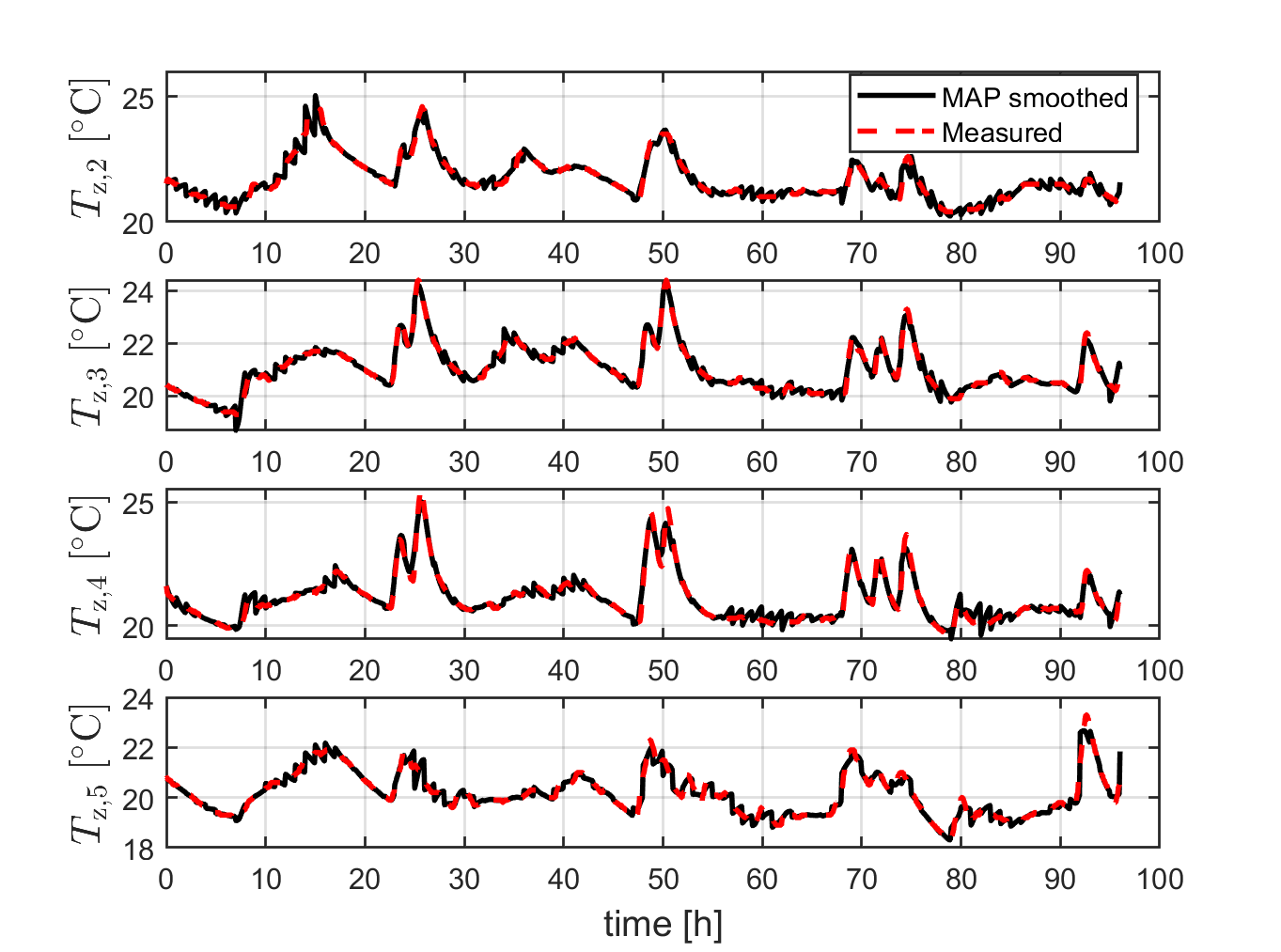}
         \caption{MAP estimation is applied to four days of measurements to estimate the state $\hat{x}_{N_{\mathrm{wp}}}$ at the end of this phase, which are then used in the identification step.}
         \label{fig:MAP_grey_box_wp}
     \end{subfigure}
     \hfill
    \begin{subfigure}[t]{0.48\linewidth}
        \centering
        \includegraphics[height=0.25\textheight, keepaspectratio]{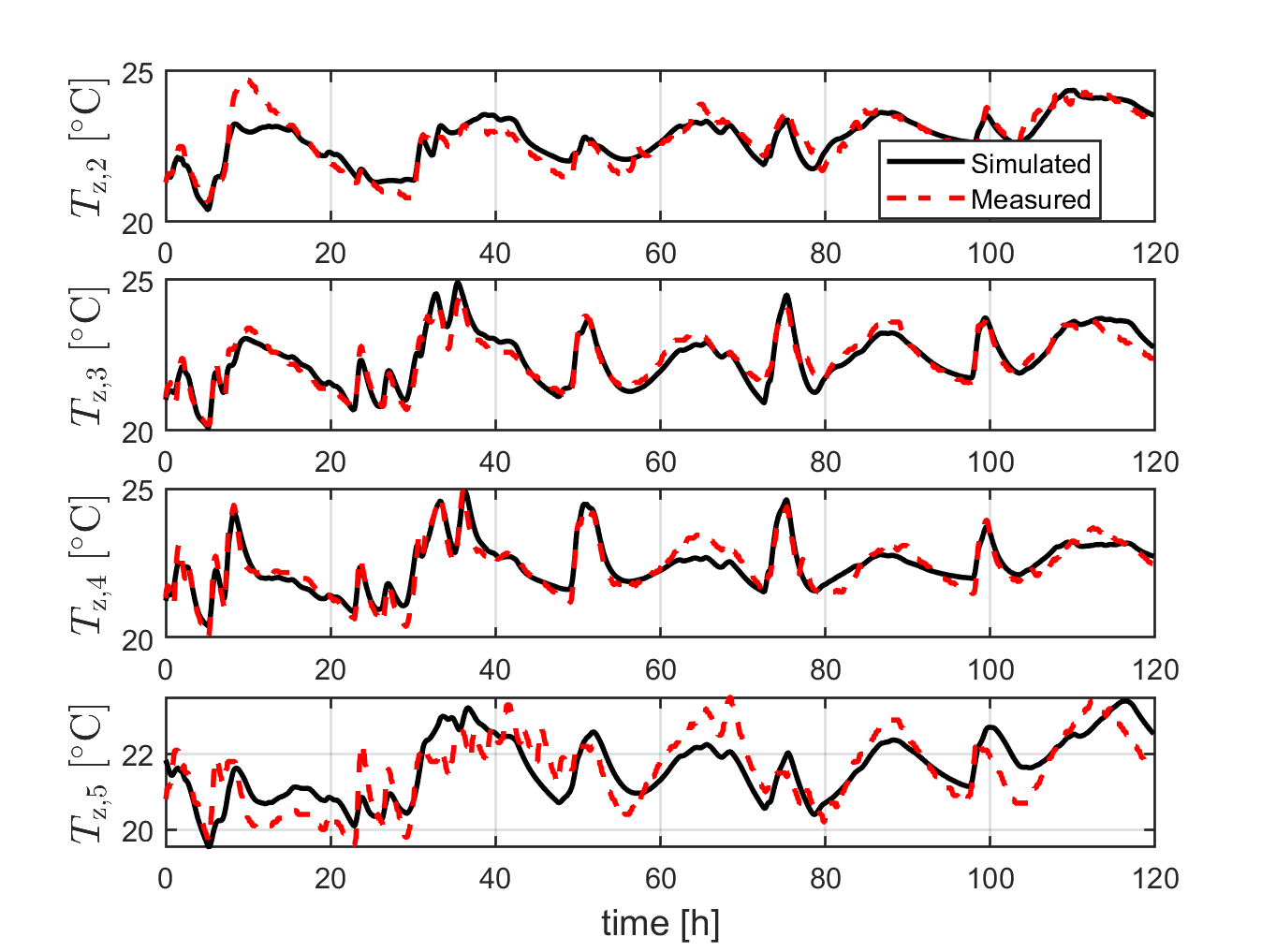}
        \caption{Five-day open-loop simulation initialized from $\hat{x}_{N_{\mathrm{wp}}}$ using the optimized $\theta$, together with the corresponding real measurements.}
        \label{fig:MAP_grey_box_iden}
    \end{subfigure}
        \caption{Monotonic two-block optimization method discussed in Section~\ref{sec:theoretical_approach} for identifying the system parameters of the case-study building.}
    \label{fig:MAP_grey_box}
\end{figure}

\begin{table}[t]
    \centering
    \caption{Per-zone error between modeled and measured zone temperatures using the system parameters estimated with \textit{the two-block optimization method}.}
    \label{tab:two_block_open_loop_rmse}
    \begin{tabular}{ccc}
        \toprule
        Zone & RMSE & maximum error \\
        & [\si{\degreeCelsius}]
        & [\si{\degreeCelsius}] \\
        \midrule
        2 & $0.392$ & $1.723$ \\
        3 & $0.298$ & $0.913$ \\
        4 & $0.317$ & $0.863$ \\
        5 & $0.621$ & $1.799$ \\
        \midrule
        \multicolumn{2}{r}{Total computation time} &  \SI{3}{\hour} \\
        \bottomrule
    \end{tabular}
\end{table}

However, even when using the two-block optimization method, solving the optimization problem~\eqref{eq:MAP_overall} remains computationally demanding, particularly for buildings with many zones. For the six-zone case-study building, the optimization required approximately $3~\si{\hour}$ to converge to a satisfactory solution. In the next subsection, we therefore propose an alternative method that, although it comes without theoretical guarantees, yields results very close to those of the two-block optimization method at a substantially lower computational time.

\subsection{Heuristic Iterative EKF–Identification Method} \label{sec:practical_approach}
In this method, the idea is similar to that of the two-block optimization method, but instead of solving the optimization problem \eqref{eq:MAP_est} with a large number of decision variables for MAP estimation, we use an observer, namely Extended Kalman Filter (EKF). In particular, for nonlinear systems, the state estimate obtained from the EKF is an approximation to the solution of the MAP estimation problem \eqref{eq:MAP_probability}. However, since the observer itself requires the model and its parameters, we propose a heuristic iterative scheme. Starting from an initial vector of parameter guesses, we run the EKF to obtain a state estimate. We then solve the identification problem, which minimizes the weighted error between the model outputs initialized at the estimated state and the real measurements. Using the updated parameters, the EKF is subsequently applied to obtain a new state estimate, and this procedure is repeated iteratively. 

\subsubsection{Warm-Up Phase} \label{sec:sec3:warm-up}
For the discretized system \eqref{eq:EKF_ct_system} with process noise $w_k \sim \mathcal{N}(0,Q)$ and the discrete-time measurement model \eqref{eq:EKF_measurement} with measurement noise $v_k \sim \mathcal{N}(0,R)$, given a parameter vector $\theta$ and the warm-up dataset $\mathcal{D}_{\mathrm{wp}}$, the EKF proceeds as follows (\cite{EKF}). Initially, we set
\begin{align}
\hat{x}_{0|0} \coloneqq \mathbb{E}[x_0], ~~~
P_{0|0} \coloneqq
\mathbb{E}[(x_0-\hat{x}_{0|0})(x_0-\hat{x}_{0|0})^\top], \nonumber
\end{align}
where $\mathbb{E}[\cdot]$ denotes the expected value.
For $k \in \{0,1,\ldots, N_{\mathrm{wp}} - 1\}$, the following prediction and update steps are performed iteratively:
\begin{enumerate}
    \item Prediction: Over the interval $t\in[t_k,t_{k+1}]$, with $t_k \coloneqq k t_{\mathrm{s}}$, the state estimate at $t_{k+1}$ is predicted as
   \begin{equation}
    \hat{x}_{k+1 | k} =
    F \left(\hat{x}_{k | k} , u_k, d_k; \theta \right). \nonumber
    \end{equation}
    Moreover, the corresponding error covariance matrix is predicted as
    \begin{equation}
        P_{k+1 | k} = A_k P_{k | k} A_k^{\top} + Q,\text{~~~~~~~where~~~~~~~}
    A_k \coloneqq
    \left.
    \frac{\partial F}{\partial x}
    \right|_{x=\hat{x}_{k |k},u=u_k,d=d_k}. \nonumber
    \end{equation}
    \item Update: Using the measurement $y_{k+1}$ at time $t_{k+1}$, the predicted state estimate and covariance matrix are corrected as
    \begin{align}
    \hat{x}_{k+1|k+1}
    &=
    \hat{x}_{k+1|k}
    +
    K_{k+1}
    \big(
    y_{k+1} - h(\hat{x}_{k+1|k}, T_{\mathrm{des},k+1})
    \big), \nonumber \\
    P_{k+1|k+1}
    &=
    (I_n-K_{k+1}H_{k+1})P_{k+1|k}, \nonumber
    \end{align}
    where $I_n \in \mathbb{R}^{n \times n}$ denotes the identity matrix, and
    \begin{align}
    K_{k+1}
    &\coloneqq
    P_{k+1|k}H_{k+1}^{\top}
    \left(
    H_{k+1}P_{k+1|k}H_{k+1}^{\top}+R
    \right)^{-1}, \nonumber \\[2mm]
    &\text{with~~~~~~~}
    H_{k+1}
    \coloneqq
    \left.
    \frac{\partial h}{\partial x}
    \right|_{x=\hat{x}_{k+1|k},T_{\mathrm{des}}=T_{\mathrm{des},k+1}}. \nonumber
    \end{align}
\end{enumerate}

Figure~\ref{fig:grey_box_wu} illustrates the four-day warm-up phase and presents the EKF-filtered state estimates $\hat{x}_{k | k}$, with \mbox{$k \in \{0, 1,\ldots, N_{\mathrm{wp}} \}$}, together with the measured zone temperatures for zones 2, 3, 4, and 5. At the end of this period, an estimate of the state vector is obtained, denoted by \mbox{$\hat{x}_{\mathrm{EKF}} \coloneqq \hat{x}_{N_{\mathrm{wp}} | N_{\mathrm{wp}}} \in \mathbb{R}^n$}.

\subsubsection{Identification Phase}
In the identification phase, we categorize the system parameters into two groups as
\begin{equation}
    \theta
    \coloneqq
    \begin{bmatrix}
        \theta_{\mathrm{bldg}}^\top &
        \theta_{\mathrm{rad}}^\top
    \end{bmatrix}^\top \in \mathbb{R}^{n_{\theta}}, \nonumber
\end{equation}
where $\theta_{\mathrm{bldg}} \in \mathbb{R}^{n_{\theta_{\mathrm{bldg}}}}$ and $\theta_{\mathrm{rad}} \in \mathbb{R}^{n_{\theta_{\mathrm{rad}}}}$, with $n_{\theta_{\mathrm{bldg}}} + n_{\theta_{\mathrm{rad}}} = n_{\theta}$, denote the
building and radiator parameters, respectively. For each group, we minimize the weighted residuals between the model outputs initialized at $\hat{x}_{\mathrm{EKF}}$ and the corresponding measurements with respect to the target parameters. In particular, for each parameter group
$g \in \{\mathrm{bldg},\mathrm{rad}\}$, we solve
\begin{equation}
\begin{aligned}
\min_{\theta} \quad
& 
J\bigl(\theta \,;\, \hat{x}_{\mathrm{EKF}}, \mathcal{D}_{\mathrm{fit}}\bigr)
\\
\text{subject to}\quad &
x_{k+1}
=
F(x_k, u_k,d_k; \theta),
\qquad k \in \{N_{\mathrm{wp}},N_{\mathrm{wp}}+1,\ldots,N_{\mathrm{fit}}-1\},
\\
&
x_{N_{\mathrm{wp}}} = \hat{x}_{\mathrm{EKF}},
\\
&
\theta_{g, \min} \leq \theta_g \leq \theta_{g, \max},
\end{aligned}
\label{eq:gray_box_weighted_optimization}
\end{equation}
where
\begin{align}
    J\bigl(\theta \,;\, \hat{x}_{\mathrm{EKF}}, \mathcal{D}_{\mathrm{fit}}\bigr) \coloneqq \frac{1}{\sum_{k=N_{\mathrm{wp}}}^{N_{\mathrm{fit}}-1}\sum_{i\in\mathcal{Z}} w^g_{i,k}}\sum_{k=N_{\mathrm{wp}}}^{N_{\mathrm{fit}}-1}
    \sum_{i\in\mathcal{Z}}
    w^g_{i,k}
    \left(
    T_{\mathrm{z},k+1}^{(i)}
    -
    T_{\mathrm{meas},k+1}^{(i)}
    \right)^2. \nonumber
\end{align}
Here, the weight $w^g_{i,k} \in \mathbb{R}_{> 0}$ reflects
the thermal regime of each sample. In particular, when a radiator valve is open, the zone-temperature dynamics are more strongly influenced by the radiator heat output. Therefore, these samples are informative for identifying the radiator parameters, but they are relatively less informative for identifying the building parameters. Conversely, when the valves are
closed, the zone temperature dynamics are governed by the
building thermal dynamics so these samples are most
informative for the building parameters. Each identification therefore
weights the residuals toward the regime in which its parameters are
excited. Therefore, for identifying the building parameters, we define the weighting constant $w^{\mathrm{bldg}}_{i,k} \in \mathbb{R}_{> 0}$ at each time instant $k$ and for each zone $i$ as
\begin{equation}
    w^{\mathrm{bldg}}_{i,k} \coloneqq w_{\mathrm{floor}} + (1 - w_{\mathrm{floor}})\,
    \bigl(1 - h_{i,k}\bigr) \;\in\; [w_{\mathrm{floor}},\, 1] \subset \mathbb{R}, \nonumber
\end{equation}
where $0 \leq w_{\mathrm{floor}} \ll 1$ denotes the minimum weight assigned to samples corresponding to the less informative thermal regime, and $h_{i,k}$ denotes the degree to which temperature of zone $i$ at time $k$ is dominated by the heating system. Specifically,
\begin{equation}
h_{i,k} \coloneqq e^{-\ell_{i,k} (t_s/\tau_{w})} \;\in\; [0,\,1] \subset \mathbb{R}, \nonumber
\end{equation}
where $\ell_{i,k} \in \mathbb{N}_0$ denotes the number
of samples elapsed since the valve of zone $i$ was last open
($\ell_{i,k}=0$ while the valve is open, and $\ell_{i,k}=\infty$ if no
opening has occurred yet), and $\tau_{w} ~[\si{ \second }]$ is a tunable time constant reflecting how quickly the heating influence becomes negligible after valve closure. Hence, $h_{i,k}=1$ when the valve of zone $i$ is open, and it decays exponentially to zero after the valve closes. Note that the exponential decay of $h_{i,k}$ reflects that the radiator continues to release stored heat for some time even after the valve closes. 
Similarly, for identifying the radiator
parameters, we define $w^{\mathrm{rad}}_{i,k} \in \mathbb{R}_{> 0}$ as
\begin{equation}
    w^{\mathrm{rad}}_{i,k} \coloneqq w_{\mathrm{floor}} + (1 - w_{\mathrm{floor}})\,
    h_{i,k}. \nonumber
\end{equation}

\begin{remark}
    By introducing this weighting scheme and setting $w_{\mathrm{floor}} \approx 0$, we reduce the number of (irrelevant) data samples used for identification. This significantly improves the computational efficiency of the identification problem.
\end{remark}

The overall algorithm is summarized in Algorithm \ref{alg:ekf-nlp}. For the case-study building, Algorithm~\ref{alg:ekf-nlp} is applied using real zone-temperature measurements over nine days: a four-day warm-up phase followed by a five-day identification phase. Figure~\ref{fig:grey_box_cost} shows the weighted RMSE, defined as the square root of the cost in \eqref{eq:gray_box_weighted_optimization}, associated with the building-parameter identification weights $w^{\mathrm{bldg}}_{i,k}$, together with the relative change in $\theta$ across iterations. Additionally, Figure~\ref{fig:grey_box_ol} and Figure~\ref{fig:gre_box_phase_radiator} show the open-loop modeled zone temperatures compared with the corresponding measurements, together with the building and radiator weights used for identification, respectively. The open-loop trajectories are initialized at $\hat{x}_{\mathrm{EKF}}$ and use the estimated parameters $\hat{\theta}$ obtained after Algorithm~\ref{alg:ekf-nlp} has converged. Table~\ref{tab:per_zone_open_loop_fit} reports the RMSE and weighted RMSE between the modeled and measured zone temperatures for each zone, together with the total computation time. By comparing Tables~\ref{tab:per_zone_open_loop_fit} and~\ref{tab:two_block_open_loop_rmse}, it can be observed that the parameters obtained with the proposed heuristic method yield almost the same error as those obtained with the two-block optimization method, while the computation time is substantially lower.

\begin{algorithm}
\caption{Heuristic EKF–NLP iterative method}
\label{alg:ekf-nlp}
\begin{algorithmic}[1] 
\Require Warm-up dataset $\mathcal{D}_{\mathrm{wp}}$,
         identification dataset $\mathcal{D}_{\mathrm{fit}}$; initial parameters $\theta^{0}$; prior $(x_{0}, P_{0})$; tolerance $\varepsilon \in \mathbb{R}_{> 0}$; maximum number of iterations $j_{\mathrm{max}} \in \mathbb{N}$
\Ensure Parameter estimate $\hat{\theta}$
\Statex \hspace{\algorithmicindent}
\State $\hat{x}_{\mathrm{EKF}} \gets \textsc{WarmUp-Phase}(\theta^{0}, x_{0}, P_{0}, \mathcal{D}_{\mathrm{wp}})$
        \Comment{EKF} %filtered handoff state
\State $J_0 \gets J(\theta^{0} \,;\, \hat{x}_{\mathrm{EKF}}, \mathcal{D}_{\mathrm{fit}})$
\State $\hat{\theta} \gets \theta^{0}$
\For{$j = 1,2, \dots, j_{\mathrm{max}}$}
    \State $\theta^{+} \gets \textsc{Identification-Phase}(\hat{x}_{\mathrm{EKF}}, \mathcal{D}_{\mathrm{fit}}, \hat{\theta})$
        \Comment{solve \eqref{eq:gray_box_weighted_optimization} separately for the building and radiator parameters with $\hat{x}_{\mathrm{EKF}}$ fixed}
    \State $\hat{x}_{\mathrm{EKF}}^+ \gets \textsc{WarmUp-Phase}(\theta^+, x_{0}, P_{0}, \mathcal{D}_{\mathrm{wp}})$
    \State $J^{+} \gets J(\theta^{+} \,;\, \hat{x}_{\mathrm{EKF}}^+, \mathcal{D}_{\mathrm{fit}})$
    \Statex \hspace{\algorithmicindent} \textit{--- Acceptance test ---}
    \If{$J^{+} \le J_{0}$} \label{alg:ekf-nlp:line8} 
        \State $J_{0} \gets J^{+}$
        \If{$\lVert \theta^{+} - \hat{\theta} \rVert / \lVert \hat{\theta} \rVert < \varepsilon$}
            \State $\hat{\theta} \gets \theta^{+}$
            \State \textbf{break}
        \EndIf
        \State $\hat{\theta} \gets \theta^{+}$
        \State $\hat{x}_{\mathrm{EKF}} \gets \hat{x}_{\mathrm{EKF}}^+$
    \EndIf
\EndFor
\State \Return $\hat{\theta}$
\end{algorithmic}
\end{algorithm}

\begin{remark}
    Note that, unlike the two-block optimization method discussed in Section~\ref{sec:theoretical_approach}, this approach does not guarantee a nonincrease in the cost over the iterations. This is because the objective function $J$ in \eqref{eq:gray_box_weighted_optimization} changes at each iteration as the EKF state estimate $\hat{x}_{\mathrm{EKF}}$ is updated. Therefore, in Line~\ref{alg:ekf-nlp:line8} of Algorithm~\ref{alg:ekf-nlp}, we discard the iterations for which the cost increases.
\end{remark}

\begin{figure}
    \centering
    \includegraphics[width=0.55\linewidth]{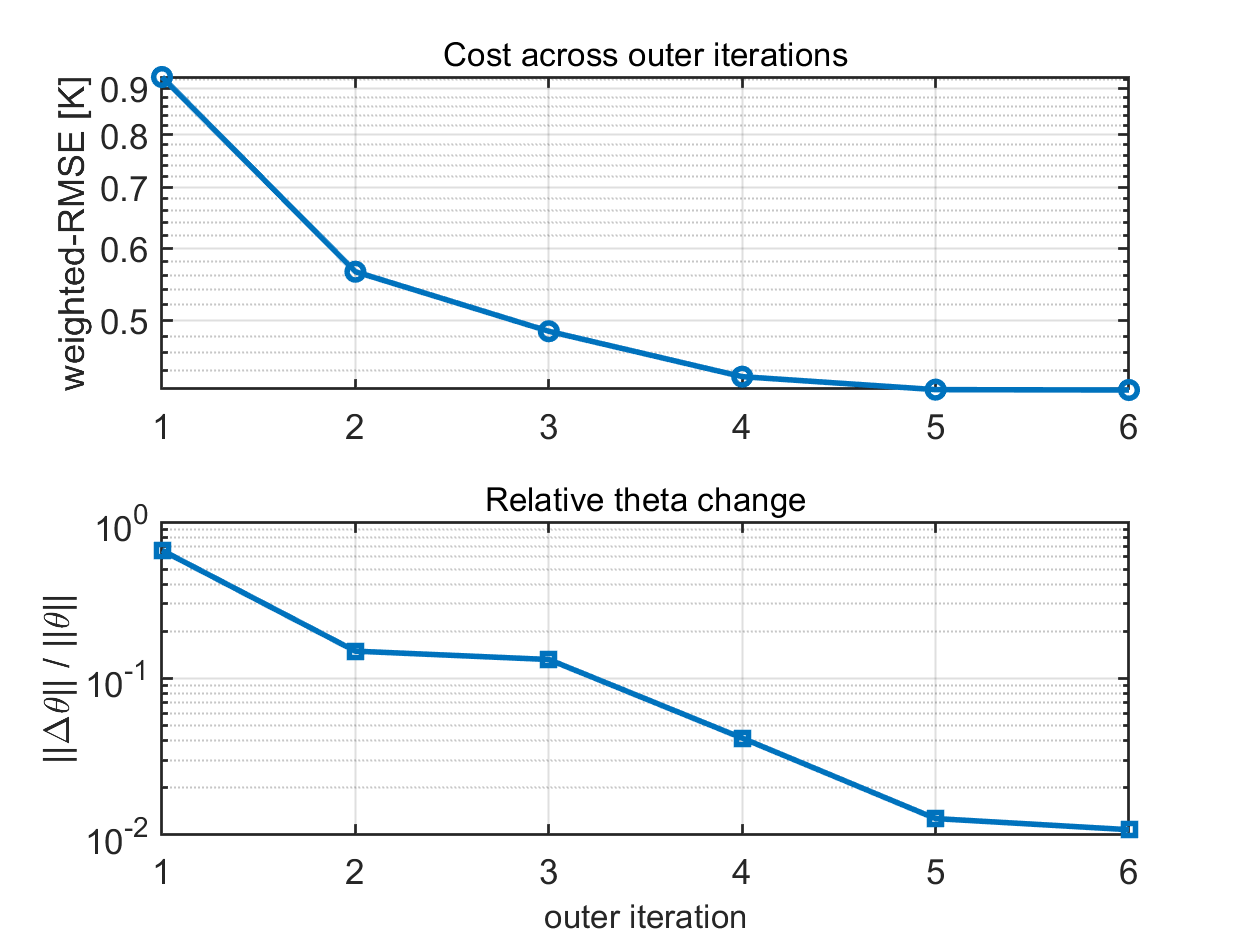}
    \caption{Weighted root-mean-square error (RMSE), computed as the square root of the cost in \eqref{eq:gray_box_weighted_optimization} using the building-parameter identification weights, and relative change in the targeted parameters $\theta$ over the outer-loop iterations of the heuristic EKF--identification method.}
    \label{fig:grey_box_cost}
\end{figure}

\begin{figure}
     \centering
     \begin{subfigure}[t]{0.44\linewidth}
         \centering
         \includegraphics[height=0.25\textheight, keepaspectratio]{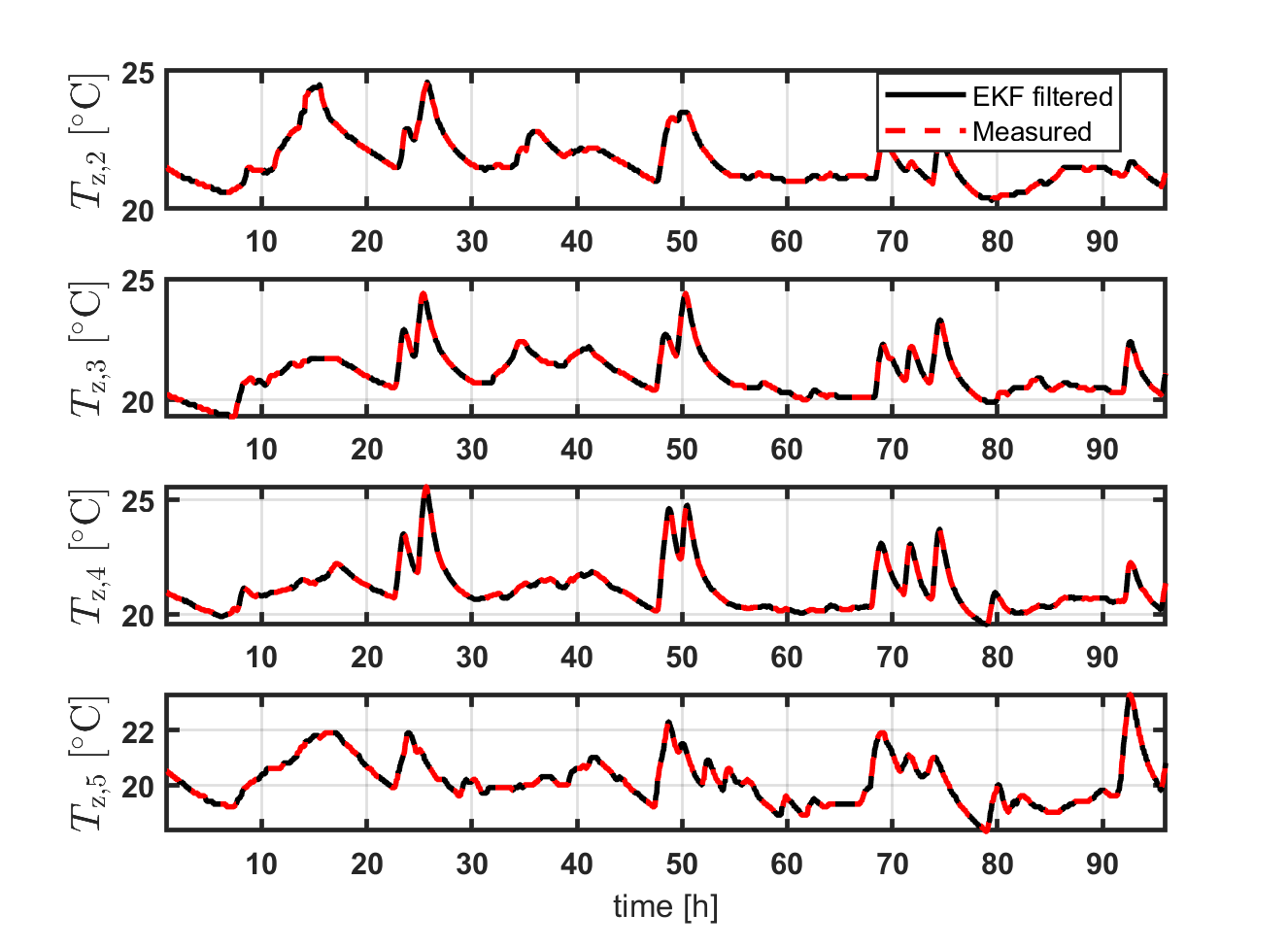}
         \caption{The warm-up phase (four days) uses an extended Kalman filter (EKF) to estimate the states at the end of this phase, which are then used to initialize the identification phase.}
         \label{fig:grey_box_wu}
     \end{subfigure}
     \hfill
    \begin{subfigure}[t]{0.48\linewidth}
        \centering
        \includegraphics[height=0.25\textheight, keepaspectratio]{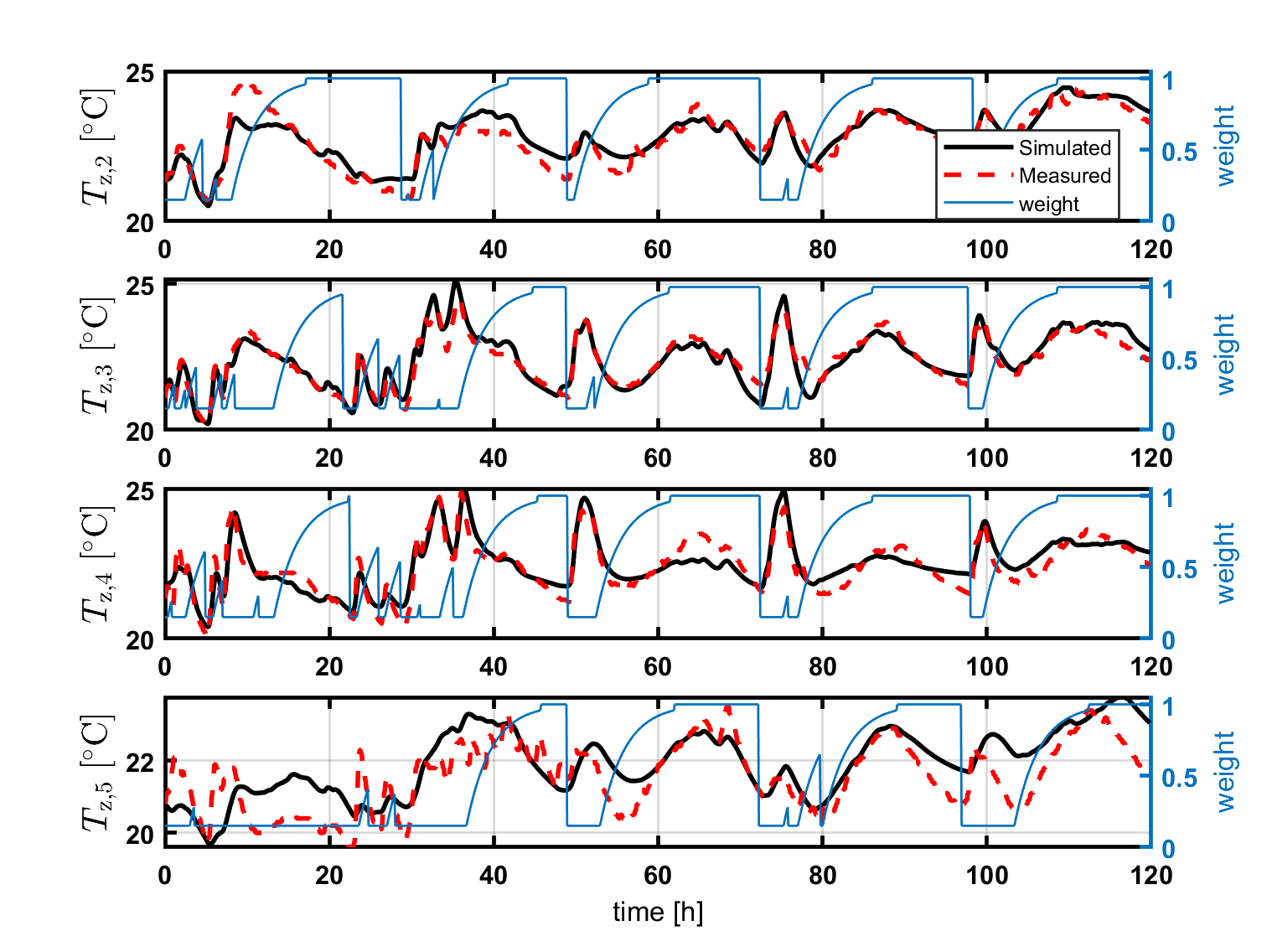}
        \caption{Five-day open-loop simulation initialized from the final warm-up state using the optimized $\theta$, together with the corresponding real measurements and weights used in the identification phase for the building-related system parameters.}
        \label{fig:grey_box_ol}
    \end{subfigure}
        \caption{The proposed heuristic iterative EKF–identification method discussed in Section~\ref{sec:practical_approach} for identifying the building-related system parameters of the case-study building.}
    \label{fig:gre_box_phase}
\end{figure}

\begin{figure}
    \centering
    \includegraphics[width=0.6\linewidth]{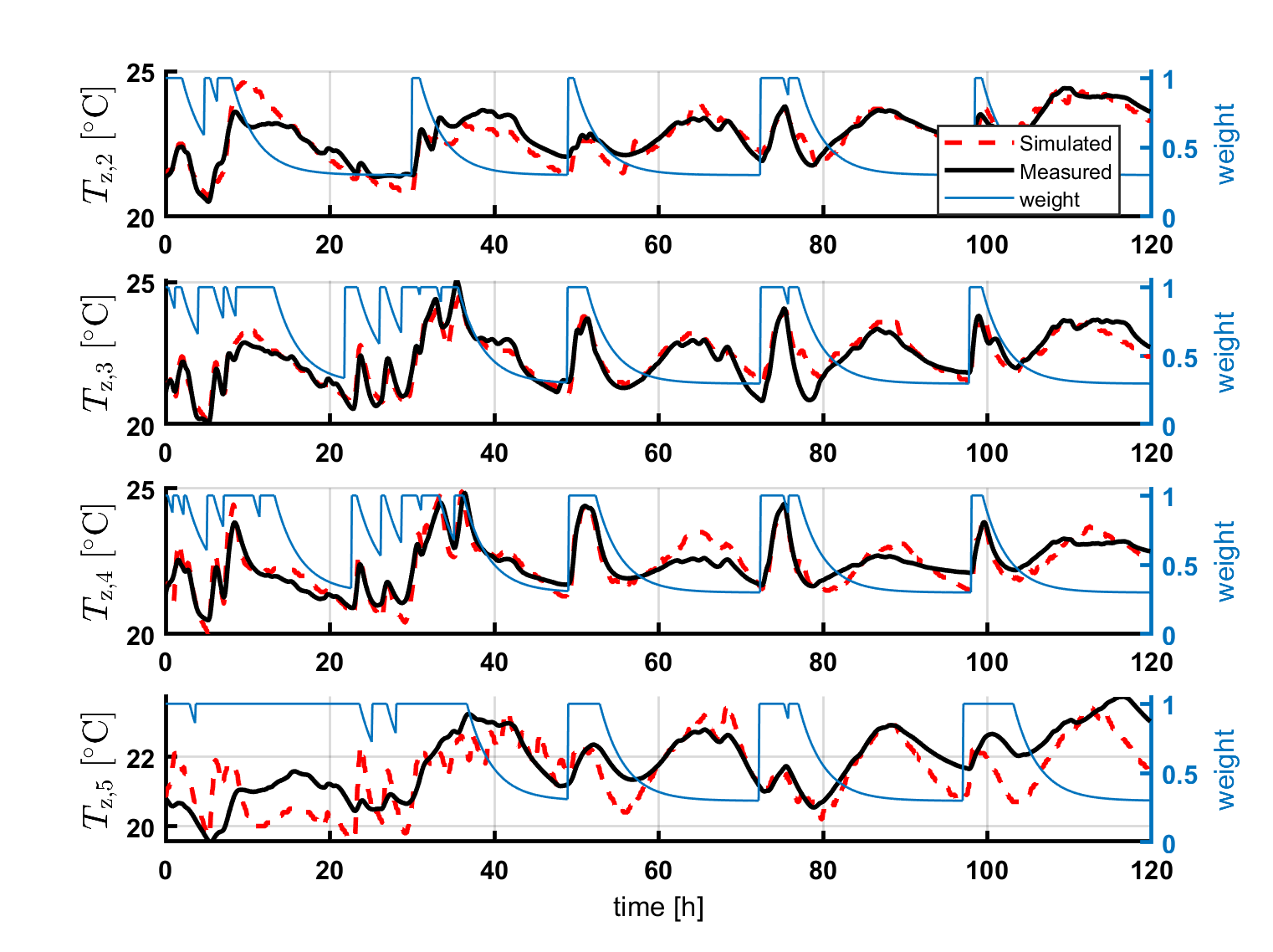}
    \caption{Five-day open-loop simulation, corresponding measurements, and identification weights used to estimate the radiator-related system parameters with the proposed heuristic iterative EKF–identification method.}
    \label{fig:gre_box_phase_radiator}
\end{figure}

\begin{table}[t]
    \centering
    \caption{Per-zone errors between the modeled and measured zone temperatures using the parameters $\theta$ estimated with the \textit{heuristic iterative EKF–identification method}.}
    \label{tab:per_zone_open_loop_fit}
    \begin{tabular}{ccccc}
        \toprule
        Zone & RMSE & Weighted RMSE (building) & Weighted RMSE (radiator) & maximum error \\
        & [\si{\degreeCelsius}] & [\si{\degreeCelsius}] & [\si{\degreeCelsius}]
        & [\si{\degreeCelsius}] \\
        \midrule
        2 & $0.385$ & $0.363$ & 0.405 & $1.600$ \\
        3 & $0.332$ & $0.298$ & 0.361 & $1.525$ \\
        4 & $0.359$ & $0.367$ & 0.352 & $1.076$ \\
        5 & $0.725$ & $0.634$ & 0.784 & $2.212$ \\
        \midrule
        \multicolumn{4}{r}{Total computation time} & < \SI{1}{\hour} \\
        \bottomrule
    \end{tabular}
\end{table}

\subsection{Effect of Hydraulic Interactions on Grey-Box Modeling Accuracy} \label{sec:effec_hydraulic}
To motivate the inclusion of hydraulic interactions among the radiators discussed in Section~\ref{sec:hydraulic_model}, we estimate the parameters of model~\eqref{eq:overall_ct_system} using the two-block optimization method and evaluate its errors on a dataset different from that used for identification. The evaluation consists of a five-day warm-up phase, followed by a seven-day open-loop simulation. The resulting errors are then compared with those obtained using a model in which hydraulic interactions are neglected. In particular, in the latter model, a nominal water flow rate $\dot{m}^{(i)}_{\mathrm{wat}, \mathrm{nom}} \in \mathbb{R}$ is assigned to each radiator~$i$, and its actual flow rate is assumed to satisfy
\begin{align}
\dot{m}^{(i)}_{\mathrm{wat}}
=
\dot{m}^{(i)}_{\mathrm{wat}, \mathrm{nom}}\eta^{(i)}, \quad i \in \{1,2,\ldots, n_{\mathrm{rad}}\},
\nonumber
\end{align}
where $\eta^{(i)} \in [0,1] \subset \mathbb{R}$ denotes the corresponding valve position. The nominal flow rate $\dot{m}^{(i)}_{\mathrm{wat}, \mathrm{nom}}$ is initially estimated from the nominal heat output $Q_{\mathrm{nom}}^{(i)} \in \mathbb{R}$ using the steady-state forms of \eqref{eq:rad_dynamics} and~\eqref{eq:water_dynamics_revised}, yielding
\begin{align}
\dot{m}^{(i)}_{\mathrm{wat}, \mathrm{nom}}
=
\frac{Q_{\mathrm{nom}}^{(i)}}
{c_{\mathrm{p},\mathrm{wat}}
\left(
T_{\mathrm{sup}, \mathrm{nom}}
-
T_{\mathrm{ret}, \mathrm{nom}}
\right)}.
\label{eq:nominal_flow_rate}
\end{align}
It is then included in the vector of system parameters $\theta$ and adjusted using the proposed grey-box modeling approach.
In \eqref{eq:nominal_flow_rate}, $T_{\mathrm{sup}, \mathrm{nom}}, T_{\mathrm{ret},\mathrm{nom}} \in \mathbb{R}$ denote the nominal supply and return temperatures at which $Q^{(i)}_{\mathrm{nom}}$ is reported in the radiator datasheet. These temperatures are typically given as $T_{\mathrm{sup},\mathrm{nom}}=70\,^\circ\mathrm{C}$ and $T_{\mathrm{ret},\mathrm{nom}}=55\,^\circ\mathrm{C}$. Table~\ref{tab:hydraulic_balancing_rmse_comparison} compares the per-zone and overall RMSE values obtained with and without accounting for hydraulic balancing. It can be observed that accounting for hydraulic interactions reduces the overall RMSE from $1.03\,[\si{\degreeCelsius}]$ to $0.92\,[\si{\degreeCelsius}]$, corresponding to a reduction of approximately $11\%$.

\begin{table}[t]
\centering
\caption{Comparison of the open-loop RMSE obtained with and without
         considering hydraulic balancing.}
\label{tab:hydraulic_balancing_rmse_comparison}
\begin{tabular}{ccc}
\toprule
Zone & \multicolumn{2}{c}{RMSE $[\si{\degreeCelsius}]$} \\
\cmidrule(lr){2-3}
     & without hydraulic interactions & with hydraulic interactions \\
\midrule
2 & 0.577 & 0.569 \\
3 & 0.618 & 0.565 \\
4 & 1.117 & 0.909 \\
5 & 1.508 & 1.395 \\
\midrule
All & 1.03 & 0.92 \\
\bottomrule
\end{tabular}
\end{table}

\begin{remark}
    Note that this $11 \%$ reduction in RMSE is obtained in a relatively small building with six zones. This effect can be significantly more important in larger buildings with many more zones.
\end{remark}

By modeling the system structure as described in Section~\ref{sec:model} and using grey-box modeling to adjust its parameters, as discussed in Section~\ref{sec:grey-box}, we incorporate the resulting model into the Model Predictive Control (MPC) framework in the next section to regulate the radiator heat output and achieve the objectives for the case-study building.

%% file: s4.tex
At each sampling instant $t_k \coloneqq kt_s$, where \(k\in\mathbb{N}_0\) and $t_s \in \mathbb{R}_{> 0}$ is the sampling time, the current state $\hat{x}_k \in \mathbb{R}^n$ is first estimated using the available zone-temperature and valve-position measurements through the EKF described in Section~\ref{sec:sec3:warm-up}. Starting from this state estimate and using the available forecasts of the exogenous inputs, MPC predicts the future system behavior over a predefined prediction horizon $N_{\mathrm{p}}\in\mathbb{N}$; see Figure~\ref{fig:control_scheme}. Based on these predictions, MPC determines the optimal sequence of control inputs to maintain thermal comfort while reducing energy consumption. In the proposed framework, this optimal input sequence is given by
\[
U^*_k
\coloneqq
\begin{bmatrix}
u_{0|k}^{*\top} &
u_{1|k}^{*\top} &
\ldots &
u_{N_{\mathrm{p}}-1|k}^{*\top}
\end{bmatrix}^{\top},
\]
where
\begin{equation}
u^*_{i|k}
\coloneqq
\begin{bmatrix}
T^{*}_{\mathrm{sup},i|k} &
T^{*\top}_{\mathrm{des},i|k}
\end{bmatrix}^{\top}
\in\mathbb{R}^{1+n_{\mathrm{z}}},
\qquad
i \in \{0,1,\ldots,N_{\mathrm{p}}-1\}.
\nonumber
\end{equation}
Here, $T^*_{\mathrm{sup},i|k}\in\mathbb{R}$ is the optimal boiler supply temperature, and $T^*_{\mathrm{des},i|k}\in\mathbb{R}^{n_{\mathrm{z}}}$ denotes the vector of optimal temperature setpoints for the zones, both computed at time step $k$ for time step $k+i$. The first control input $u^*_{0|k}$ is then applied to the real system, and the optimization procedure is repeated at the next time instant using the newly available measurements. Since MPC relies on a mathematical prediction model, the model in \eqref{eq:overall_ct_system} can be used to formulate the MPC problem. However, because \eqref{eq:overall_ct_system} is nonlinear, we linearize the system to reduce the computational burden and enable the MPC problem to be formulated as a quadratic program. In particular, the nonlinear vector field $f$ in \eqref{eq:overall_ct_system} is approximated by its first-order Taylor expansion around a given operating point $\bar{p} \coloneqq \left(\bar{x},\bar{u},\bar{d} \right)$ as
\begin{equation}
\dot{x}(t) \approx
A_{\mathrm{c}}(\bar{p}) x(t) + B_{\mathrm{c}}(\bar{p}) u(t)
+ E_{\mathrm{c}}(\bar{p}) d(t) + c_{\mathrm{c}}(\bar{p}),
\label{eq:ct_lin}
\end{equation}
where
\begin{equation}
A_{\mathrm{c}}(\bar{p})
\coloneqq
\left.\frac{\partial f}{\partial x}\right|_{\bar{p}},
\qquad
B_{\mathrm{c}}(\bar{p})
\coloneqq
\left.\frac{\partial f}{\partial u}\right|_{\bar{p}},
\qquad
E_{\mathrm{c}}(\bar{p})
\coloneqq
\left.\frac{\partial f}{\partial d}\right|_{\bar{p}}, \nonumber
\end{equation}
and the affine term is given by
\begin{equation}
c_{\mathrm{c}}(\bar{p})
\coloneqq
f \left(\bar{x},\bar{u},\bar{d} \right)
-
A_{\mathrm{c}} \left(\bar{p} \right)\bar{x}
-
B_{\mathrm{c}} \left (\bar{p} \right)\bar{u}
-
E_{\mathrm{c}} \left(\bar{p} \right)\bar{d}. \nonumber
\end{equation}
The term $c_{\mathrm{c}}$ ensures that the affine model \eqref{eq:ct_lin} matches the nonlinear dynamics \eqref{eq:overall_ct_system} exactly at $\bar{p}$.
Assuming that $\bar{p}$ is fixed and that $u(t) = u_k$ and $d(t) = d_k$ are constant over each sampling interval $[kt_s, ~(k+1)t_s]$, the continuous-time affine model \eqref{eq:ct_lin} can be exactly discretized under zero-order hold as
\begin{equation}
x_{k+1}
=
A(\bar{p}) x_k + B(\bar{p}) u_k + E(\bar{p}) d_k + c(\bar{p}),
\label{eq:dt_lin_matrices}
\end{equation}
where
\begin{equation}
A(\bar{p}) = e^{A_{\mathrm{c}}(\bar{p})t_s},
\qquad
\begin{bmatrix}
B(\bar{p}) & E(\bar{p}) & c(\bar{p})
\end{bmatrix}
=
\left(\int_{0}^{t_s}
e^{A_{\mathrm{c}}(\bar{p})\tau}\mathrm{d}\tau\right)
\mathcal{B}_{\mathrm{c}}(\bar{p}), \nonumber
\end{equation}
with $\mathcal{B}_{\mathrm{c}}(\bar{p})
\coloneqq
\begin{bmatrix}
B_{\mathrm{c}}(\bar{p}) & E_{\mathrm{c}}(\bar{p}) & c_{\mathrm{c}}(\bar{p})
\end{bmatrix}
\in
\mathbb{R}^{n\times(n_u+n_d+1)}$.
\begin{remark}
    Alternatively, the nonlinear system \eqref{eq:overall_ct_system} could be discretized using for example RK4 \citep{RK4} and incorporated into a nonlinear MPC formulation, which could then be compared with the MPC based on the linearized model. However, such a comparison is beyond the main focus of this paper. Instead, we aim to show that, although accounting for the radiator valves and hydraulic interactions among the radiators introduces nonlinearities into the model, the resulting computational complexity can be alleviated through linearization.
\end{remark}
\begin{figure}
    \centering
    \includegraphics[width=0.75\linewidth]{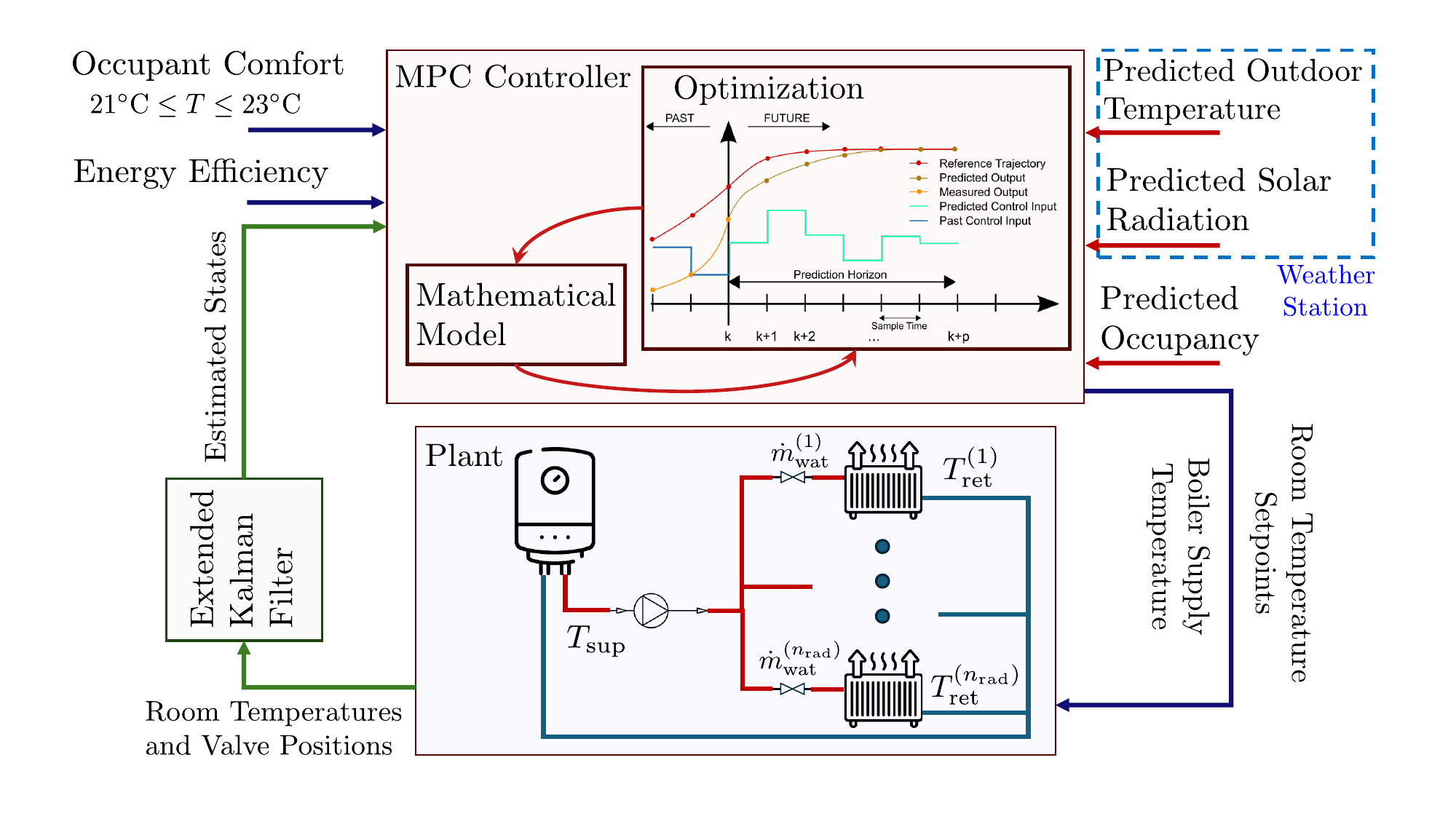}
    \caption{Overall topology of the proposed MPC scheme.}
    \label{fig:control_scheme}
\end{figure}

However, using a single linearized model over the entire MPC horizon may lead to inaccurate predictions. Therefore, we use a linear time-varying affine approximation, obtained by linearizing the nonlinear system \eqref{eq:overall_ct_system} at different operating points along the prediction horizon. To construct the operating points
\begin{align}
\bar{p}_{i|k}
\coloneqq
\begin{bmatrix}
\bar{x}_{i|k} & \bar{u}_{i|k} & d_{i|k}
\end{bmatrix},
\qquad
i \in \{0,1,\ldots,N_{\mathrm{p}}-1\}, \nonumber
\end{align}
a nominal input trajectory is first constructed at each time step~$k$. This trajectory is obtained by shifting the optimal input trajectory from MPC at previous time step one step forward and repeating its final input to preserve the prediction-horizon length. In particular, we consider the nominal input trajectory~$\bar{u}_{i|k}$, $i \in \{0,1,\ldots,N_{\mathrm p}-1\}$, as
\begin{align}
    \bar{u}_{i|k}
    \coloneqq
    \begin{bmatrix}
        \bar{T}_{\mathrm{sup},i|k} & \bar{T}_{\mathrm{des},i|k}^{\top}
    \end{bmatrix}^{\top}, \qquad
 \nonumber
\end{align}
where
\begin{align}
\bar{T}_{\mathrm{sup},i|k}
&\coloneqq
T_{\mathrm{sup},i+1|k-1}^{\ast}, \qquad
i \in \{0,1,\ldots,N_{\mathrm p}-2\}, \nonumber
\\
\bar{T}_{\mathrm{sup},N_{\mathrm p}-1|k}
&\coloneqq
T_{\mathrm{sup},N_{\mathrm p}-1|k-1}^{\ast}, \nonumber
\end{align}
and similarly
\begin{align}
\bar{T}_{\mathrm{des},i|k}
&\coloneqq
T_{\mathrm{des},i+1|k-1}^{\ast},
\qquad
i \in \{0,1,\ldots,N_{\mathrm p}-2\}, \nonumber
\\
\bar{T}_{\mathrm{des},N_{\mathrm p}-1|k}
&\coloneqq
T_{\mathrm{des},N_{\mathrm p}-1|k-1}^{\ast}. \nonumber
\end{align}
Using $\bar{u}_{i|k}$ together with the known exogenous inputs $d_{i|k}$, the corresponding nominal state trajectory is then propagated through the nonlinear system 
\begin{align}
\bar{x}_{i+1|k}
&\coloneqq
\bar{x}_{i|k}
+
\int_{0}^{t_{\mathrm{s}}}
f\!\left(
\bar{\xi}(\tau),\,
\bar{u}_{i|k},\,
d_{i|k}
\right)
\mathrm{d}\tau,
\qquad
i \in \{0,1,\ldots,N_{\mathrm{p}}-1\},
\nonumber
\end{align}
where $\bar{x}_{0|k} \coloneqq \hat{x}_{k}$, and
$\dot{\bar{\xi}}(\tau) = f\!\left(\bar{\xi}(\tau), \bar{u}_{i|k}, d_{i|k}\right)$, $\tau \in [0,t_{\mathrm{s}}]$, with $\bar{\xi}(0) \coloneqq \bar{x}_{i|k}$. Note that, at $k=0$, no previous optimal input trajectory is available. Therefore, the nominal input trajectory must be initialized, for example, using the midpoint of its feasible range.

 Using the linearized prediction model \eqref{eq:dt_lin_matrices}, the proposed MPC aims to provide occupant comfort in each targeted zone while minimizing the boiler supply temperature, which is related to the energy consumption of the building. To satisfy the former objective while avoiding possible infeasibility of the comfort requirements, we introduce the nonnegative slack-variable vector 
 \begin{align} 
 \delta_k \coloneqq 
 \begin{bmatrix} 
 \delta_{1|k} & \delta_{2|k} & \ldots & \delta_{N_{\mathrm{p}} | k}
 \end{bmatrix}^{\top} \in \mathbb{R}
_{\geq 0}^{N_{\mathrm{p}}}
, \nonumber 
\end{align} 
and impose the softened comfort constraints 
\begin{align} T_{\min,k+i+1}
- \delta_{i+1|k}\mathbb{1}_{n_{\mathrm{z, tar}}}
\leq C_{\mathrm{z, tar}}x_{i+1|k} \leq T_{\max,k+i+1}
+ \delta_{i+1|k}\mathbb{1}_{n_{\mathrm{z, tar}}}, \quad i \in \{0,1,\ldots,N_{\mathrm{p}}-1 \}, \nonumber
\end{align}
where \mbox{$\mathbb{1}_{n_{\mathrm{z, tar}}} \in \mathbb{R}^{n_{\mathrm{z, tar}}}$} is the vector of one, $n_{\mathrm{z,tar}} \in \mathbb{N}$ denotes the number of targeted zones whose temperatures are to be controlled, and $C_{\mathrm{z, tar}} \in \mathbb{R}^{n_{\mathrm{z, tar}} \times n}$ extracts their temperatures from the state vector. Moreover, $T_{\min,k+i+1}, T_{\max,k+i+1} \in \mathbb{R}^{n_{\mathrm{z,tar}}}$ denote the vectors of element-wise minimum and maximum comfort temperatures for these zones at time step $k+i+1$, respectively; thus, the temperature of each targeted zone is constrained within a time-varying comfort range, which can, for example, depend on whether the building is occupied or unoccupied.
To penalize any violation of the comfort range, we penalize the slack variable $\delta_k$ as
\begin{align}
    J_{\mathrm{comf}}(\delta_k) \coloneqq \delta_k^{\top} \Lambda_{\delta} \delta_k \nonumber
\end{align}
using diagonal weighting matrix \(\Lambda_{\delta}\in \mathbb{R}^{N_{\mathrm{p}} \times
N_{\mathrm{p}}}\). To satisfy the second objective, namely reducing energy consumption, high boiler supply temperatures over the prediction horizon are penalized in the MPC framework through the objective term
\begin{align}
J_{\mathrm{sup}}(U_{\mathrm{sup},k})
\coloneqq
U_{\mathrm{sup},k}^{\top}
\Lambda_{\mathrm{sup}} U_{\mathrm{sup},k},
\nonumber
\end{align}
where 
\begin{align}
 U_{\mathrm{sup},k} \coloneqq 
\begin{bmatrix}T_{\mathrm{sup},0|k} & T_{\mathrm{sup},1|k} & \dots  & T_{\mathrm{sup},N_{\mathrm{p}} - 1|k}
\end{bmatrix}^{\top} \in \mathbb{R}^{N_{\mathrm{p}}}, \nonumber
\end{align}
and
$\Lambda_{\mathrm{sup}} \in\mathbb{R}^{N_{\mathrm{p}}\times N_{\mathrm{p}}}$ is a diagonal weighting matrix. Moreover, each element of the boiler supply temperature sequence is also constrained to admissible values as
\begin{equation}
T_{\mathrm{sup},\min}\mathbb{1}_{N_{\mathrm{p}}}
\leq
U_{\mathrm{sup},k}
\leq
T_{\mathrm{sup},\max}\mathbb{1}_{N_{\mathrm{p}}}, \nonumber
\end{equation}
where \(T_{\mathrm{sup},\min}, T_{\mathrm{sup},\max}\in\mathbb{R}\) represent the lower and upper bounds on the admissible boiler supply temperature, respectively.

As a result, at each time step $k\in\mathbb{N}_0$, given the current state estimate $\hat{x}_k$ obtained using the EKF, the following MPC problem is solved:
\begin{subequations} \label{eq:proposed_MPC}
\begin{align} 
    (U^\ast_{k}, \delta^\ast_k) \coloneqq \text{arg}&\min_{U_k, \delta_k} \quad  
    J_{\mathrm{comf}}(\delta_k) + J_{\mathrm{sup}}(U_{\mathrm{sup},k})  \label{eq:MPC_cost} \\[2mm]
    \text{subject to} \qquad  \nonumber 
    & x_{0|k} \coloneqq \hat{x}_k, \\[2mm]
    & x_{i+1|k} = A(\bar{p}_{_{i|k}}) x_{i|k} + B(\bar{p}_{_{i|k}}) u_{i|k} + E(\bar{p}_{_{i|k}})d_{i|k} + c(\bar{p}_{_{i|k}}), \\[2mm]
    & T_{\mathrm{sup}, \min} \mathbb{1}_{N_{\mathrm{p}}}  \leq  U_{\mathrm{sup}, k} \leq T_{\mathrm{sup}, \max}\mathbb{1}_{N_{\mathrm{p}}},  \\[2mm]
    & T_{\min, (k + i + 1)} - \delta_{i+1|k}\mathbb{1}_{n_{\mathrm{z, tar}}} \leq C_{\mathrm{z, tar}}x_{i+1|k} \leq T_{\max, (k + i + 1)} + \delta_{i+1|k}\mathbb{1}_{n_{\mathrm{z,tar}}}, \\[2mm]
    &U_k
    \coloneqq
    \begin{bmatrix}
    u_{0|k}^{\top} &
    u_{1|k}^{\top} &
    \ldots &
    u_{N_{\mathrm{p}}-1|k}^{\top}
    \end{bmatrix}^{\top}, \quad u_{i|k}
    \coloneqq
    \begin{bmatrix}
    T_{\mathrm{sup},i|k} &
    T^{\top}_{\mathrm{des},i|k}
    \end{bmatrix}^{\top}, \\[2mm]
    &U_{\mathrm{sup},k} \coloneqq 
    \begin{bmatrix}T_{\mathrm{sup},0|k} & T_{\mathrm{sup},1|k} & \dots  & T_{\mathrm{sup},N_{\mathrm{p}} - 1|k}
    \end{bmatrix}^{\top},
\end{align}
\end{subequations}
where $i \in \{0,1, \dots, N_{\mathrm{p}}-1\}$.

To validate the proposed MPC framework~\eqref{eq:proposed_MPC}, we conduct numerical case studies and real-building experiments in Sections~\ref{sec:numerical_case_study} and~\ref{sec:experimental_validation} below, respectively, and compare its performance with existing MPC strategies from the literature.

\input{s4_1}

\input{s4_2}

%% file: s4_1.tex
\subsection{Numerical Case Study} \label{sec:numerical_case_study}
In the numerical case study, the objective is to evaluate the benefits of accounting for hydraulic interactions among the radiators and of incorporating valve control into the control design for dynamic hydraulic balancing. To this end, the performance of the proposed MPC~\eqref{eq:proposed_MPC} is compared against two benchmark MPC formulations, assuming that the actual system follows the dynamics in~\eqref{eq:overall_ct_system}, since Section~\ref{sec:effec_hydraulic} shows that \eqref{eq:overall_ct_system} is more representative of the actual system. The two benchmark formulations are then constructed to isolate the individual effects of valve control and hydraulic interactions. Specifically, Benchmark MPC~1 assumes that all radiator valves are fully open and are not available as control inputs, whereas Benchmark MPC~2 treats the valve positions as control inputs but neglects the hydraulic interactions among the radiators. In particular, it assumes a nominal flow rate for each radiator according to \eqref{eq:nominal_flow_rate}, and the actual flow rate through each radiator is determined as a proportion of this nominal flow rate based on the corresponding valve position.

To compare the performance between these MPC frameworks, we use several criteria. First, we use total comfort violation $[\si{\kelvin}]$ defined as
\begin{align}
    \sum_k \mathbb{1}_{n_{\mathrm{z,tar}}}^{\top} \left(\varepsilon_k^- + \varepsilon_k^+ \right), \nonumber
\end{align}
where $\varepsilon_k^-, \varepsilon_k^+ \in \mathbb{R}_{> 0}^{n_{\mathrm{z,tar}}}$ are lower and upper comfort violations, respectively, at sampling instant $t_k$, and are defined as
\begin{align}
    \varepsilon_k^- \coloneqq \max \left (T_{\mathrm{min},k} - \Bigl[T_{\mathrm{meas},k}^{(i)}\Bigr]_{i\in\mathcal{Z}},\, 0 \right), \qquad
    \varepsilon_k^+ \coloneqq  \max \left( \Bigl[T_{\mathrm{meas},k}^{(i)}\Bigr]_{i\in\mathcal{Z}} - T_{\mathrm{max},k},\, 0 \right). \nonumber
\end{align}
Here, the operator $\max(x,0)$ denotes the elementwise positive part of the vector $x$. Moreover, we use maximum comfort violation defined as $\varepsilon_{\mathrm{max}} = \max (\varepsilon_{\mathrm{max}}^-, \varepsilon_{\mathrm{max}}^+)$, where
\begin{align}
    \varepsilon_{\mathrm{max}}^- \coloneqq \max_k ~ \left\lVert \varepsilon_{k}^-  \right\rVert_\infty, \qquad
    \varepsilon_{\mathrm{max}}^+ \coloneqq \max_k ~ \left\lVert \varepsilon_{k}^+  \right\rVert_\infty, \nonumber
\end{align}
with $\Vert . \Vert_\infty$ denoting the infinity norm. We also evaluate the cumulative boiler supply temperature applied to the system over the simulation period, defined as
\begin{align}
    \sum_k T_{\mathrm{sup},0|k}^\ast. \nonumber
\end{align}
Lastly, we evaluate the accumulated realized cost defined as
\begin{align}
\sum_k \biggl( \lambda_{1, \mathrm{sup}} \left(T_{\mathrm{sup},0|k}^\ast \right)^2 + \lambda_{1,\delta} &\big\Vert \bigl[
    {\varepsilon_k^-} ^\top ~~ {\varepsilon_k^+}^\top
\bigr]^\top
\big\Vert_\infty^2 \biggr), \nonumber
\end{align}
where $\lambda_{1,\delta}, \lambda_{1, \mathrm{sup}} \in \mathbb{R}$ are the first diagonal entries of the weighting matrices \(\Lambda_{\delta}\) and \(\Lambda_{\mathrm{sup}}\), respectively, which are defined in the MPC cost function in \eqref{eq:MPC_cost}. 

All MPC frameworks are implemented using the same weighting matrices, \(\Lambda_{\delta}\) and \(\Lambda_{\mathrm{sup}}\), in \eqref{eq:MPC_cost}, with a sampling time of $t_s = 15 \, \si{\min}$, and a prediction horizon of $N_{\mathrm{p}} = 15$. The comfort bounds are set to $T_{\mathrm{min}} = 20\,\si{\degreeCelsius}$ and $T_{\mathrm{max}} = 22\,\si{\degreeCelsius}$ during unoccupied hours, and $T_{\mathrm{min}} = 22.5\,\si{\degreeCelsius}$ and $T_{\mathrm{max}} = 24\,\si{\degreeCelsius}$ during occupied hours. Note that the comfort bounds are chosen to be more demanding than typical values to create more challenging scenarios for the MPC and to match the experimental case studies presented in Section~\ref{sec:experimental_validation} below. Using the above mentioned criteria, their performances are summarized in Table~\ref{tab:mpc_benchmark_comparison}. The results show that the proposed MPC framework significantly outperforms both benchmark formulations. In particular, relative to the proposed MPC, the realized cost is approximately \(92\%\) higher for Benchmark MPC~1 and approximately \(33.5\%\) higher for Benchmark MPC~2. Moreover, the proposed MPC reduces the total comfort violation from $70.90\,\si{\degreeCelsius}$ for Benchmark MPC~1 and $47.22\,\si{\degreeCelsius}$ for Benchmark MPC~2 to $34.31\,\si{\degreeCelsius}$, corresponding to reductions of $51.6\%$ and $27.3\%$, respectively. The corresponding zone temperatures are shown in Figures~\ref{fig:bench3}, \ref{fig:bench1}, and~\ref{fig:bench2}, respectively. Moreover, Figure \ref{fig:bench3_Tsup} depicts the optimized boiler supply temperature obtained using the proposed MPC framework, along with the outdoor temperature and the solar irradiance incident on the south-facing walls. The south-facing orientation is shown as a representative example, since the incident solar radiation varies with wall orientation.

\begin{remark}
    When the radiator valves are not included as control inputs in the prediction model in \eqref{eq:overall_ct_system} and are instead assumed to remain fixed, the resulting system behaves more closely to a linear system. Consequently, Benchmark MPC~1 shows closer agreement between the predicted and simulated zone-temperature trajectories, as observed in Figure~\ref{fig:bench1}, than the proposed MPC and Benchmark MPC~2, as shown in Figures~\ref{fig:bench3} and \ref{fig:bench2}, respectively.
\end{remark}

\begin{remark}
    Note that, for each MPC framework, the corresponding modeling assumptions are also incorporated into the EKF observer used to estimate the states at each time step. For example, for Benchmark MPC~2, the corresponding observer model neglects the hydraulic interactions among the radiators. Moreover, before running the MPC frameworks, a warm-up period of approximately four days is simulated prior to $k=0$ to warm-start the observer and obtain a valid initial state estimate~$\hat{x}_0$.
\end{remark}

\begin{table}[pos=t]
    \centering
    \caption{Performance comparison of the proposed MPC for dynamic hydraulic balancing with two benchmark MPC formulations.}
    \label{tab:mpc_benchmark_comparison}

    \small
    \setlength{\tabcolsep}{10pt}

    \begin{tabular}{lcccc}
        \toprule
        MPC benchmark
        & \makecell{Total comfort\\violation}
        & \makecell{Maximum comfort\\violation}
        & $\sum T_{\mathrm{sup}}$
        & \makecell{Rel. cost\\change} \\
        & [\si{\degreeCelsius}]
        & [\si{\degreeCelsius}]
        & [\si{\degreeCelsius}]
        & [-] \\
        \midrule

        Proposed MPC
        & 34.31
        & 1.35
        & 6746
        & 0\% \\

        No-valve-control MPC
        & 70.90
        & 1.44
        & 6228
        & 92\% \\

        Nominal-flow-based MPC
        & 47.22
        & 1.40
        & 6798
        & 33.5\% \\
        \bottomrule
    \end{tabular}
\end{table}

\begin{figure}
    \centering
    \includegraphics[width=0.85\linewidth]{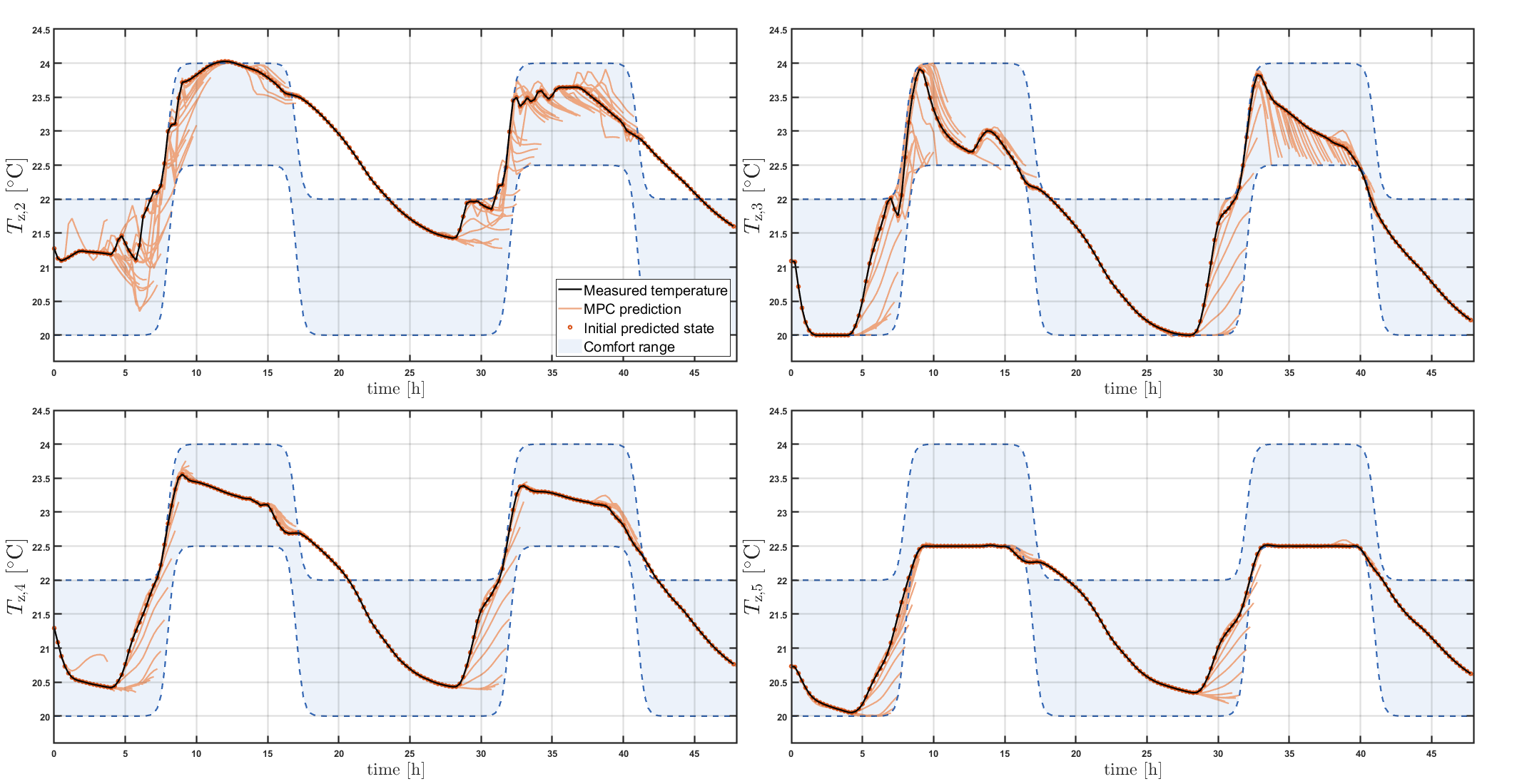}
    \caption{Zone temperatures under \textit{the proposed MPC~\eqref{eq:proposed_MPC} for dynamic hydraulic balancing}, together with the considered comfort bounds and the corresponding MPC-predicted temperature trajectories.}
    \label{fig:bench3}
\end{figure}

\begin{figure}
    \centering
    \includegraphics[width=0.62\linewidth]{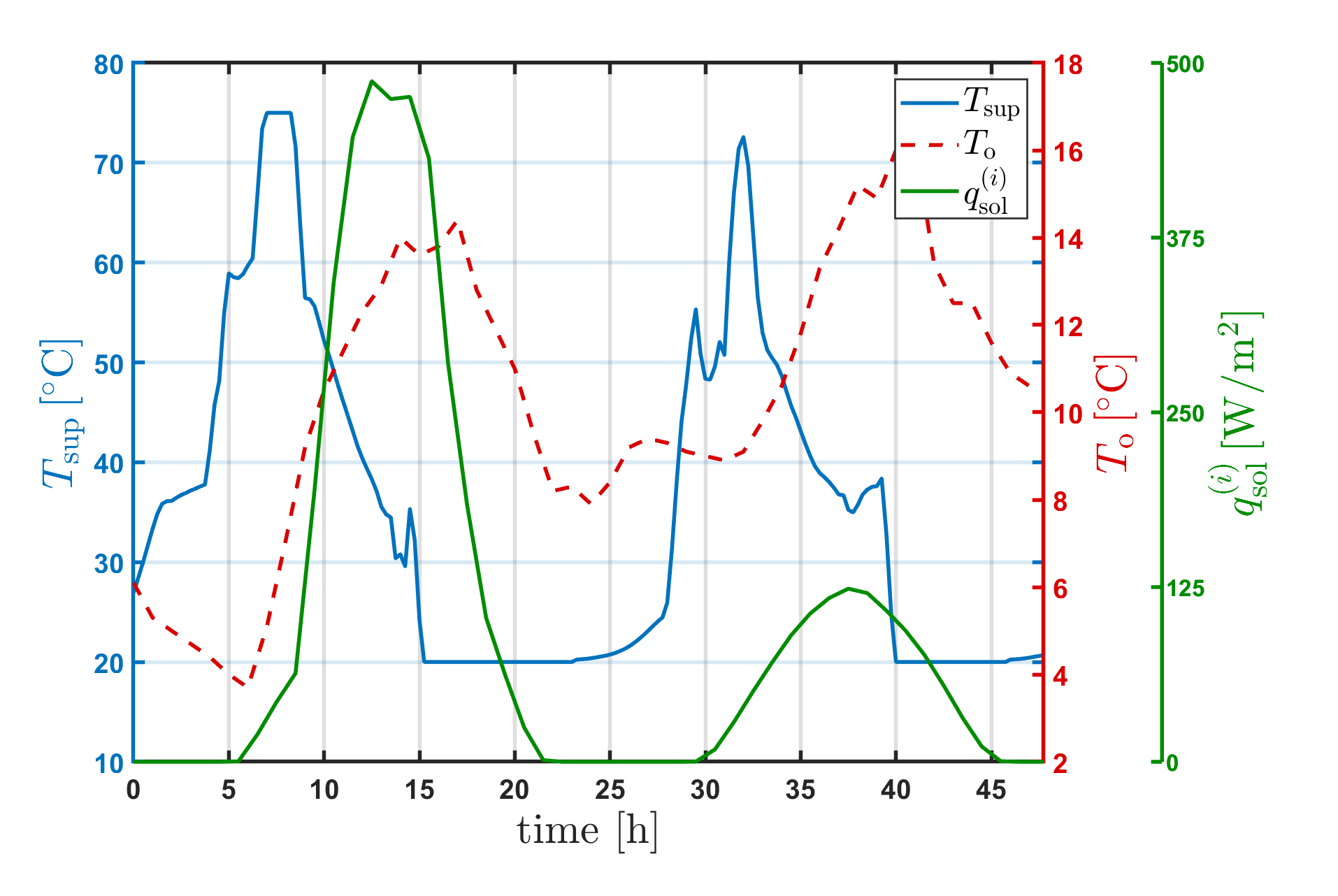}
    \caption{Supply temperature obtained using \textit{the proposed MPC~\eqref{eq:proposed_MPC}}, together with the outdoor temperature and the solar irradiance incident on south-facing walls.}
    \label{fig:bench3_Tsup}
\end{figure}

\begin{figure}
    \centering
    \includegraphics[width=0.85\linewidth]{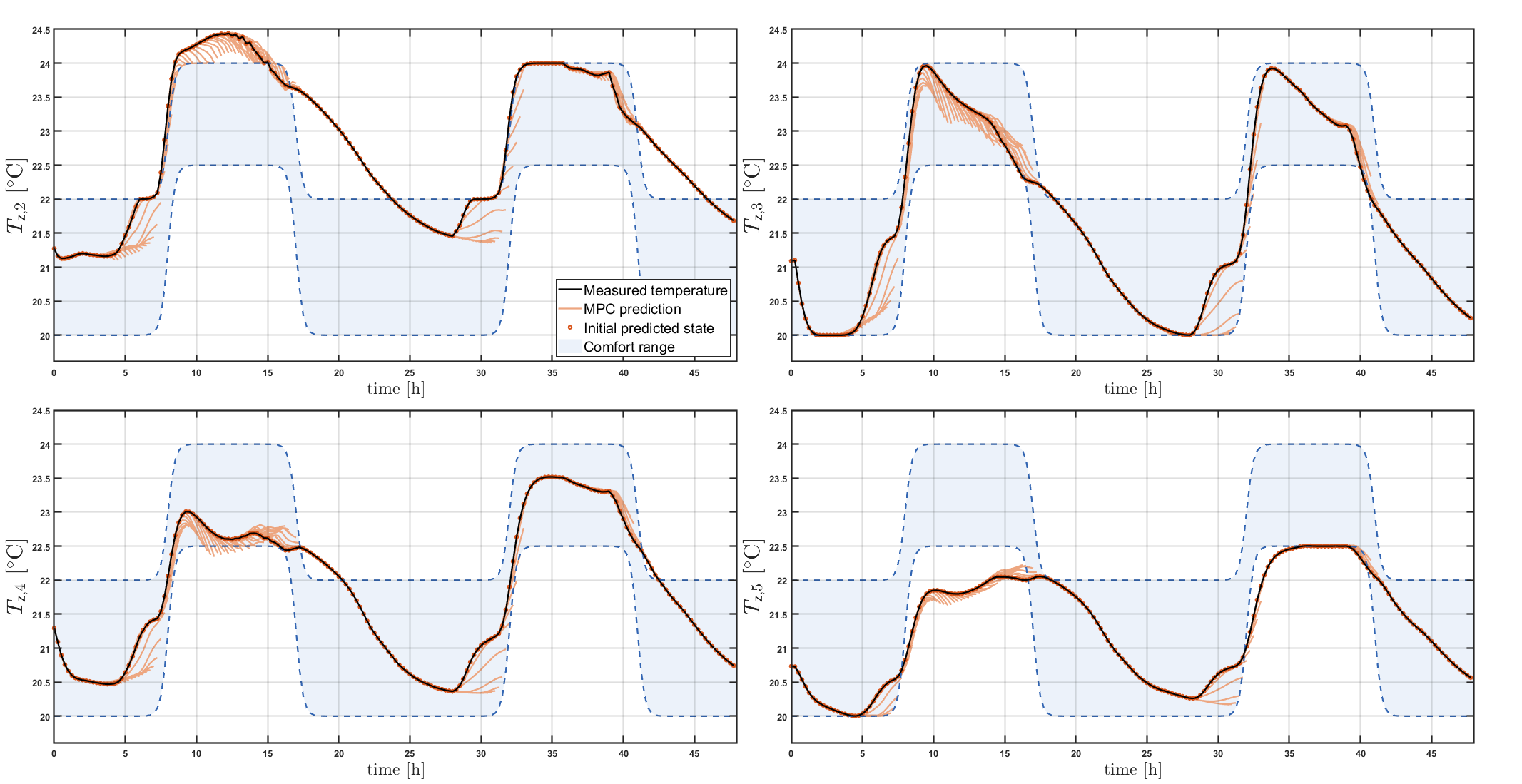}
    \caption{Zone temperatures under \textit{the MPC formulation without radiator-valve control}, where all radiator valves are fully open, together with the considered comfort bounds and the corresponding MPC-predicted temperature trajectories.}
    \label{fig:bench1}
\end{figure}

\begin{figure}
    \centering
    \includegraphics[width=0.85\linewidth]{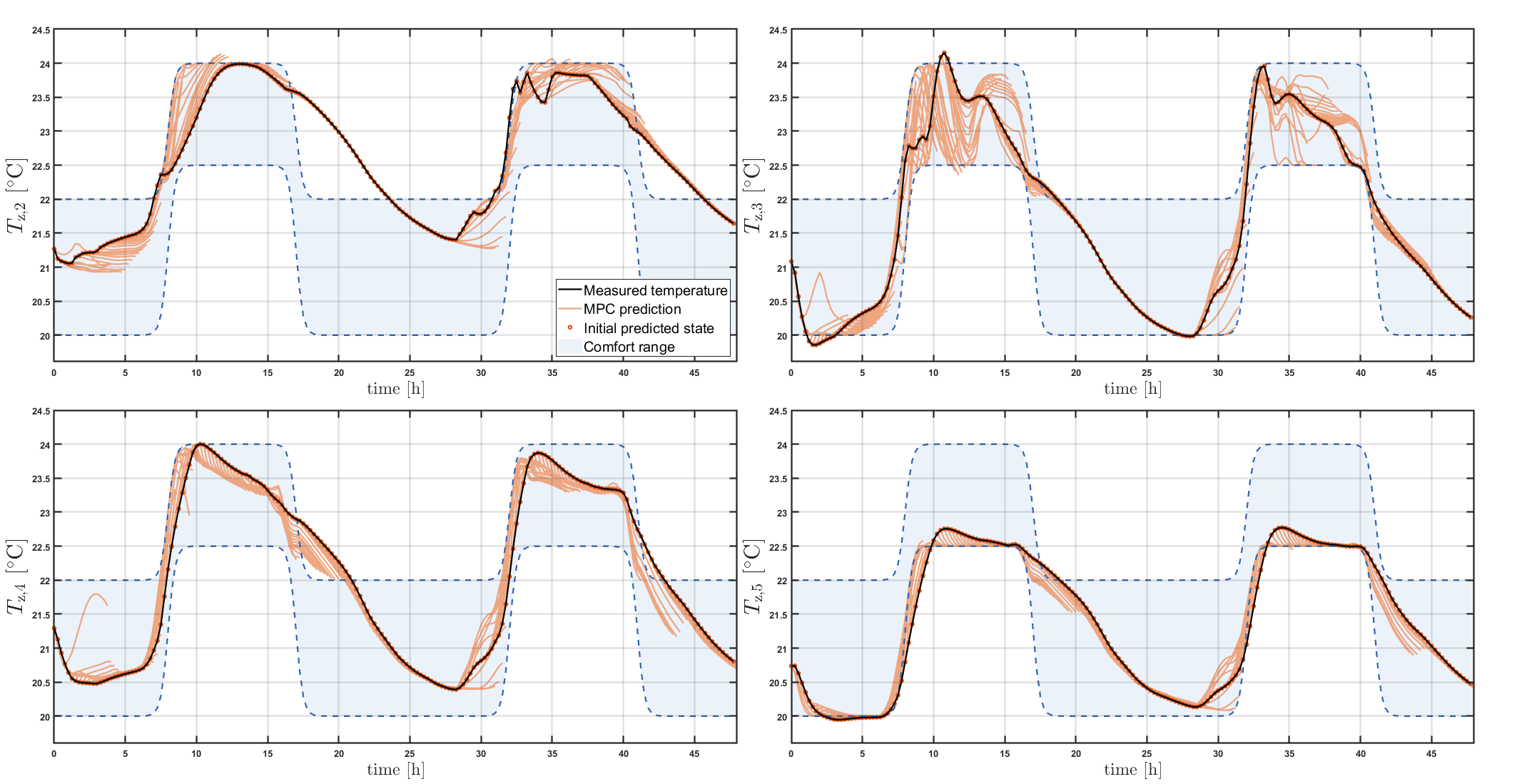}
    \caption{Zone temperatures under \textit{the MPC that ignores hydraulic interactions among the radiators}, together with the considered comfort bounds and the corresponding MPC-predicted temperature trajectories.}
    \label{fig:bench2}
\end{figure}

%% file: s4_2.tex
\subsection{Experimental Validation} \label{sec:experimental_validation}

To validate the proposed MPC \eqref{eq:proposed_MPC} for dynamic hydraulic balancing, we also implement it on the Rensen building (see Figure \ref{fig:exterior_scheme}). The building is equipped with a \textit{Priva} building management system \citep{priva_bms}, through which all relevant sensors are connected, including the zone-temperature sensors, valve-position sensors, and the boiler supply- and return-temperature sensors. The platform enables real-time access to and extraction of the corresponding measurement data. It also allows us to provide setpoints for the boiler supply temperature, $T_{\mathrm{sup}}$, and the zone temperature setpoints, $T_{\mathrm{des}}^{(i)}$, $i \in \{2,3,4,5\}$; see Figure~\ref{fig:platform_rensen}. Thus, the proposed MPC in \eqref{eq:proposed_MPC} reads the sensor measurements at each time step through this platform and provides the corresponding setpoints.

\begin{figure}
    \centering

    \begin{subfigure}[t]{0.48\linewidth}
        \centering
        \includegraphics[width=\linewidth]{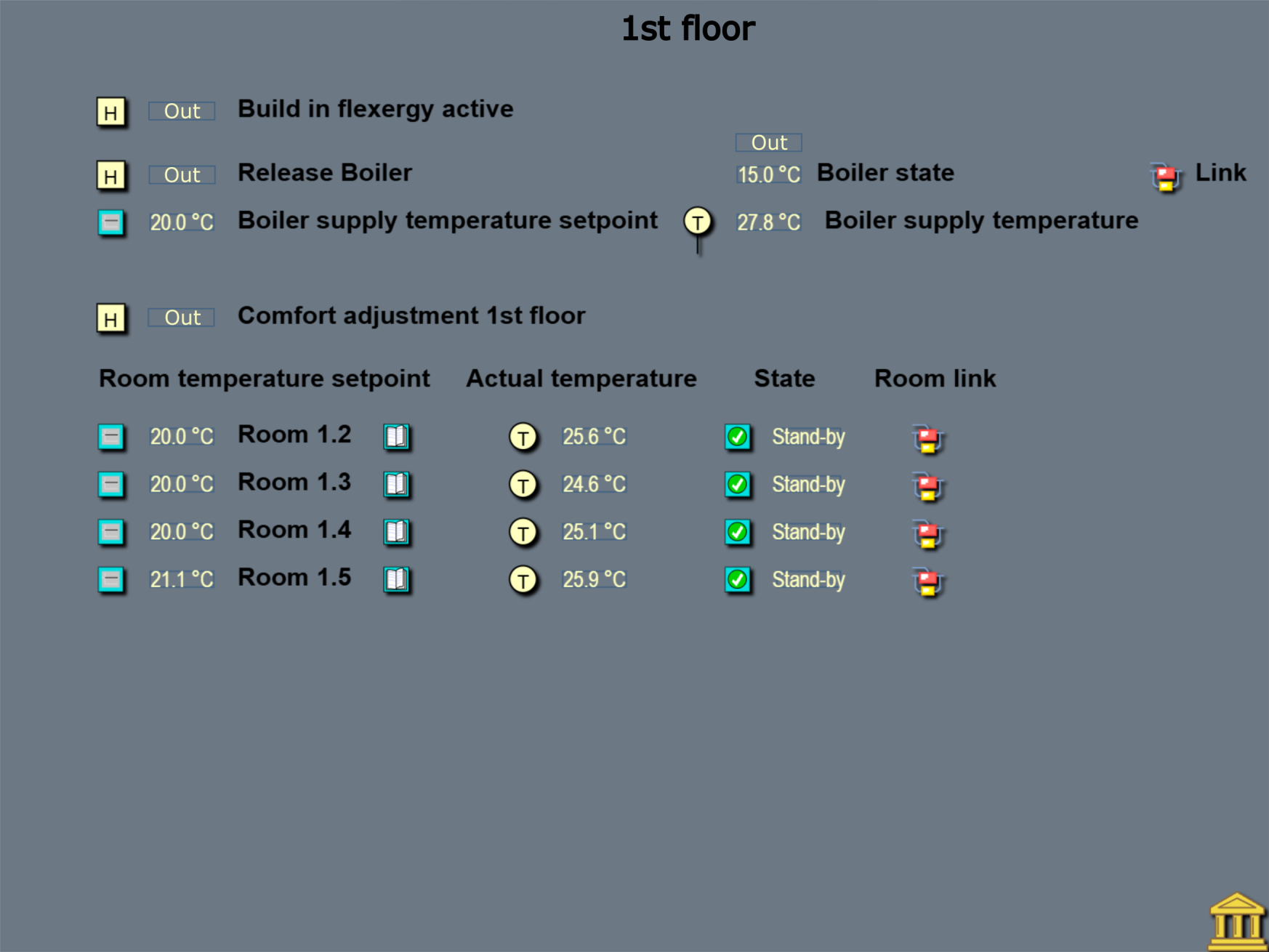}
        \caption{Boiler supply-temperature setpoint, actual boiler supply temperature, zone-temperature setpoints, and actual zone temperatures.}
        \label{fig:platform_rensen_room}
    \end{subfigure}
    \hfill
    \begin{subfigure}[t]{0.48\linewidth}
        \centering
        \includegraphics[width=\linewidth]{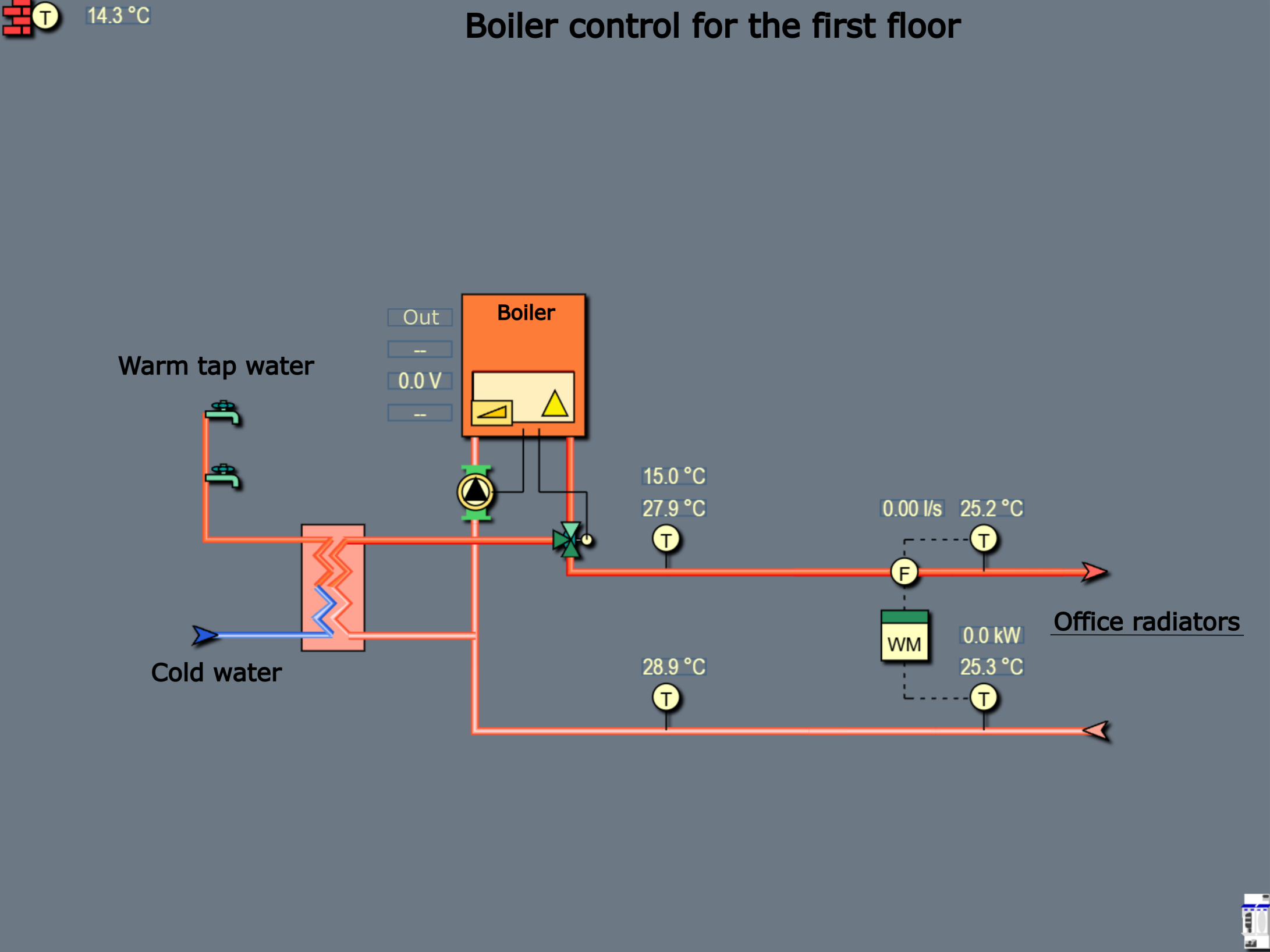}
        \caption{The boiler supply temperature, water flow rate, and return-water temperature to the boiler.}
        \label{fig:platform_rensen_boiler}
    \end{subfigure}

    \caption{Priva building control platform through which the sensors and actuators of the case-study building are connected.}
    \label{fig:platform_rensen}
\end{figure}

Given that the experiments were conducted in May in the Netherlands, when the daytime outdoor temperature was relatively high, we performed them at night. Moreover, to create a more challenging scenario for the MPC controllers, the comfort bounds are set higher than typical values, namely, to $[20\,^\circ\mathrm{C},\,22\,^\circ\mathrm{C}]$ during unoccupied hours and $[22.5\,^\circ\mathrm{C},\,24\,^\circ\mathrm{C}]$ during occupied hours. The zone temperatures and boiler supply temperature obtained using the proposed MPC in \eqref{eq:proposed_MPC} are depicted in Figure~\ref{fig:MPC_case_study}. To demonstrate the effectiveness of the proposed MPC, we compare its performance with that of an MPC that does not control the radiator valves, with all valves kept fully open. Using the same performance criteria introduced in Section~\ref{sec:numerical_case_study}, the comparison results are summarized in Table~\ref{tab:exp_mpc_benchmark_comparison}. As shown in Table~\ref{tab:exp_mpc_benchmark_comparison}, the proposed MPC~\eqref{eq:proposed_MPC} reduces the total comfort violation from $18.2\,\si{\degreeCelsius}$ to $4.8\,\si{\degreeCelsius}$ compared with the benchmark MPC, a 
$73.6\%$ reduction, while operating at a lower cumulative boiler supply temperature. The main reason is that the benchmark MPC cannot control the valves, thus raising the boiler supply temperature to maintain comfort in one zone (here, zone~5) simultaneously raises the temperature in the other zones, pushing them outside the comfort range, since the individual flow rates cannot be adjusted; see Figure~\ref{fig:MPC_no_valve_room}.

\begin{remark}
    Comparing different controllers over a short period in real experiments may not be entirely conclusive, since the exogenous inputs (e.g., outdoor temperature) acting on the system are not completely identical across the two experiments. However, the experiments are conducted on two consecutive days to reduce these differences.
\end{remark}

\begin{figure}
    \centering

    \begin{subfigure}[t]{0.49\linewidth}
        \centering
        \includegraphics[width=\linewidth]{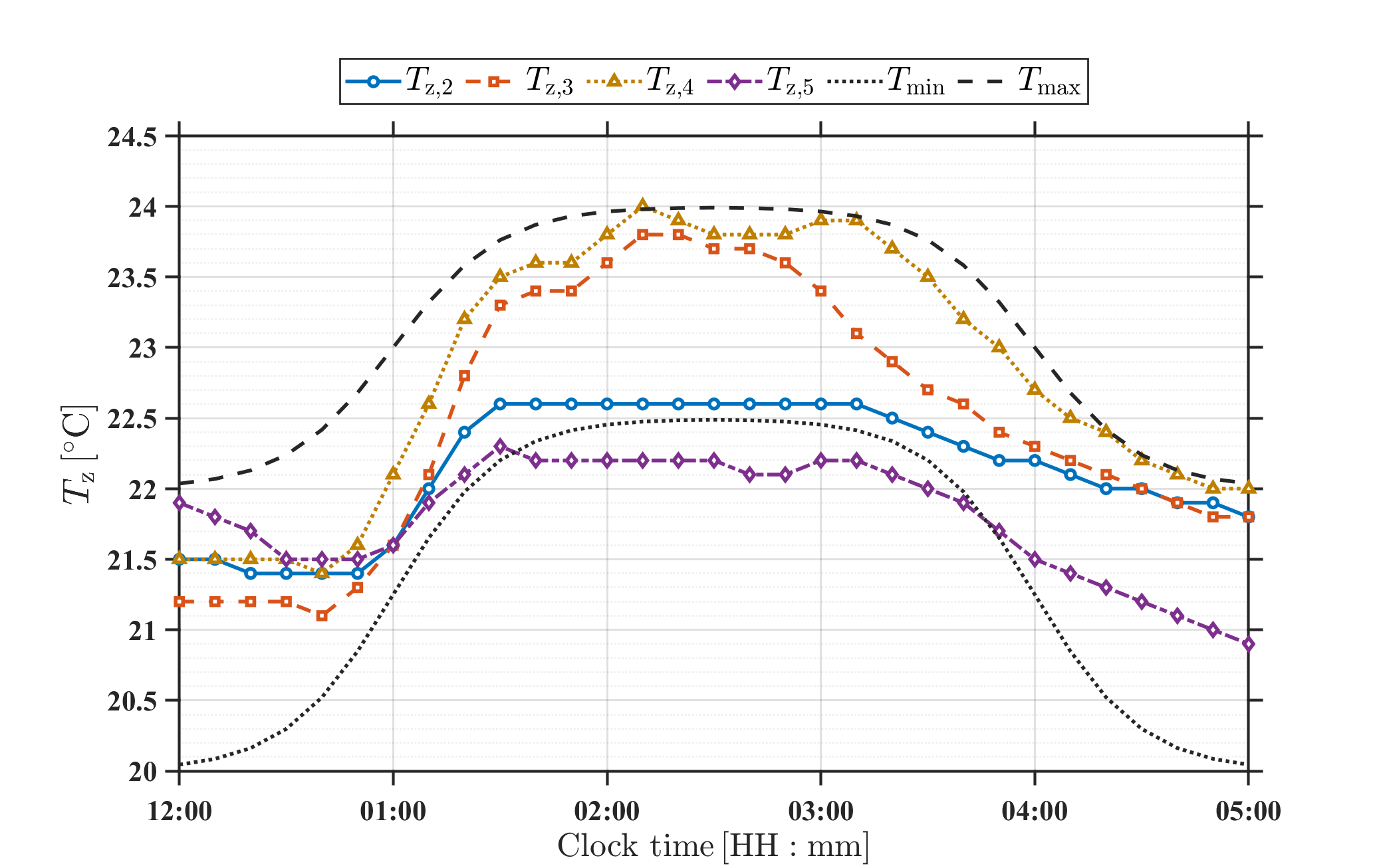}
        \caption{The zone temperatures.}
        \label{fig:MPC_case_study_room}
    \end{subfigure}
    \hfill
    \begin{subfigure}[t]{0.49\linewidth}
        \centering
        \includegraphics[width=\linewidth]{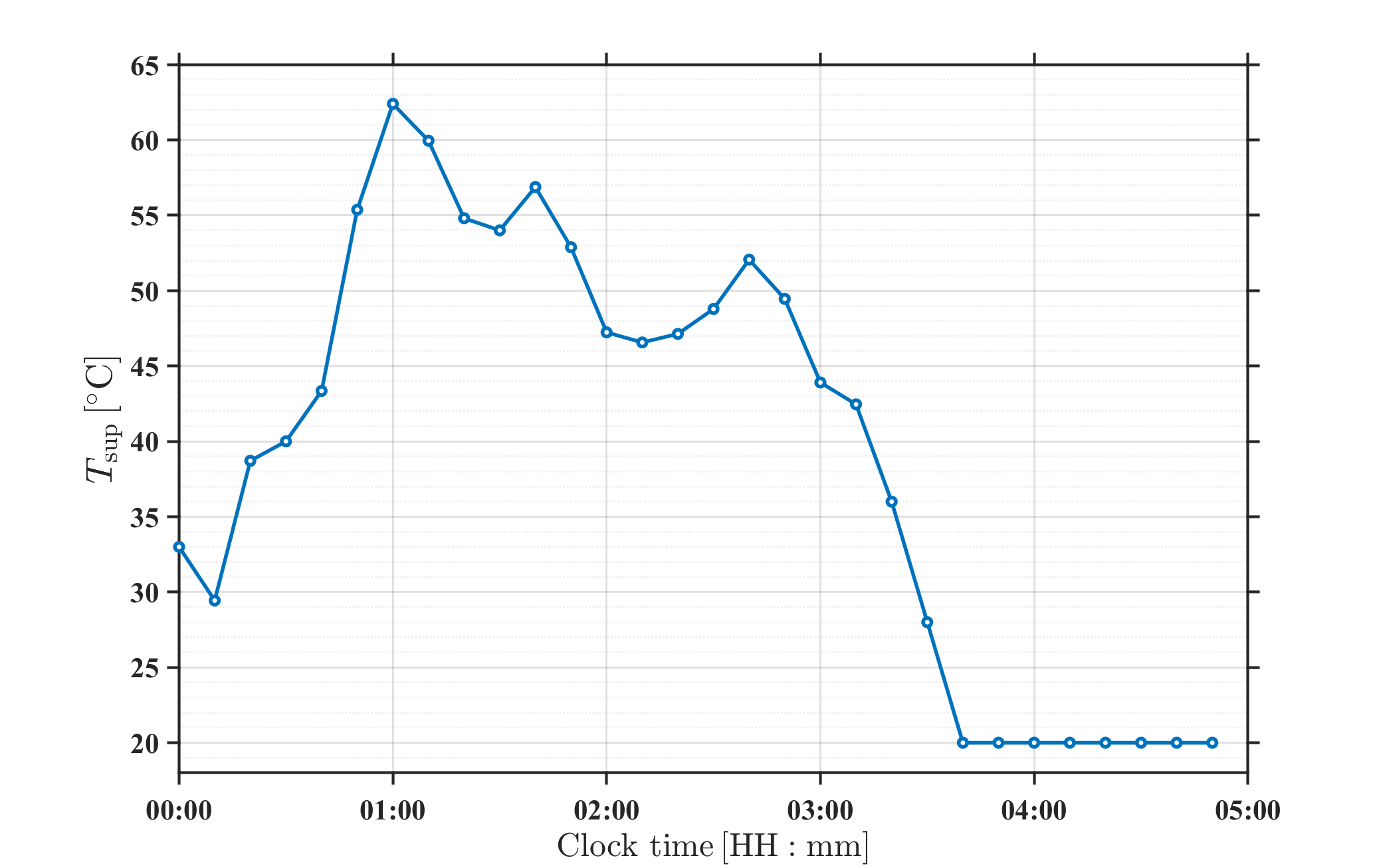}
        \caption{The boiler supply temperature.}
        \label{fig:MPC_case_study_boiler}
    \end{subfigure}

    \caption{Experimental results obtained by applying the proposed MPC \eqref{eq:proposed_MPC} to the case-study building.}
    \label{fig:MPC_case_study}
\end{figure}

\begin{figure}
    \centering
    \includegraphics[width=0.6\linewidth]{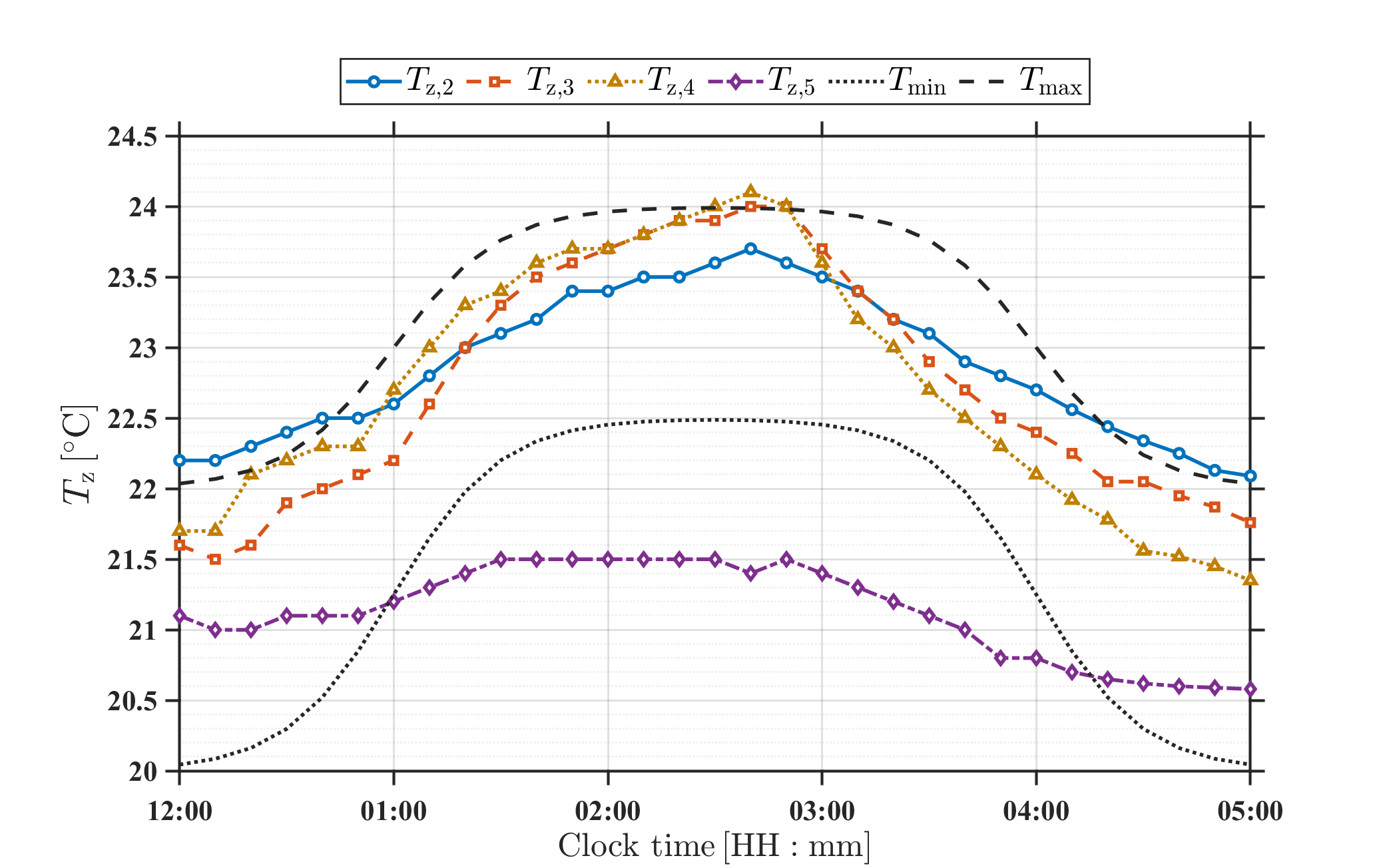}
    \caption{Experimental zone temperatures in the case-study building under the benchmark MPC, which does not adjust the radiator valves and can only control the boiler supply temperature. Shown for comparison with the proposed MPC in Figure~\ref{fig:MPC_case_study}.}
    \label{fig:MPC_no_valve_room}
\end{figure}

\begin{table}[t]
    \centering
    \caption{Experimental performance comparison of the proposed MPC and the benchmark MPC on the Rensen building.}
    \label{tab:exp_mpc_benchmark_comparison}
    \begin{tabular}{lcccc}
        \toprule
        MPC benchmark
        & Total comfort violation
        & Maximum comfort violation
        & $\sum T_{\mathrm{sup}}$ \\
        & [\si{\degreeCelsius}]
        & [\si{\degreeCelsius}]
        & [\si{\degreeCelsius}] \\
        \midrule
        Proposed MPC
        & 4.8
        & 0.38
        & 197 \\
        No-valve-control MPC
        & 18.2 
        & 1.13
        & 270 \\
        \bottomrule
    \end{tabular}
\end{table}

%% file: s5.tex
In this article, we propose a physics-based model that captures both the hydraulic and thermal behavior of hydronic radiators in buildings while remaining suitable for MPC frameworks. For a case-study building, namely the Rensen building in the Netherlands, we also use grey-box modeling to estimate the parameters of the introduced model that are difficult to obtain in practice. Specifically, two grey-box modeling approaches are discussed and compared using real measurements from the building: one offers theoretical guarantees, while the other is heuristic but computationally efficient, which makes it attractive for buildings with multiple zones. It is shown that both approaches have approximately the same results. We then show that accounting for hydraulic interactions among the radiators in the model reduces the root mean square error (RMSE) between the measured and modeled zone temperatures by 11\% compared with a model that neglects these interactions. Using this model, we propose a model predictive control (MPC) framework for dynamic hydraulic balancing. Through both numerical case studies and real experiments in the six-zone building, it is shown that, by jointly adjusting the boiler supply temperature and radiator-valve positions to achieve dynamic hydraulic balancing, the proposed MPC can effectively distribute water flow among the zones according to their heat demands. In particular, the proposed MPC reduces the total comfort violations by at least $27\%$ compared with existing MPC strategies that either neglect hydraulic interactions among the radiators or do not control the radiator valves, while requiring a lower or nearly identical cumulative boiler supply temperature, used here as a proxy for energy use.

For future work, we aim to account for pipe heat losses and model inaccuracies. We also aim to conduct experiments over longer periods to better assess the long-term effects of the proposed MPC on energy savings and thermal comfort.